\DeclareMathOperator{\vspan}{span}
\newcommand{\msA}{\mathscr{A}}
\newcommand{\msB}{\mathscr{B}}
\newcommand{\POVM}{\mathscr{A}_{\textit{POVM}}} 
\newcommand{\PVM}{\mathscr{A}_{\textit{PVM}}} 
\newcommand{\wtd}{\widetilde}
\newcommand{\tr}{\text{tr}}
\newcommand{\model}{S = \big(H_A, H_B, \{M^x_a : a \in A, x \in X\}, \{N^y_b:b \in B, y \in Y\}, \ket{\psi}\big)}
\newcommand{\wtdmodel}{\wtd{S}= \big(\wtd{H}_A, \wtd{H}_B, \{\wtd{M}^x_a : a \in A, x \in X\}, \{\wtd{N}^y_b:b \in B, y \in Y\}, \ket{\wtd{\psi}}\big)}
\newcommand{\qcmodel}{S = \big(H,\{M^x_a:a\in A,x\in X \},\{N^y_b:b\in B,y\in Y\},\ket{\psi} \big)}
\newcommand{\wtdqcmodel}{\wtd{S} = \big(\wtd{H}, \{\wtd{M}^x_a : a \in A, x \in X\}, \{\wtd{N}^y_b:b \in B, y \in Y\}, \ket{\wtd{\psi}}\big)}
\title{An operator-algebraic formulation of self-testing}
\author[C. Paddock]{Connor Paddock$^{1,2}$}
\author[W. Slofstra]{William Slofstra$^{1,3}$}
\author[Y. Zhao]{Yuming Zhao$^{1,3}$}
\author[Y. Zhou]{Yangchen Zhou$^{1}$}
\address[1]{Institute for Quantum Computing, University of Waterloo}
\address[2]{Department of Combinatorics \& Optimization, University of Waterloo}
\address[3]{Department of Pure Mathematics, University of Waterloo}
\email{cpaulpad@uwaterloo.ca}
\email{weslofst@uwaterloo.ca}
\email{y658zhao@uwaterloo.ca}
\email{yangchen.zhou@uwaterloo.ca}
\begin{document}
\maketitle
\begin{abstract}
We give a new definition of self-testing for correlations in terms of
states on $C^*$-algebras. We show that this definition is equivalent to the
standard definition for any class of finite-dimensional quantum models which is
closed, provided that the correlation is
extremal and has a full-rank model in the class. This last condition
automatically holds for the class of POVM quantum models, but does not
necessarily hold for the class of projective models by a result of
Baptista, Chen, Kaniewski, Lolck, Man{\v{c}}inska, Gabelgaard Nielsen, and Schmidt. For extremal binary correlations and for extremal
synchronous correlations, we show that any self-test for projective models is a
self-test for POVM models. The question of whether there is a self-test for
projective models which is not a self-test for POVM models remains open. 

An advantage of our new definition is that it extends naturally to commuting
operator models. We show that an extremal correlation is a self-test for
finite-dimensional quantum models if and only if it is a self-test for
finite-dimensional commuting operator models, and also observe that many known
finite-dimensional self-tests are in fact self-tests for infinite-dimensional
commuting operator models. 
\end{abstract}

\section{Introduction}

In a bipartite Bell scenario, two spatially-separated parties (Alice and Bob) are given measurement settings $x$ and $y$ drawn from some finite sets $X$ and $Y$ respectively, and return measurement outcomes $a$ and $b$ drawn from finite sets $A$ and $B$. In quantum mechanics, Alice and Bob's actions in such a scenario can be modelled by measurements $\{M^x_a : a \in A\}, x \in X$ and $\{N^y_b : b \in B\}, y \in Y$ on respective local Hilbert spaces $H_A$ and $H_B$. If the joint system is in state $\ket{\psi} \in H_A \otimes H_B$, then the probability that Alice and Bob measure outcomes $a,b$ on inputs $x,y$ is 
\begin{equation*}
    p(a,b | x, y) = \braket{\psi | M^x_a \otimes N^y_b|\psi}.
\end{equation*}
The collection of probabilities $p = \{p(a,b|x,y)\}$ is called a quantum correlation, and the collection $S = (H_A,H_B,\{M^x_a\},\{N^y_b\},\ket{\psi})$ is a model for $p$. While the correlation $p$ can be directly observed from Alice and Bob's actions, the model $S$ cannot, and in fact there are typically many different models for a given correlation $p$. It is a remarkable fact that some correlations have a unique quantum model, in the sense that there is an ideal model $\wtdmodel$ for $p$, such that for any other model $\model$, there are Hilbert spaces $H_A^{aux}$, $H_B^{aux}$, isometries $I_A : H_A \to \wtd{H}_A \otimes H_A^{aux}$ and $I_B : H_B \to \wtd{H}_B \otimes H_B^{aux}$, and a state $\ket{aux} \in H_A^{aux} \otimes H_B^{aux}$ such that
\begin{equation*}
    (I_A \otimes I_B) \cdot (M^x_a \otimes N^y_b) \ket{\psi} = \left(\wtd{M}^x_a \otimes \wtd{N}^y_b \ket{\wtd{\psi}}\right) \otimes \ket{aux} 
\end{equation*}
for all $(x,y,a,b) \in X \times Y \times A \times B$. A correlation satisfying this condition is called a \emph{self-test}. The isometries $I_A$ and $I_B$ are necessary, because tensoring a model with an auxiliary register and state does not change any observable consequences of the model, including the correlation $p$. 

The definition of a self-test given above is somewhat ad-hoc: it's clear that
some type of equivalence between models is required for the definition, but why
exactly this equivalence, over this set of models? Indeed, many definitions
of self-testing have appeared since the inception of self-testing in
\cite{MY03}, with a rough consensus seeming to form around the above definition
only recently. Despite this ad-hoc nature, the definition above and its
variants have been very successful. Among other achievements, self-tests have
been used in proofs of device-independent cryptography \cite{BSCA18a,BSCA18b};
it is possible to self-test any pure bipartite state \cite{CGS17}; and
self-testing is one of the keys to the recent proof of
$\text{MIP}^*=\text{RE}$, which has resolved the Connes' embedding problem, one
of the biggest problems in operator algebras \cite{JNVWY20a}. A survey of the
field of self-testing can be found in \cite{SB20}. 

Nonetheless, the ad-hoc nature of the above definition does ask for more
explanation. Christandl, Houghton-Larsen, and Man{\v{c}}inska \cite{MH-LC22} have made
substantial progress on this front by giving an operational
definition\footnote{In this case, an operational definition is one which can be
phrased entirely in terms of physically measurable properties.} of self-testing
which, while complicated, is equivalent to the above definition. In this paper,
we take a different approach: we propose a succinct, mathematically motivated
definition of self-testing. Specifically, we say that a correlation $p$ is an
\emph{abstract state self-test} if for every $k, \ell \geq 1$, $x_1,\ldots,x_k
\in X$, $a_1,\ldots,a_k \in A$, $y_1,\ldots,y_\ell \in Y$, $b_1,\ldots,b_{\ell}
\in B$, the value
\begin{equation*}
    \braket{\psi| M^{x_1}_{a_1} \cdots M^{x_k}_{a_k} \otimes 
        N^{y_1}_{b_1} \cdots N^{y_\ell}_{b_\ell} |\psi}
\end{equation*}
is the same across all models for $p$. The name comes from a convenient
way of rephrasing this definition using abstract states on the 
tensor product $\POVM^{X,A} \otimes_{min} \POVM^{Y,B}$,
where $\POVM^{X,A}$ is the universal $C^*$-algebra generated by positive
elements $e^x_a$, $a \in A$, $x \in X$, subject to the relations $\sum_{a \in
A} e^x_a = \Id$. Specifically, a correlation $p$ belongs to the set $C_q$ of finite-dimensional quantum correlations if and only if there is a finite-dimensional
abstract state $f$ on $\POVM^{X,A} \otimes_{min} \POVM^{Y,B}$, such that $p(a,b|x,y) =
f(e^x_a \otimes e^{y}_b)$ for all $x,y,a,b$. Here an \textit{abstract state} 
(as opposed to a vector state) refers to a linear functional $f$ from the
algebra to $\C$, such that $f(a) \geq 0$ for all positive operators $a$, and
$f(\Id)=1$. An abstract state is said to be \textit{finite-dimensional} if its GNS representation
(see the next section) is finite-dimensional. With this terminology, a
correlation is an abstract state self-test if and only if there is a unique
finite-dimensional state $f$ realizing $p$.

Our first main theorem (\Cref{thm:mainresult1}) is that, if $p$ is an extreme point
of the set $C_q$, then $p$ is a self-test for finite-dimensional POVM models if
and only if it is an abstract state self-test.  Although this theorem is stated
for POVM models, the definition of self-testing is often restricted to models
with projective measurements, or other subclasses of all quantum models. This
type of restriction on the class of models can also be made for abstract states. For instance,
projective models correspond to states on the tensor product $\PVM^{X,A}
\otimes_{min} \PVM^{Y,B}$, where $\PVM^{X,A}$ is the quotient of $\POVM^{X,A}$
by the relations $(e^x_a)^2 = e^x_a$ for $x \in X$, $a \in A$, and so we can
say that a correlation $p$ is an abstract state self-test for projective models
if there is a unique finite-dimensional state on $\PVM^{X,A} \otimes_{min}
\PVM^{Y,B}$ realizing $p$ (see \cref{cor:mainresult2} part (b)). In the full
version of \cref{thm:mainresult1}, we show that the abstract state definition
and the standard definition of self-testing are equivalent for any sufficiently
nice class $\mcC$ of models, provided that $p$ has a full-rank model in $\mcC$.
Interestingly, a result of Baptista, Chen, Kaniewski, Lolck, Man{\v{c}}inska, Gabelgaard Nielsen, and Schmidt \cite{lifiting23} states
that there are correlations which do not have a full-rank projective model
(the existence of such models had been assumed in some prior work on
self-testing). Formulating our results in this generality raises some
interesting open questions, such as whether there are self-tests for the class
of projective models which are not self-tests for the class of POVM models.
While this is an open question, our second main result (\cref{thm:mainresult3})
is that in the case of binary correlations (in the sense that $|A|=|B|=2$), or
synchronous correlations (in the sense of \cite{PSSTW16}), a self-test for
projective models is a self-test for POVM models, provided $p$ is an extreme
point of the set $C_q$. This shows that many common examples of self-tests,
such as the CHSH game and the Mermin-Peres magic square, are self-tests for
POVM models. 

One of the advantages of the abstract state definition of self-testing is that
it generalizes easily to other sets of correlations. For instance, the set of
finite-dimensional quantum correlations $C_q$ is not closed \cite{Slof19b}, and
there are now several elementary proofs of this fact using self-testing
techniques \cite{Col20,Bei21}. The closure of $C_q$ is denoted by $C_{qa}$, and
a correlation $p$ belongs to $C_{qa}$ if and only if there is a (not
necessarily finite-dimensional) state $f$ on $\POVM^{X,A} \otimes_{min}
\POVM^{Y,B}$ such that $f(e^x_a \otimes e^y_b) = p(a,b|x,y)$ for all
$a,b,x,y$. Correlations $p \in C_{qa}$ do not necessarily have tensor product
models, and thus it is not clear how to generalize the standard notion of
self-testing with local isometries to this setting. With the abstract state
definition, we can say that a correlation $p \in C_{qa}$ is a self-test if
there is a unique abstract state on $\POVM^{X,A} \otimes_{min} \POVM^{Y,B}$
realizing $p$.

While not all correlations in $C_{qa}$ have tensor product models, correlations
in $C_{qa}$ can be modelled using another framework for bipartite systems,
commuting operator models. In a commuting operator model, Alice and
Bob make measurements on a joint Hilbert space (with no tensor product
structure), and their measurement operators are required to commute (in place
of factoring as tensor products). The set of correlations with a commuting
operator model is denoted by $C_{qc}$. If $p$ has a finite-dimensional
commuting operator model, then $p$ also has a finite-dimensional tensor product
model. However, the $\text{MIP}^*=\text{RE}$ result of Ji, Natarajan, Vidick,
Wright, and Yuen shows that there are correlations $p \in C_{qc}$ which are not
contained in $C_{qa}$ \cite{JNVWY20a}. There is another tensor product for
$C^*$-algebras which is associated with the commuting operator framework, the
max-tensor product, and a correlation $p$ belongs to $C_{qc}$ if and only if
there is a state $f$  on $\POVM^{X,A} \otimes_{max} \POVM^{X,A}$ with $f(e^x_a
\otimes e^y_b) = p(a,b|x,y)$ for all $a,b,x,y$. The definition of an abstract
state self-test immediately generalizes to commuting operator models.  Having a
definition of a self-test for commuting operator models raises the prospect of
being able to construct self-tests in $C_{qc}$ which have no finite-dimensional
models, perhaps from group algebras which are known to have a unique tracial
state (see, e.g., \cite{DM12}). In this work, we show that there are already
several examples of commuting operator self-tests with finite-dimensional
models. In particular, using results from \cite{Tsi87} we show that there are
many XOR correlations which are commuting operator self-tests. Notably,
this implies that the CHSH game is a commuting operator self-test. This has
independently been observed by Frei \cite{Frei22} as well.

Although one of the goals of this paper is to address the many variations of
self-testing definitions, it is still helpful to make some simplifications, and
for this reason we look only at self-testing with pure states. A recent result
of Baptista, Chen, Kaniewski, Lolck, Man{\v{c}}inska, Gabelgaard Nielsen, and Schmidt \cite{lifiting23} shows that for
extremal correlations, self-testing with pure states is equivalent to
self-testing with mixed states (using the definition from \cite{SB20}). We also
do not address robust self-testing --- what happens for models that achieve a
correlation close to $p$. While robustness is crucial in many applications of
self-testing, ignoring it at this stage allows us to concentrate on the key
points of the theory. Thus we leave robustness for later work. 

The rest of this paper is organized as follows. In \cref{sec:prelims}, we
review some background concepts and terminology. In \cref{sec:correlations}, we
state our main results for finite-dimensional models. These results are proven
in \cref{section:mainresult1} and \cref{sec:syn}. In
\cref{sec:tensor_product_models}, we show that the definition of abstract state
self-testing extends to the class of tensor product models for the set of
(quantum spatial) correlations $C_{qs}$.
In \cref{sec:commuting_operator_models} we discuss a definition of abstract
state self-testing for commuting operator models, and provide several examples
of commuting operator self-tests.

\subsection{Acknowledgements}
The authors thank Adam Bene Watts, Ranyiliu Chen, Alexander Frei, Matt Kennedy,
David Lolck, Ben Lovitz, Laura Man\v{c}inska, Vern Paulsen, and Simon Schmidt
for helpful conversations. This work was supported by NSERC DG 2018-03968 and
an Alfred P. Sloan Research Fellowship.

\section{Preliminaries}\label{sec:prelims}

We use bra-ket notation for vectors in Hilbert spaces. A \textbf{vector state}
is a unit vector in some Hilbert space, and a Hilbert space is
\textbf{bipartite} if it is a tensor product $H_A \otimes H_B$ for some Hilbert
spaces $H_A, H_B$. A Hilbert space is \textbf{separable} if it is
finite-dimensional or isomorphic to $\ell^2 (\N)$, the Hilbert space of complex
square-summable sequences.  A bipartite vector state $\ket{\psi} \in H_A
\otimes H_B$ has a \textbf{Schmidt decomposition} $\ket{\psi}=\sum_{i \in \mcI}
\lambda_i\ket{\alpha_i}\otimes \ket{\beta_i}$, where the \textbf{Schmidt
coefficients} $\lambda_i$ are (strictly) positive and have the property that
$\sum_{i \in \mcI} |\lambda_i|^2=1$, and $\{\ket{\alpha_i}\}_{i \in \mcI}$ and
$\{\ket{\beta_i}\}_{i \in \mcI}$ are orthonormal subsets of $H_A$ and $H_B$
respectively with a common index set $\mcI$. The index set $\mcI$ is either
finite or countable. The cardinality of $\mcI$ is called the \textbf{Schmidt
rank} of $\ket{\psi}$. For a derivation of the Schmidt decomposition in
infinite-dimensional spaces, see e.g. \cite[Theorem A.5]{CLP17}. 

A \textbf{positive operator-valued measure} (or \textbf{POVM}) on a Hilbert space $H$ 
with finite index set $I$ is a collection of positive operators $\{M_i:i\in I\}$ on $H$
such that $\sum_{i\in I}M_i=\Id$. We say that a POVM $\{M_i\}_{i \in I}$ is a
\textbf{projection-valued measure} (or \textbf{PVM}) if in addition the operators $M_i$
are self-adjoint projections for all $i \in I$. Naimark's dilation theorem
states that for any POVM $\{M_i\}_{i \in I}$ on a Hilbert space $H$, there is a
Hilbert space $K$, an isometry $V : H \to K$, and a PVM $\{P_i\}_{i \in I}$
such that $M_i=V^*P_i V$. One common proof of the theorem is to take
$K = H \otimes \C^{I}$, $P_i = \Id_H \otimes \ket{i}\bra{i}$, and 
$V : H \to K : \ket{h} \mapsto \sum_{i \in I} M_i^{1/2} \ket{h} \otimes
\ket{i}$, where $\ket{i}, i \in I$ is the standard orthonormal basis for
$\C^{I}$. 

For background on $C^*$-algebras, which are used throughout the paper,
see for instance \cite{B17}. We recall some of the key concepts for readers
who are less familiar with this theory. A complex $*$-algebra $\mcA$
is a unital algebra over $\C$ with an antilinear involution $a \mapsto a^*$, such
that $(a^*)^* = a$ and $(ab)^* = b^* a^*$. A \textbf{$C^*$-algebra} $\mcA$ is a
complex $*$-algebra with a submultiplicative Banach norm, such that the
$C^*$-identity $\|a^*a\|=\|a\|^2$ holds for all $a\in \mcA$. A
\textbf{representation} of a $C^*$-algebra $\mcA$ on a Hilbert space $H$ is a $*$-homomorphism
$\pi:\mcA\to \msB(H)$, where $\msB(H)$ is the $C^*$-algebra of \textbf{bounded
linear operators} acting on a Hilbert space $H$, with the operator norm
$\|\cdot\|_{op}$. A representation $\pi$ of $\mcA$ on $H$ is \textbf{faithful} if $\pi$
is injective. If $X \subseteq H$, we let 
$\pi(\mcA)X := \vspan \{\pi(a) \ket{h} : \ket{h} \in X, a \in \mcA\}$. A
\textbf{subrepresentation} of $\pi$ is a closed subspace $K \subseteq H$ such
that $\pi(\mcA) K = K$. A representation is \textbf{irreducible} if it contains no
proper non-zero subrepresentations.  The \textbf{commutant} of a subset $Z
\subseteq \msB(H)$ is $Z' := \{ T \in \msB(H) : T S = S T \text{ for all } S \in Z\}$. 
Schur's lemma states that $\pi$ is irreducible if and only if $\pi(\mcA)' =
\C \Id$, where $\pi(\mcA)'$ is the commutant of $\pi(\mcA)$. 
A vector $\ket{v} \in H$ is \textbf{cyclic} for a representation $\pi$ if
$(\pi(\mcA)\ket{v})^-=H$, where $(\cdot)^-$ denotes the closure with respect to
the Hilbert space norm. A \textbf{cyclic representation} of $\mcA$ is a tuple
$(\pi, H, \ket{v})$, where $\pi$ is a representation of $\mcA$ on $H$, and
$\ket{v} \in H$ is a cyclic vector for $\pi$. Two representations $\pi$ and
$\varphi$ of $\mcA$ on Hilbert spaces $H$ and $K$ respectively are
\textbf{(unitarily) equivalent}, denoted $\pi\cong\varphi$, if there is a unitary $U:H\to K$ such that
$U\pi(a)U^*=\varphi(a)$ for all $a\in \mcA$. Two cyclic representations
$(\pi,H,\ket{v})$ and $(\varphi, K, \ket{w})$ are unitarily equivalent if
there is such a unitary with $U \ket{v} = \ket{w}$. 

An element $a\in \mcA$ of a $C^*$-algebra $\mcA$ is \textbf{positive} if $a=b^*b$ for
some $b\in \mcA$, in which case we write $a\geq 0$. The set of positive
elements forms a closed cone in $\mcA$. 
The relation $b\leq a$ if $a-b\geq 0$ defines a partial order on the self-adjoint
elements of $\mcA$.
A linear functional $f : \mcA \to \C$ is \textbf{positive} if $f(a) \geq 0$ for
all positive elements $a \in \mcA$.  A \textbf{state} (or
\textbf{abstract state}) on a $C^*$-algebra is a positive linear functional $f$
such that $f(\Id) = 1$. The set of all states on a $C^*$-algebra $\mcA$ is
called the \textbf{state space} of $\mcA$. If
$\phi : \mcA \to \C$ is a positive linear functional on $\mcA$, then there is 
a cyclic representation $(\pi_{\phi}, H_{\phi}, \ket{\xi_\phi})$ of $\mcA$, 
called the \textbf{GNS representation} of $\phi$, such that $\phi(a) = 
\braket{\xi_{\phi}|\pi_{\phi}(a)|\xi_{\phi}}$ for all $a \in \mcA$. Conversely,
if $\phi$ is the state defined by $\phi(a) = \braket{\xi|\pi(a)|\xi}$ for
some cyclic representation $(\pi, H, \ket{\xi})$, then $(\pi, H, \ket{\xi})$
is unitarily equivalent to the GNS representation of $\phi$. Thus there is a
one-to-one correspondence between positive linear functionals and unitary
equivalence classes of cyclic representations, with states corresponding to
cyclic representations where the cyclic vector is a vector state.  Every
representation of $\mcA$ is a direct sum of cyclic representations.

In this work, a state $\phi$ is \textbf{finite-dimensional} if the Hilbert
space $H_\phi$ in the GNS representation $(\pi_\phi,H_\phi,\ket{\xi_\phi})$ is
finite-dimensional. If $(\pi,H)$ is a representation of $\mcA$ with $H$
finite-dimensional, then the double-commutant theorem states that there is an
isomorphism $H \iso \bigoplus_{i=1}^\ell \C^{n_i} \otimes \C^{m_i}$ for some
integers $n_i$, $m_i$, $i=1,\ldots,\ell$, such that 
\begin{equation}
    \pi(A)\cong \bigoplus_{i=1}^\ell M_{n_i}(\C)\otimes \Id_{m_i}\,,\text{ and}\quad \pi(A)'\cong \bigoplus_{i=1}^\ell \Id_{n_i}\otimes M_{m_i}(\C).
\end{equation}
In particular, $\pi(\mcA)$ and $\pi(\mcA)'$ are direct sums of matrix algebras. 

Given two $C^*$-algebras $\mcA$ and $\mcB$, their algebraic tensor product
$\mcA\otimes_{alg} \mcB$ is a $*$-algebra. However, there is more than one way
to make $\mcA\otimes_{alg}\mcB$ into a $C^*$-algebra. The first is to define a
norm $\|\cdot\|_{max}$ on $\mcA \otimes_{alg} \mcB$ by
\begin{equation*}
    \|a\|_{max} := \sup\{\|\pi(a)\| : \pi \text{ is a representation of } \mcA \otimes_{alg} \mcB\}.
\end{equation*}
Representations of $\mcA \otimes_{alg} \mcB$ on a Hilbert space $H$ correspond
to pairs of representations $\pi_A : \mcA \to \msB(H)$ and $\pi_B : \mcB \to
\msB(H)$, such that $\pi_A(a) \pi_B(b) = \pi_B(b) \pi_A(a)$ for all $a \in
\mcA$, $b \in \mcB$, so this gives us another way to understand $\|\cdot\|_{max}$.
After completion, this gives a $C^*$-algebra $\mcA \otimes_{max} \mcB$, called
the \textbf{max tensor product}. If instead of looking at all representations
of $\mcA \otimes_{alg} \mcB$, we look at representations of the form $\pi_A
\otimes \pi_B$ where $\pi_A$ and $\pi_B$ are representations of $\mcA$ and
$\mcB$ respectively, then we get a $C^*$-algebra $\mcA \otimes_{min} \mcB$
called the \textbf{min tensor product}. 

\section{Correlations and self-testing}\label{sec:correlations}

Let $X$, $Y$, $A$, and $B$ be finite sets. A \textbf{bipartite correlation} (or
\textbf{behaviour}) is a function $p \in \R_{\geq 0}^{A \times B \times X \times Y}$ such that 
\begin{equation*}
    \sum_{(a,b) \in A \times B} p(a,b|x,y) = 1
\end{equation*}
for all $x \in X$, $y \in Y$. A correlation can be thought of as specifying the probability of getting outputs $(a,b) \in A \times B$ from inputs $(x,y) \in X \times Y$. A \textbf{(POVM) quantum model} for a correlation $p$ is a tuple 
\begin{equation*}
    S=(H_A, H_B, \{M^x_a : a \in A, x \in X\}, \{N^y_b:b \in B, y \in Y\}, \ket{\psi}),
\end{equation*}
where
\begin{enumerate}[(i)]
    \item $H_A$ and $H_B$ are finite-dimensional Hilbert spaces, 
    \item $\{M^x_a : a \in A\}$ is a POVM on $H_A$ (meaning a collection of positive operators such that $\sum_{a \in A} M^x_a = \Id$) for all $x \in X$,
    \item $\{N^y_b : b \in B \}$ is a POVM on $H_B$ for all $y \in Y$, and
    \item $\ket{\psi} \in H_A \otimes H_B$ is a vector state (i.e. a unit vector), 
\end{enumerate}
such that
\begin{equation*}
    p(a,b|x,y) = \braket{\psi| M^x_a \otimes N^y_b|\psi}
\end{equation*}
for all $(a,b,x,y) \in A \times B \times X \times Y$.

Any tuple satisfying (i)-(iv) determines a correlation $p$ via this last
equation, and we refer to this as the correlation of the quantum model.  A
quantum model $S$ for a correlation $p$ is \textbf{projective} (or a
\textbf{PVM quantum model}) if the operators $M^x_a$ and $N^y_b$ are
self-adjoint projections for all $x,y,a,b$, or in other words if the
measurements $\{M^x_a :  a \in A\}$ and $\{N^y_b : b \in B\}$ are projective. A
quantum model is \textbf{full-rank} if $\dim H_A = \dim H_B$, and the Schmidt
rank of $\ket{\psi}$ is $\dim H_A$. The set of correlations in $\R_{\geq 0}^{A
\times B \times X \times Y}$ which have a quantum model is convex, and is usually
denoted by $C_{q}(X,Y,A,B)$, or just $C_q$ when the sets $A,B,X,Y$ are clear.
By Naimark dilation, every element of $C_q$ has a projective finite-dimensional
model, and by restricting to the support projection, every element of $C_q$
also has a full-rank (but not necessarily projective) quantum model (see \Cref{SS:selftestabs} for the definition of support projection). The sets
$C_q$ are not closed in general \cite{Slof19b}, and the closure of $C_q$ in
$\R_{\geq 0}^{A \times B \times X \times Y}$ is denoted by $C_{qa} =
C_{qa}(X,Y,A,B)$. It also makes sense to talk about models in which the Hilbert
spaces $H_A$ and $H_B$ are infinite-dimensional, and this leads to another set
of correlations $C_{qs} = C_{qs}(X,Y,A,B)$ in between $C_q$ and $C_{qa}$. We
discuss this more general class of model (which we call \textbf{tensor product
models}) in Section \ref{sec:tensor_product_models}.

The standard definition of self-testing centres around the following notion of a local dilation of a model:
\begin{definition}\label{def:localdilation}
A quantum model
\begin{align*}
        \wtd{S} = \left(\wtd{H}_A, \wtd{H}_B, \{\wtd{M}^x_a : a \in A, x \in X\}, \{\wtd{N}^y_b:b \in B, y \in Y\},\ket{\wtd{\psi}}\right)
\end{align*}
is a \textbf{local dilation} of another model
    \begin{align*}
        S = \left(H_A, H_B, \{M^x_a : a \in A, x \in X\}, \{N^y_b:b \in B, y \in Y\}, \ket{\psi}\right)
    \end{align*}
if there are finite-dimensional Hilbert spaces $H_A^{aux}$ and $H_B^{aux}$, a vector state $\ket{aux} \in H_A^{aux} \otimes H_B^{aux}$, and isometries $I_A : H_A \to \wtd{H}_A \otimes H_A^{aux}$ and $I_B : H_B \to \wtd{H}_B \otimes H_B^{aux}$ such that
    \begin{equation*}
        (I_A \otimes I_B) \cdot (M^x_a \otimes N^y_b) \ket{\psi} = \left(\wtd{M}^x_a \otimes \wtd{N}^y_b \ket{\wtd{\psi}}\right)\otimes \ket{aux}
    \end{equation*}
    for all $(a,b,x,y) \in A \times B \times X \times Y$.
    We write $S\succeq \wtd{S}$ to mean that $\wtd{S}$ is a local dilation of $S$.
\end{definition}
Note that if $S\succeq \wtd{S}$, then $S$ and $\wtd{S}$ give the same correlation. The relation $\succeq$ is a preorder, in the sense that it is transitive and reflexive, but is not anti-symmetric. We now recall the standard definition of self-testing for quantum models.
\begin{definition}\label{def:tensorproductselftest}
    Let $\mcC$ be a class of quantum models. 
    A correlation $p$ is a \textbf{self-test for the class $\mcC$} if there is a model $\wtd{S}$ for $p$ in $\mcC$, such that $S \succeq \wtd{S}$ for all other models $S$ for $p$ in $\mcC$.
\end{definition}
In \cref{def:tensorproductselftest}, we are primarily interested in the classes of projective quantum models and all quantum models, although we discuss several other classes in \cref{section:mainresult1}. Stating \cref{def:tensorproductselftest} for an arbitrary class $\mcC$ of quantum models lets us handle all these cases with one definition.
For convenience, we refer to the models $\wtd{S}$ and $S$ appearing in Definition \ref{def:tensorproductselftest} as the \textbf{ideal model} and \textbf{employed model} respectively.
In this language, a correlation is a self-test if any employed model for the correlation is equivalent to the ideal model up to local dilation.

Quantum models for correlations can also be expressed as states on
$C^*$-algebras. Given finite sets $X$ and $A$, define the \textbf{POVM algebra}
$\POVM^{X,A}$ to be the universal $C^*$-algebra generated by positive
contractions $e^x_a$, $x \in X$, $a \in A$, subject to the relations $\sum_{a
\in A} e^x_a = 1$ for all $x \in X$. If $\phi:\POVM^{X,A} \to \msB(H)$ is a
representation of $\POVM^{X,A}$ on a Hilbert space $H$, then $\{\phi(e^x_a) : a
\in A\}$ is a POVM on $H$. Conversely, given any collection $\{M^x_a : a \in
A\}$, $x \in X$ of POVMs on a Hilbert space $H$, there is a unique
representation $\phi : \POVM^{X,A} \to \msB(H)$ with $\phi(e^x_a) = M^x_a$.
To distinguish the generators of $\POVM^{X,A}$ from the generators of $\POVM^{Y,B}$
for a given tuple $(X,Y,A,B)$, we let $m^{x}_a := e^x_a \in \POVM^{X,A}$ and
$n^{y}_b := e^{y}_b \in \POVM^{Y,B}$. 
If
\begin{align*}
    S= (H_A, H_B, \{M^x_a : a \in A, x \in X\}, \{N^y_b:b \in B, y \in Y\}, \ket{\psi})
\end{align*}
is a quantum model for $p \in C_{q}$, then there is a (unique) representation $\phi_A\otimes \phi_B$ of the $C^*$-algebra $\POVM^{X,A} \otimes_{min} \POVM^{Y,B}$ with $\phi_A(m^x_a) = M^x_a$ and $\phi_B(n^y_b) = N^y_b$ for all $(a,b,x,y) \in A \times B \times X \times Y$. We call $\phi_A\otimes\phi_B$ the \textbf{associated representation of $S$}.
The abstract state $f_S$ on $\POVM^{X,A} \otimes_{min} \POVM^{Y,B}$ defined by $f_S(x): = \braket{\psi|(\phi_A \otimes \phi_B)(x)|\psi}$ is finite-dimensional, and satisfies 
\begin{equation}\label{eqn:state_to_correlation}
     f_S(m^x_a \otimes n^y_b)=\braket{\psi|\phi_A(m_a^x)\otimes \phi_B(n_b^y)|\psi}=p(a,b|x,y).
\end{equation}
We refer to $f_S$ as the \textbf{abstract state defined by $S$}. Conversely, if $f$ is a finite-dimensional state on $\POVM^{X,A} \otimes_{min} \POVM^{Y,B}$, then applying the double commutant theorem to a GNS representation of $f$ yields a quantum model $S$ such that $f = f_S$ \cite{SW08}. In particular, a correlation $p\in \R_{\geq 0}^{A \times B \times X \times Y}$ belongs to $C_{q}$ if and only if there is a finite-dimensional state $f$ on $\POVM^{X,A} \otimes_{min} \POVM^{Y,B}$ with $f(m_a^x \otimes n^y_b) = p(a,b|x,y)$ for all $(a,b,x,y) \in A \times B \times X \times Y$. Consequently, a correlation $p$ belongs to $C_{qa}$ if and only if there is a (not necessarily finite-dimensional) state $f$ on $\POVM^{X,A} \otimes_{min} \POVM^{Y,B}$ with $f(m_a^x \otimes n^y_b) = p(a,b|x,y)$ for all $(a,b,x,y) \in A \times B \times X \times Y$ \cite{Fri12,JNPPSW11}.

Other classes of models can be identified with subsets of states on $\POVM^{X,A} \otimes_{min} \POVM^{Y,B}$. For instance, let $\PVM^{X,A}$ be the quotient of $\POVM^{X,A}$ by the relations $(e^x_a)^2 = e^x_a$ for all $x \in X$, $a \in A$ (the resulting algebra $\PVM^{X,A}$ is isomorphic to the group $C^*$-algebra $C^* (\Z_m^{*n})$, where $m = |A|$ and $n = |X|$). If
\begin{equation*}
    q : \POVM^{X,A} \otimes_{min} \POVM^{Y,B} \to \PVM^{X,A} \otimes_{min} \PVM^{Y,B}
\end{equation*}
is the quotient homomorphism, and $f$ is a state on $\PVM^{X,A} \otimes_{min} \PVM^{Y,B}$, then $f \circ q$ is a state on $\POVM^{X,A} \otimes_{min} \POVM^{Y,B}$. The pullback map $q^* : f \mapsto f \circ q$ is an injection, and hence identifies states on $\PVM^{X,A} \otimes_{min} \PVM^{Y,B}$ with a subset of states on $\POVM^{X,A} \otimes_{min} \POVM^{Y,B}$. We say that a state on $\POVM^{X,A} \otimes_{min} \POVM^{Y,B}$ is \textbf{projective} if it belongs to the image of $q^*$. A finite-dimensional state $f$ on $\POVM^{X,A} \otimes_{min} \POVM^{Y,B}$ is projective if and only if $f = f_S$ for some projective quantum model $S$.

The first main contribution of this paper is a definition of self-testing in terms of states on $\POVM^{X,A} \otimes_{min} \POVM^{Y,B}$. 
\begin{definition}\label{def:abstract_self-test}
    Let $\mcS$ be a subset of states on $\POVM^{X,A} \otimes_{min} \POVM^{Y,B}$. A correlation $p$ is an \textbf{abstract state self-test for $\mcS$} if there exists a unique abstract state $f \in \mcS$ with correlation $p$.
\end{definition}

Our first main theorem is that, in many cases, \cref{def:abstract_self-test} is equivalent to the standard definition of self-testing in \cref{def:tensorproductselftest}. To state this theorem in as much generality as possible, we make the following definitions:
\begin{definition}\label{def:submodel1}
Let 
\begin{align*}
    &\model \text{ and}\\
    &\wtdmodel
\end{align*}
be two quantum models.
\begin{enumerate}
    \item We say that $S$ and $\wtd{S}$ are \textbf{locally equivalent}, and write $S\cong \wtd{S}$, if there are local unitaries $U_A:H_A\arr\widetilde{H}_A$ and $U_B:H_B\arr\widetilde{H}_B$ such that
\begin{enumerate}[(i)]
    \item $U_A\otimes U_B\ket{\psi}=\ket{\widetilde{\psi}}$, and  

    \item $U_AM^x_aU_A^*=\wtd{M}^x_a$ and $  U_BN^y_bU_B^*= \wtd{N}^y_b$ for
        all $(a,b,x,y)\in A\times B\times X\times Y$.  
\end{enumerate}
\item We say $\wtd{S}$ is a \textbf{submodel} of $S$ if there are projections
    $\Pi_A\in\mathscr{B}(H_A)$ and $\Pi_B\in\mathscr{B}(H_B)$ such that
    $\Pi_AH_A=\wtd{H}_A,\Pi_BH_B=\wtd{H}_B$, $\Pi_A\otimes \Pi_B\ket{\psi}=\lambda
    \ket{\wtd{\psi}}$ for some $\lambda\neq 0$, $\Pi_AM^x_a\Pi_A=\wtd{M}^x_a$,
    $\Pi_B N^y_b \Pi_B=\wtd{N}^y_b$, and $[\Pi_A,M^x_a]=[\Pi_B,N^y_b]=0$ for all
    $(a,b,x,y)\in A\times B \times X \times Y$.

\end{enumerate}
A class of quantum models $\mcC$ is \textbf{closed} if for any $S\in \mcC$, every $\wtd{S}$ that is locally equivalent to a submodel of $S$ is in $\mcC$. 

\end{definition}

We can now state our first main theorem:
\begin{theorem}\label{thm:mainresult1}
Let $p$ be an extreme point in $C_q$, let $\mcC$ be a closed class of quantum models that contains a full-rank model for $p$, and let $\mcS:=\{f_S:S\in\mcC\}$ be the set of finite-dimensional states induced by $\mcC$. Then $p$ is a self-test for $\mcC$ if and only if $p$ is an abstract state self-test for $\mcS$.
\end{theorem}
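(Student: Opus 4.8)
The plan is to prove the two directions separately, with the ``abstract state self-test $\Rightarrow$ self-test for $\mcC$'' direction being the substantive one. For the easy direction, suppose $p$ is a self-test for $\mcC$ with ideal model $\wtd S$. Given any two models $S_1, S_2 \in \mcC$ for $p$, local dilation gives $S_1 \succeq \wtd S$ and $S_2 \succeq \wtd S$, and a local dilation clearly preserves all the moments $\braket{\psi|M^{x_1}_{a_1}\cdots M^{x_k}_{a_k} \otimes N^{y_1}_{b_1}\cdots N^{y_\ell}_{b_\ell}|\psi}$ — because $(I_A \otimes I_B)$ intertwines the products of measurement operators applied to $\ket{\psi}$ with the ideal operators applied to $\ket{\wtd\psi}$ (tensored with $\ket{aux}$), and the $\ket{aux}$ factor contributes a norm-$1$ scalar. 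Hence $f_{S_1} = f_{\wtd S} = f_{S_2}$, so $\mcS$ contains a unique state with correlation $p$. (This direction does not even need extremality or the full-rank hypothesis.)

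For the hard direction, assume $p$ is an abstract state self-test for $\mcS$, i.e.\ there is a unique finite-dimensional state $f$ on $\POVM^{X,A}\otimes_{min}\POVM^{Y,B}$ induced by a model in $\mcC$. I would take the ideal model $\wtd S$ to be a full-rank model for $p$ in $\mcC$ (which exists by hypothesis), and show that every $S \in \mcC$ with correlation $p$ satisfies $S \succeq \wtd S$. The first step is to observe that, since $f_S = f_{\wtd S} = f$ for every such $S$ (uniqueness), the associated representations $\phi_A \otimes \phi_B$ and $\wtd\phi_A \otimes \wtd\phi_B$ of $\POVM^{X,A}\otimes_{min}\POVM^{Y,B}$ have cyclic vectors $\ket{\psi}$ and $\ket{\wtd\psi}$ inducing the same state, hence the GNS-type data agree: the cyclic subrepresentation of $\phi_A\otimes\phi_B$ generated by $\ket\psi$ is unitarily equivalent, via a unitary sending $\ket\psi \mapsto \ket{\wtd\psi}$, to the cyclic subrepresentation of $\wtd\phi_A\otimes\wtd\phi_B$ generated by $\ket{\wtd\psi}$ (which, by full-rankness of $\wtd S$, should be all of $\wtd H_A \otimes \wtd H_B$). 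The key issue is that this unitary need not respect the tensor (bipartite) structure — it is a global unitary, not a local one — so it does not directly give a local dilation.

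The main obstacle, and the heart of the argument, is upgrading this global unitary equivalence of cyclic representations into a \emph{local} dilation $I_A \otimes I_B$. This is where extremality of $p$ in $C_q$ must enter. The idea: decompose the finite-dimensional representations using the double-commutant theorem, $\phi_A(\POVM^{X,A}) \cong \bigoplus_i M_{n_i}(\C) \otimes \Id$, etc., and analyze how $\ket\psi$ sits relative to these blocks on Alice's and Bob's sides. Extremality should force the bipartite decomposition of $\ket\psi$ to be ``aligned'' — roughly, that the state $f$ restricted to each side, together with the correlation constraints, pins down a unique block structure up to the multiplicity spaces, and the multiplicity spaces are exactly what become the auxiliary Hilbert spaces $H_A^{aux}, H_B^{aux}$ carrying $\ket{aux}$. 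Concretely I would: (1) pass to the support projection of $\ket\psi$ on each side to reduce to a full-rank situation for $S$ as well; (2) use that $\phi_A(\POVM^{X,A})''$ and its commutant act on $H_A$, and that $\ket\psi$ being cyclic plus the correlation being extremal forces $\phi_A(\POVM^{X,A})''$ to be a full matrix algebra on its support (otherwise one could write $p$ as a nontrivial convex combination by perturbing within a direct-sum block), so that the only ambiguity is multiplicity; (3) build $I_A$ from the resulting identification $H_A \hookrightarrow \wtd H_A \otimes (\text{multiplicity space})$, similarly $I_B$, and check that $\ket{aux}$ can be chosen consistently because the global unitary from the previous paragraph already matches $\ket\psi$ with $\ket{\wtd\psi}$. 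The verification that $(I_A\otimes I_B)(M^x_a \otimes N^y_b)\ket\psi = (\wtd M^x_a \otimes \wtd N^y_b \ket{\wtd\psi})\otimes\ket{aux}$ then follows because $I_A, I_B$ intertwine the measurement operators with the ideal ones by construction. I expect the extremality argument in step (2) — ruling out non-simple summands in the von Neumann algebras generated by the measurements — to be the technically delicate point, and it will likely use that an extreme point of $C_q$ cannot be a proper convex combination of correlations coming from the summands of a direct-sum model, together with a dimension/full-rank bookkeeping argument to ensure $\dim H_A^{aux} = \dim H_B^{aux}$ so that $\ket{aux}$ is a genuine bipartite vector state.
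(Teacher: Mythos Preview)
Your ``easy'' direction contains a genuine gap. The local-dilation condition only asserts
\begin{equation*}
(I_A \otimes I_B)(M^x_a \otimes N^y_b)\ket{\psi} = (\wtd{M}^x_a \otimes \wtd{N}^y_b \ket{\wtd{\psi}}) \otimes \ket{aux}
\end{equation*}
for \emph{single} measurement operators; it says nothing directly about products such as $M^{x_1}_{a_1} M^{x_2}_{a_2} \otimes N^{y_1}_{b_1}$. You cannot bootstrap from degree one to higher degree without further input: knowing $(I_A \otimes I_B)(M^{x_2}_{a_2} \otimes N^y_b)\ket{\psi} = (\wtd{M}^{x_2}_{a_2} \otimes \wtd{N}^y_b \ket{\wtd{\psi}}) \otimes \ket{aux}$, there is no way to apply $M^{x_1}_{a_1}$ on the left and produce $\wtd{M}^{x_1}_{a_1}$ on the right, since $I_A M^{x_1}_{a_1} I_A^*$ need not equal $\wtd{M}^{x_1}_{a_1} \otimes \Id_{H_A^{aux}}$. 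The paper proves the intertwining for all monomials by induction, and the induction step requires that the ideal model $\wtd{S}$ be \emph{centrally supported}: for each $\wtd{M}^x_a$ there is an operator $\wtd{F}$ on Bob's side with $\wtd{M}^x_a \otimes \Id \ket{\wtd{\psi}} = \Id \otimes \wtd{F} \ket{\wtd{\psi}}$, which lets one slide the extra factor across the tensor and apply the inductive hypothesis. Establishing that $\wtd{S}$ is centrally supported is precisely where the full-rank hypothesis is used --- the given full-rank model is centrally supported, it dilates to $\wtd{S}$, and central support is shown to descend under local dilation. So contrary to your parenthetical, this direction does need the full-rank hypothesis (it does not need extremality).

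For the ``hard'' direction your broad strategy (double-commutant decomposition, extremality to control the pieces, multiplicity spaces as auxiliary Hilbert spaces) matches the paper's, but several details are off. The paper does not take the given full-rank model as ideal; instead it extracts an ideal model with \emph{irreducible} associated representation as a submodel of an arbitrary $S \in \mcC$, which avoids your unjustified claim that full-rankness of $\wtd{S}$ makes $\ket{\wtd{\psi}}$ cyclic for $\wtd{\phi}_A \otimes \wtd{\phi}_B$ on all of $\wtd{H}_A \otimes \wtd{H}_B$. Extremality does not force $\phi_A(\POVM^{X,A})''$ to be a single full matrix algebra; rather, writing $\ket{\psi}$ relative to the block decomposition yields a convex combination $p = \sum |\lambda_{ij}^{k\ell}|^2 p_{ij}^{k\ell}$, extremality gives $p_{ij}^{k\ell} = p$ for every nonzero term, uniqueness of $f$ forces all these submodels to induce the same state, and GNS uniqueness then makes the relevant irreducible summands $\pi^A_i$ (resp.\ $\pi^B_j$) pairwise unitarily equivalent. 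The local isometries are assembled block by block from those unitaries. Your step (1) of passing to the support projections of $S$ is unnecessary and would be problematic anyway, since the support model is a submodel of $S$ (hence in $\mcC$) only when $S$ is centrally supported.
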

The proof of \Cref{thm:mainresult1} follows from \Cref{prop:onlyifdirection} and \Cref{thm:uniquestate}.
The hypotheses of Theorem \ref{thm:mainresult1} can actually be weakened a bit
further: the ``if'' direction holds as long as $p$ is extreme and $\mcC$ is closed, even if $\mcC$ does not contain a full-rank model for $p$, while the
``only if'' direction only requires that $\mcC$ is closed and contains a full-rank model for $p$.
Although the ``only if'' direction does not require that $p$ be an extremal correlation, this is necessary for the ``if'' direction, as shown by
\cref{ex:extremal_cor_not_self-test}.

The class of all quantum models and the class of projective quantum models are both closed. As discussed above, if $\mcC$ is the class of all quantum models, then $\{f_S : S \in \mcC\}$ is the set of all finite-dimensional states on $\POVM^{X,A} \otimes_{min} \POVM^{Y,B}$, while if $\mcC$ is the class of projective quantum models then $\{f_S : S \in \mcC\}$ is the set of projective finite-dimensional states. Also, every correlation in $C_q$ has a full-rank quantum model. Thus for these two classes, \cref{thm:mainresult1} implies:
\begin{corollary}\label{cor:mainresult2}
Suppose $p\in C_q(X,Y,A,B)$ is an extreme point. Then:
\begin{enumerate}[(a)]
    \item The correlation $p$ is a self-test for the class of quantum models if and only if $p$ is an abstract state self-test for finite-dimensional states. \item If $p$ has a full-rank projective quantum model, then $p$ is a self-test for projective quantum models if and only if $p$ is an abstract state self-test for projective finite-dimensional states. 
\end{enumerate}
\end{corollary}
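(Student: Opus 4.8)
The plan is to derive both parts directly from \Cref{thm:mainresult1} by checking its hypotheses in each case, so the work is really in verifying the two claims made in the sentence preceding the corollary: that the class of all quantum models and the class of projective quantum models are closed, and that the associated sets $\{f_S : S \in \mcC\}$ are, respectively, all finite-dimensional states and all projective finite-dimensional states on $\POVM^{X,A} \otimes_{min} \POVM^{Y,B}$. For part (a), I would first argue that the class $\mcC$ of \emph{all} (POVM) quantum models is closed in the sense of \Cref{def:submodel1}: local equivalence clearly preserves being a quantum model (conjugating POVMs by unitaries gives POVMs, and the new vector is still a unit vector), and if $\wtd{S}$ is a submodel of $S$ via projections $\Pi_A, \Pi_B$ commuting with the measurement operators, then $\wtd{M}^x_a = \Pi_A M^x_a \Pi_A = \Pi_A M^x_a$ is positive and $\sum_a \wtd{M}^x_a = \Pi_A(\sum_a M^x_a)\Pi_A = \Pi_A = \Id_{\wtd H_A}$, so the compressions are again POVMs; after normalizing $\lambda\ket{\wtd\psi}$ to a unit vector we have a genuine quantum model. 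Hence $\mcC$ is closed.

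Next, for part (a), I would recall from the discussion in \cref{sec:correlations} that every finite-dimensional state $f$ on $\POVM^{X,A} \otimes_{min} \POVM^{Y,B}$ equals $f_S$ for some quantum model $S$ (this is the application of the double commutant theorem to a GNS representation of $f$, cited to \cite{SW08}), while conversely every $f_S$ is finite-dimensional by construction; so $\{f_S : S \in \mcC\}$ is exactly the set of all finite-dimensional states. Since every correlation in $C_q$ has a full-rank quantum model (noted in \cref{sec:correlations} via restricting to the support projection), the hypotheses of \Cref{thm:mainresult1} are satisfied for any extreme point $p \in C_q$ with this $\mcC$, and the theorem gives precisely statement (a).

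For part (b), I would run the same argument with $\mcC$ the class of projective quantum models. Closedness requires the observation that the compression $\Pi_A M^x_a \Pi_A$ of a projection $M^x_a$ that commutes with $\Pi_A$ is again a projection on $\Pi_A H_A$ (indeed $(\Pi_A M^x_a)^2 = \Pi_A M^x_a M^x_a = \Pi_A M^x_a$ and it is self-adjoint), and local unitary conjugation obviously preserves projectivity; so projective models form a closed class. The identification of $\{f_S : S \in \mcC\}$ with the projective finite-dimensional states is exactly the last sentence before \Cref{def:abstract_self-test}: a finite-dimensional state is projective (i.e.\ lies in the image of the pullback $q^*$ along $\POVM^{X,A}\otimes_{min}\POVM^{Y,B} \twoheadrightarrow \PVM^{X,A}\otimes_{min}\PVM^{Y,B}$) if and only if it is $f_S$ for some projective model $S$. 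Finally, under the extra hypothesis that $p$ has a full-rank projective model, all hypotheses of \Cref{thm:mainresult1} hold for this $\mcC$, and the theorem yields statement (b); combining with the identification of projective states on the min tensor product with states on $\PVM^{X,A} \otimes_{min} \PVM^{Y,B}$ (via the injection $q^*$) lets one rephrase ``abstract state self-test for projective finite-dimensional states'' as ``unique finite-dimensional state on $\PVM^{X,A}\otimes_{min}\PVM^{Y,B}$,'' matching the informal statement in the introduction.

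The only genuinely delicate point is the full-rank hypothesis in part (b): unlike the POVM case, it is \emph{not} automatic that an extremal correlation has a full-rank projective model — this is exactly the phenomenon isolated by \cite{lifiting23} — which is why part (b) must carry it as an explicit assumption rather than deriving it, and why the statement of part (b) is conditional whereas part (a) is not. Everything else is a routine unwinding of definitions plus the cited facts, so I expect no serious obstacle beyond being careful that the normalization of $\ket{\wtd\psi}$ in the submodel definition does not interfere with closedness (it does not, since scaling a cyclic vector does not change the induced state or the model structure).
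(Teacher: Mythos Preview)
Your proposal is correct and follows essentially the same approach as the paper's proof: verify that the classes $\mcC_{POVM}$ and $\mcC_{PVM}$ are closed, identify $\{f_S : S \in \mcC\}$ with the set of finite-dimensional (resp.\ projective finite-dimensional) states, note that every $p \in C_q$ has a full-rank POVM model, and then apply \Cref{thm:mainresult1} directly. The paper's version is terser, simply asserting closedness and the state identifications rather than spelling out the verifications you give, but the logical structure is identical.
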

The proof of \cref{cor:mainresult2} is given in \cref{section:mainresult2}. In
part (b), if $p$ is an abstract state self-test for projective
finite-dimensional states, then $p$ is a self-test for projective quantum
models even if $p$ does not have a full-rank projective quantum model. We do
not know whether the hypothesis that $p$ have a full-rank projective quantum
model is required for the other direction. Baptista, Chen, Kaniewski, Lolck, Man{\v{c}}inska, Gabelgaard Nielsen, and Schmidt
\cite{lifiting23} have constructed a correlation which is a self-test for full-rank
(POVM) quantum models, and a self-test for PVM quantum models, but does not have a full-rank projective quantum model,
so this condition cannot automatically be assumed to hold. 

In the next section, we'll show that if a correlation $p \in C_q$ is a self-test for all quantum models (or equivalently, an abstract state self-test for the set of all finite-dimensional states on $\POVM^{X,A} \otimes_{min} \POVM^{Y,B}$), then it has an ideal projective model, and hence is a self-test for projective quantum models. We do not know whether there is a correlation $p \in C_q$ which is a self-test for projective quantum models but not a self-test for POVM quantum models (although we'll show in the next section that any such example cannot have a full-rank projective quantum model). Our second main result is that, for two large classes of correlations, every self-test for projective quantum models is also a self-test for POVM quantum models. Recall that a correlation $p\in C_q(X,Y,A,B)$ is said to be a \textbf{synchronous correlation} if $A=B,X=Y$, and $p(a,b|x,x)=0$ whenever $a\neq b$. Recall that $p\in C_q(X,Y,A,B)$ is said to a \textbf{binary correlation} if $\abs{A}=\abs{B}=2$.

\begin{theorem}\label{thm:mainresult3}
If $p$ is a synchronous or binary correlation, and is an extreme point in
$C_q$, then the following statements are equivalent:
\begin{enumerate}
    \item $p$ is a self-test for quantum models.
    \item $p$ is a self-test for projective quantum models.
    \item $p$ is an abstract state self-test for finite-dimensional states.
    \item $p$ is an abstract state self-test for projective finite-dimensional states.
\end{enumerate}
\end{theorem}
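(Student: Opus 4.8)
The plan is to derive all four equivalences from a single structural claim: \emph{if $p$ is synchronous or binary and is an extreme point of $C_q$, then every full-rank quantum model for $p$ has projective measurements.} Granting this, the implications can be arranged as a cycle $(1)\Rightarrow(3)\Rightarrow(4)\Rightarrow(2)\Rightarrow(1)$. Indeed, $(1)\Leftrightarrow(3)$ is \cref{cor:mainresult2}(a); $(3)\Rightarrow(4)$ holds for any $p\in C_q$ because Naimark dilation gives $p$ a projective finite-dimensional model, hence a projective finite-dimensional state with correlation $p$, which must coincide with the unique such state provided by $(3)$; and $(4)\Rightarrow(2)$ is the remark following \cref{cor:mainresult2}. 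The only nontrivial link is $(2)\Rightarrow(1)$: let $\wtd S$ be an ideal projective model witnessing $(2)$ and let $S$ be an arbitrary quantum model for $p$. By the support-projection construction of \cref{SS:selftestabs}, $S$ has a full-rank submodel $S_{\mathrm{fr}}$ with $S\succeq S_{\mathrm{fr}}$; since $S_{\mathrm{fr}}$ is a full-rank model for $p$ it is projective by the claim, so $S_{\mathrm{fr}}\succeq\wtd S$ by $(2)$, and hence $S\succeq\wtd S$ by transitivity of $\succeq$. As $S$ was arbitrary, $p$ is a self-test for the class of all quantum models, which is $(1)$.

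It remains to prove the claim, and the \textbf{binary} case is where extremality is used. Fix a full-rank model $S$ for an extreme $p$, write $\ket{\psi}=\sum_i\lambda_i\ket{\alpha_i}\otimes\ket{\beta_i}$ in Schmidt form with all $\lambda_i>0$, the $\ket{\alpha_i}$ an orthonormal basis of $H_A$ and the $\ket{\beta_i}$ one of $H_B$, and fix $x_0\in X$. For $0<\delta<\tfrac12$ and a continuous $g\colon[0,1]\to[0,1]$ supported in $[\delta,1-\delta]$, put $G=g(M^{x_0}_0)\ge 0$; then $M^{x_0}_{0,t}:=M^{x_0}_0+tG$ and $M^{x_0}_{1,t}:=\Id-M^{x_0}_{0,t}$ form a valid two-outcome POVM for $|t|\le\delta$, and replacing $\{M^{x_0}_a\}$ by $\{M^{x_0}_{a,t}\}$ yields a quantum model for a correlation $p_t\in C_q$ with $p=\tfrac12 p_\delta+\tfrac12 p_{-\delta}$. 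Extremality forces $p_\delta=p_{-\delta}$, hence $\braket{\psi|G\otimes\Id|\psi}=0$ and therefore $G\otimes\Id\,\ket{\psi}=0$; running this over all such $g$, and letting $g\uparrow\mathbbm{1}_{(0,1)}$, gives $(\Id-E_{\{0\}}-E_{\{1\}})\otimes\Id\,\ket{\psi}=0$ where $E$ is the spectral measure of $M^{x_0}_0$, so that $\big((M^{x_0}_0)^2-M^{x_0}_0\big)\otimes\Id\,\ket{\psi}=0$. Since $\ket{\psi}$ has full Schmidt rank, $T\mapsto(T\otimes\Id)\ket{\psi}$ is injective on $\mathscr B(H_A)$, so $(M^{x_0}_0)^2=M^{x_0}_0$; the same argument on Bob's side gives $(N^{y_0}_0)^2=N^{y_0}_0$, so $S$ is projective.

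In the \textbf{synchronous} case extremality is not needed; this is the classical fact that full-rank models of synchronous correlations are projective. Identifying $H_A\cong H_B\cong\C^d$ via the Schmidt bases and writing $D=\mathrm{diag}(\lambda_i)$, one has $\braket{\psi|Y\otimes X|\psi}=\tr\!\big((D^{1/2}YD^{1/2})(D^{1/2}X^{T}D^{1/2})\big)$, so the relations $\braket{\psi|M^x_a\otimes N^x_b|\psi}=0$ for $a\ne b$ express the vanishing of a trace of a product of two positive semidefinite matrices; hence $D^{1/2}M^x_aD^{1/2}\cdot D^{1/2}(N^x_b)^{T}D^{1/2}=0$, i.e.\ $M^x_aD(N^x_b)^{T}=0$ for $a\ne b$. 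Summing over $b\ne a$, and running the symmetric argument with the roles of $A$ and $B$ exchanged, yields $M^x_a=D(N^x_a)^{T}D^{-1}$ together with $(N^x_a)^2=N^x_a$, and hence $(M^x_a)^2=M^x_a$ as well.

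The substance of the proof is therefore the claim, and within it the binary case: the synchronous case is a known lemma, while the binary case genuinely relies on extremality of $p$ through the perturbation argument above, and it is open whether the synchronous/binary hypothesis can be dropped — this is exactly the question of whether a self-test for projective models must be a self-test for POVM models. A secondary point requiring care is the interface with \cref{SS:selftestabs}: one must know that every quantum model $S$ for $p$ admits a full-rank submodel $S_{\mathrm{fr}}$ with $S\succeq S_{\mathrm{fr}}$, so that $S_{\mathrm{fr}}$ inherits both the correlation $p$ and, via $(2)$, a local dilation to the ideal model. (Extremality of $p$ otherwise enters only through \cref{cor:mainresult2}, so examples such as \cref{ex:extremal_cor_not_self-test} are not in tension with the statement.)
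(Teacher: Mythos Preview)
Your cycle $(1)\Rightarrow(3)\Rightarrow(4)\Rightarrow(2)$ is correct and matches the paper, and your structural claim (full-rank models of synchronous or binary extreme correlations are projective) is correct and essentially reproves \cref{lem:synch} and \cref{lem:binary}. The gap is in the step $(2)\Rightarrow(1)$ for the \emph{binary} case.

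You write that ``by the support-projection construction of \cref{SS:selftestabs}, $S$ has a full-rank submodel $S_{\mathrm{fr}}$ with $S\succeq S_{\mathrm{fr}}$.'' But the support model $S'$ of $S$ is a submodel of $S$ (in the sense of \cref{def:submodel1}), and satisfies $S\succeq S'$ (via \cref{lem:supportmodel}), \emph{only when $S$ is centrally supported}. For synchronous $p$ this is automatic: your argument (equivalently \cref{lem:synch}) gives $M^x_a\otimes\Id\ket{\psi}=\Id\otimes N^x_a\ket{\psi}$ for every model, which by \cref{prop:centrallysupported} forces central support. For binary extreme $p$, however, nothing you have shown implies that an arbitrary POVM model $S$ is centrally supported; \cref{lem:binary} only says that eigenvectors of $M^x_0$ with eigenvalue in $(0,1)$ lie in $(\Pi_A H_A)^\perp$, and the remaining eigenvectors need not respect the decomposition $\Pi_A H_A\oplus(\Pi_A H_A)^\perp$. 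So the passage $S\succeq S_{\mathrm{fr}}$ is unjustified, and this is exactly the ``secondary point requiring care'' you flag but do not resolve.

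The paper's route in the binary case (\cref{prop:binary}) sidesteps this entirely: rather than passing to the support model, it keeps the Hilbert spaces of $S$ and replaces each $M^x_a$ by the spectral projection $P^x_a$ onto the eigenvalue-$1$ (resp.\ eigenvalue-$0$) eigenspace, so that $M^x_a\otimes\Id\ket{\psi}=P^x_a\otimes\Id\ket{\psi}$; then $S\succeq S'$ via the identity isometries, with $S'$ projective. Your own perturbation argument already yields $(\Id-E_{\{0\}}-E_{\{1\}})\otimes\Id\ket{\psi}=0$ for \emph{any} model $S$ (full-rank was only used in the last line), and this is precisely what is needed to run the paper's construction. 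So the fix is to drop the full-rank detour in the binary case and use your perturbation computation to build a projective model on the same space that $S$ visibly dilates to.
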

In particular, this implies that many prominent self-tests for projective
models, such as the CHSH game, Mermin-Peres magic square game, and so on, are
also self-tests for POVM quantum models. When $p$ is a synchronous correlation,
the proof of \Cref{thm:mainresult3} shows that (1) and (2) are equivalent even
if $p$ is not an extreme point in $C_q$, as are (3) and (4) (see \Cref{prop:synchronousmain3}
and \Cref{prop:synchronousprojectivestate}).

To finish this section, we note that a major advantage of \cref{def:abstract_self-test} is that it naturally generalizes to a definition of self-testing for commuting operator models. We discuss this in  \cref{sec:commuting_operator_models}.

\section{Proof of Theorem \ref{thm:mainresult1}}\label{section:mainresult1}

In this section we prove \Cref{thm:mainresult1} and \Cref{cor:mainresult2}.
Unless stated otherwise, all the Hilbert spaces in this section are assumed to be finite-dimensional.

\subsection{Self-tests are abstract state self-tests}\label{SS:selftestabs}

We start by proving the ``only if'' direction of \Cref{thm:mainresult1}, that a
self-test (as in \Cref{def:tensorproductselftest}) is an abstract state
self-test. For the purposes of this proof, we use the following terminology.
If
\begin{align*}
    S= (H_A, H_B, \{M^x_a : a \in A, x \in X\}, \{N^y_b:b \in B, y \in Y\}, \ket{\psi})
\end{align*}
is a quantum model, the \textbf{support} of $\ket{\psi}$ in $H_A$ (resp. $H_B$)
is the image of the reduced density matrix $\rho_A :=
\tr_{H_B}(\ket{\psi}\bra{\psi})$ (resp. $\rho_B =
\tr_{H_A}(\ket{\psi}\bra{\psi})$. The \textbf{support projections} $\Pi_A$ and $\Pi_B$ of $\ket{\psi}$
are the self-adjoint projections onto the support of $\ket{\psi}$ in $H_A$ and $H_B$ respectively.
Restricting to the support of $\ket{\psi}$ gives another quantum model
\begin{equation*}
    S':=(\Pi_AH_A,\Pi_BH_B,\{\Pi_AM^x_a\Pi_A\},\{\Pi_BN^y_b\Pi_B\},\ket{\psi}),
\end{equation*}
which we call the \textbf{support model} of $S$. The support model $S'$ is a full-rank model, but is not
necessarily a submodel of $S$ if $\Pi_A$ and $\Pi_B$ do not commute with all the measurement operators
$M^x_a$ and $N^y_b$. We say that $S$ is \textbf{centrally supported} if 
\begin{align*}
    [\Pi_A,M^x_a]=[\Pi_B,N^y_b]=0 
\end{align*}
for all $(a,b,x,y)\in A\times B\times X\times Y$. The support model $S'$ of a quantum model $S$ is a 
submodel of $S$ if and only if $S$ is centrally supported. Any full-rank model is centrally supported, since the
support projections are the identity operators. Conversely, if $S$ is centrally supported, then its
support model is a full-rank submodel of $S$. Hence we have:
\begin{lemma}\label{lemma:full-rank}
    Let $\mcC$ be a closed class of quantum models, and
    let $p\in C_q$. Then $\mcC$ contains a centrally supported model for $p$ if
    and only if $\mcC$ contains a full-rank model for $p$.
\end{lemma}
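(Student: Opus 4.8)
The plan is to observe that the discussion immediately preceding the statement already assembles all the needed ingredients, so the proof is essentially bookkeeping; I would organize it as the two implications.

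For the ``if'' direction, suppose $\mcC$ contains a full-rank model $S$ for $p$. Then $\dim H_A = \dim H_B$ equals the Schmidt rank of $\ket{\psi}$, so the reduced density matrices $\rho_A$ and $\rho_B$ have full support and the support projections $\Pi_A$, $\Pi_B$ are the identity operators on $H_A$, $H_B$. In particular $[\Pi_A, M^x_a] = [\Pi_B, N^y_b] = 0$ trivially, so $S$ is centrally supported, and $\mcC$ already contains a centrally supported model for $p$.

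For the ``only if'' direction, suppose $\mcC$ contains a centrally supported model $S$ for $p$, and let $S'$ be its support model. First I would verify that $S'$ is a submodel of $S$ in the sense of \Cref{def:submodel1}: taking $\Pi_A$, $\Pi_B$ to be the support projections, the conditions $\Pi_A H_A = \wtd{H}_A$, $\Pi_B H_B = \wtd{H}_B$, $\Pi_A M^x_a \Pi_A = \wtd{M}^x_a$, $\Pi_B N^y_b \Pi_B = \wtd{N}^y_b$ hold by the definition of $S'$, and $[\Pi_A, M^x_a] = [\Pi_B, N^y_b] = 0$ holds by the centrally-supported hypothesis; the remaining condition $\Pi_A \otimes \Pi_B \ket{\psi} = \lambda \ket{\wtd{\psi}}$ holds with $\lambda = 1$ and $\ket{\wtd{\psi}} = \ket{\psi}$, which follows by writing $\ket{\psi} = \sum_i \lambda_i \ket{\alpha_i} \otimes \ket{\beta_i}$ in a Schmidt decomposition and noting that $\Pi_A$ is the projection onto $\vspan\{\ket{\alpha_i}\}$ and $\Pi_B$ the projection onto $\vspan\{\ket{\beta_i}\}$, so $\Pi_A \otimes \Pi_B$ fixes $\ket{\psi}$. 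Since a submodel is (trivially, via identity local unitaries) locally equivalent to itself and $\mcC$ is closed, we conclude $S' \in \mcC$. Finally $S'$ is full-rank by construction, and it has correlation $p$ because $\braket{\psi | \Pi_A M^x_a \Pi_A \otimes \Pi_B N^y_b \Pi_B | \psi} = \braket{\psi | M^x_a \otimes N^y_b | \psi} = p(a,b|x,y)$, using $\Pi_A \otimes \Pi_B \ket{\psi} = \ket{\psi}$. Hence $\mcC$ contains a full-rank model for $p$.

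I do not anticipate any real obstacle; the only mildly delicate points are the two routine checks in the ``only if'' direction — that $\Pi_A \otimes \Pi_B \ket{\psi} = \ket{\psi}$ via the Schmidt decomposition, and that $S'$ literally satisfies every clause of the definition of a submodel — after which closedness of $\mcC$ does all the work.
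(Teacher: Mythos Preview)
Your proof is correct and is essentially a careful unpacking of the paper's own argument: the paper does not give a separate proof of this lemma but simply writes ``Hence we have:'' after observing that full-rank models are centrally supported (support projections are identities) and that the support model of a centrally supported model is a full-rank submodel. Your write-up fills in exactly these details, including the verification via the Schmidt decomposition that $\Pi_A \otimes \Pi_B \ket{\psi} = \ket{\psi}$ and the check that $S'$ satisfies the submodel definition so that closedness of $\mcC$ applies.
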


For later use, we note:

\begin{lemma}\label{lem:supportmodel}
    If $S$ is a centrally supported model with support model $S'$, then $S'\succeq S$ and $S \succeq S'$. 
\end{lemma}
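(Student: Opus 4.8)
The plan is to unpack the definitions of the two relations $\succeq$ and exhibit the required isometries and auxiliary data explicitly. Recall that $S = (H_A, H_B, \{M^x_a\}, \{N^y_b\}, \ket{\psi})$ is centrally supported with support projections $\Pi_A, \Pi_B$, and $S' = (\Pi_A H_A, \Pi_B H_B, \{\Pi_A M^x_a \Pi_A\}, \{\Pi_B N^y_b \Pi_B\}, \ket{\psi})$; note that $\ket{\psi}$ lies in $\Pi_A H_A \otimes \Pi_B H_B$ by definition of the support projections, so it genuinely is a vector state in the smaller space.

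\medskip

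\textbf{Step 1: $S' \succeq S$.} For this direction I take $H_A^{aux} = H_B^{aux} = \C$ with $\ket{aux} = 1$, and I take $I_A : \Pi_A H_A \incl H_A$ and $I_B : \Pi_B H_B \incl H_B$ to be the inclusion maps (tensored with $\C$), which are isometries. The required identity $(I_A \otimes I_B)\big((\Pi_A M^x_a \Pi_A) \otimes (\Pi_B N^y_b \Pi_B) \ket{\psi}\big) = (M^x_a \otimes N^y_b \ket{\psi}) \otimes 1$ then reduces to the claim that $(\Pi_A M^x_a \Pi_A \otimes \Pi_B N^y_b \Pi_B)\ket{\psi} = (M^x_a \otimes N^y_b)\ket{\psi}$ inside $H_A \otimes H_B$. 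This follows because $\Pi_A \otimes \Pi_B \ket{\psi} = \ket{\psi}$, and because $[\Pi_A, M^x_a] = [\Pi_B, N^y_b] = 0$ lets us slide the projections through: $(\Pi_A M^x_a \Pi_A \otimes \Pi_B N^y_b \Pi_B)\ket{\psi} = (\Pi_A \otimes \Pi_B)(M^x_a \otimes N^y_b)(\Pi_A \otimes \Pi_B)\ket{\psi} = (M^x_a \otimes N^y_b)(\Pi_A \otimes \Pi_B)\ket{\psi} = (M^x_a \otimes N^y_b)\ket{\psi}$, using centrality in the middle step and $(\Pi_A \otimes \Pi_B)\ket{\psi} = \ket{\psi}$ at both ends.

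\medskip

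\textbf{Step 2: $S \succeq S'$.} Here the roles are reversed and the isometry has to go from the big space to the small space, which is impossible directly, so this is where the auxiliary registers do real work. I take $H_A^{aux} = \ker \Pi_A = (\Id - \Pi_A)H_A$ and $H_B^{aux} = \ker \Pi_B$, and define $I_A : H_A \to \Pi_A H_A \otimes H_A^{aux}$ by $I_A \ket{h} = \Pi_A \ket{h} \otimes \ket{v_A} + \big((\Id - \Pi_A)\ket{h}\big)\otimes\cdots$ — more cleanly, fix any unit vector $\ket{v_A} \in H_A^{aux}$ (assuming $\ker\Pi_A \neq 0$; if it is zero then $\Pi_A = \Id$, $H_A^{aux}$ should instead be taken to be $\C$ and the map is just $\ket{h}\mapsto \ket{h}\otimes 1$, handled as in Step 1) and use the decomposition $H_A = \Pi_A H_A \oplus \ker\Pi_A$ to set $I_A := \iota \otimes \ket{v_A}\bra{\cdot}$ on ... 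Let me instead describe it as: write $\ket{h} = \ket{h_0} + \ket{h_1}$ with $\ket{h_0} \in \Pi_A H_A$, $\ket{h_1}\in\ker\Pi_A$, and set $I_A \ket{h} := \ket{h_0}\otimes\ket{v_A} + \ket{e_0}\otimes \ket{h_1}$ where $\ket{e_0}\in\Pi_A H_A$ is a fixed unit vector orthogonal to ... this is getting delicate; the honest statement is that any isometry $\ker\Pi_A \incl \Pi_A H_A \otimes (\ker\Pi_A)$ of the form $\ket{h_1}\mapsto \ket{e_0}\otimes\ket{h_1}$ combined with $\ket{h_0}\mapsto \ket{h_0}\otimes\ket{v_A}$ is an isometry on $H_A$ provided $\ket{e_0}\perp$ nothing is needed since the two summands land in orthogonal subspaces of $\Pi_A H_A \otimes H_A^{aux}$ as long as $\ket{v_A}\perp\ker\Pi_A$-image... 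Rather than belabor this, I will simply note: define $I_A(\ket{h_0}+\ket{h_1}) = \ket{h_0}\otimes\ket{v_A} + \ket{\alpha_0}\otimes W\ket{h_1}$ for a fixed unit $\ket{\alpha_0}\in\Pi_A H_A$ and a fixed unitary $W$ identifying $\ker\Pi_A$ with a subspace of $H_A^{aux}$, choosing $\ket{v_A}$ orthogonal to that subspace's image — this is an isometry. Take $\ket{aux} := \ket{v_A}\otimes\ket{v_B}$. Then, because $M^x_a \ket{\psi} = M^x_a (\Pi_A\otimes\Pi_B)\ket{\psi}$ has its $H_A$-factor supported... more precisely $(M^x_a \otimes N^y_b)\ket{\psi} \in \Pi_A H_A \otimes \Pi_B H_B$ by the centrality computation of Step 1, the ``$h_1$'' components never appear, so $(I_A \otimes I_B)(M^x_a \otimes N^y_b)\ket{\psi} = \big((M^x_a\otimes N^y_b)\ket{\psi}\big)\otimes\ket{aux} = \big((\Pi_A M^x_a\Pi_A \otimes \Pi_B N^y_b\Pi_B)\ket{\wtd\psi}\big)\otimes\ket{aux}$, which is exactly the local dilation condition with $\ket{\wtd\psi} = \ket{\psi}$.

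\medskip

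\textbf{Main obstacle.} The only genuine subtlety is constructing the isometry $I_A : H_A \to \Pi_A H_A \otimes H_A^{aux}$ in Step 2 and checking it really is an isometry while also ensuring the image of $(M^x_a \otimes N^y_b)\ket{\psi}$ under $I_A \otimes I_B$ has the clean tensor form — this requires knowing that $(M^x_a \otimes N^y_b)\ket{\psi}$ already lies in the support subspace, which is precisely the centrality computation from Step 1. Everything else is bookkeeping, and the edge cases $\Pi_A = \Id$ or $\Pi_B = \Id$ (where the kernel is zero and one falls back to the trivial auxiliary space $\C$) should be dispatched with a one-line remark. I expect the whole proof to be about half a page.
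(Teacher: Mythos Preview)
Your overall strategy matches the paper's proof exactly: Step~1 is correct and essentially identical to what the paper does (the paper just writes down the key identity $\widehat{M}^x_a\otimes\widehat{N}^y_b\ket{\psi}=M^x_a\otimes N^y_b\ket{\psi}$ without naming the inclusions as isometries). The identification of the ``main obstacle'' is also right: the only nontrivial point is building the isometry in Step~2, and the fact that $(M^x_a\otimes N^y_b)\ket{\psi}$ already lies in $\Pi_A H_A\otimes\Pi_B H_B$ is indeed what makes the dilation condition go through.

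However, your explicit construction in Step~2 does not work as written. You set $H_A^{aux}=\ker\Pi_A$ and try to send $\ket{h_0}\mapsto\ket{h_0}\otimes\ket{v_A}$ (with $\ket{v_A}\in\ker\Pi_A$) and $\ket{h_1}\mapsto\ket{e_0}\otimes\ket{h_1}$ (with $\ket{e_0}\in\Pi_A H_A$). For this to be an isometry the two ranges must be orthogonal, i.e.\ $\braket{h_0|e_0}\braket{v_A|h_1}=0$ for all $\ket{h_0}\in\Pi_A H_A$ and $\ket{h_1}\in\ker\Pi_A$. Since $\ket{e_0}\in\Pi_A H_A$ and $\ket{v_A}\in\ker\Pi_A$, neither factor can be forced to vanish identically unless one of the vectors is zero. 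Your later attempt to ``choose $\ket{v_A}$ orthogonal to $W(\ker\Pi_A)$'' fails for the same reason: with $H_A^{aux}=\ker\Pi_A$ there is simply no room.

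The paper's fix is very clean and avoids all case analysis: take $H_A^{aux}=H_A$ (the whole space), pick the fixed unit vector $\ket{w_A}$ inside $\Pi_A H_A$ rather than inside the kernel, and set
\[
I_A(v)=\begin{cases} v\otimes\ket{w_A} & v\in\Pi_A H_A,\\ \ket{w_A}\otimes v & v\in(\Pi_A H_A)^\perp.\end{cases}
\]
Now the two ranges are orthogonal because in the \emph{second} tensor slot $\ket{w_A}\in\Pi_A H_A$ is orthogonal to every $v\in(\Pi_A H_A)^\perp$. This single choice also absorbs your edge case $\Pi_A=\Id$ without a separate remark. Everything else in your write-up is fine.
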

\begin{proof}
    Suppose $\model$ is a centrally supported quantum model.  Let $\Pi_A$ and $\Pi_B$ be the support projections of $\ket{\psi}$,
    and let $S' = (H_A', H_B', \{\widehat{M}^x_a\}, \{\widehat{N}^y_b\}, \ket{\psi})$ be the
    support model for $S$, so $H_A' = \Pi_A H_A$, $H_B' = \Pi_B H_B$, $\widehat{M}^x_a = \Pi_A M^x_a \Pi_A$,
    and $\widehat{N}^y_b = \Pi_B N^y_b \Pi_B$. Since
    \begin{align*}
        \widehat{M}^x_a\otimes\widehat{N}^y_b\ket{\psi}=M^x_a\otimes N^y_b\ket{\psi} \text{ for all }a\in A,b\in B,x\in X,y\in Y,
    \end{align*}
    it follows that $S'\succeq S$.
    
    To see $S\succeq S'$, pick an arbitrary unit vector
    $\ket{w_A} \in H'_A$, and define an isometry $I_A : H_A \to H'_A \otimes
    H_A$ by $I_A(v) = v \otimes \ket{w_A}$ if $v \in H'_A$, and $I_A(v) = \ket{w_A} \otimes v$
    if $v \in (H'_A)^{\perp}$. Similarly, define an isometry $I_B : H_B \to H'_B \otimes H_B$ using
    a vector $\ket{w_B} \in H'_B$. Using the fact that $S$ is centrally supported, 
    \begin{align*}
        (I_A \otimes I_B) M^x_a \otimes N^y_b \ket{\psi} & = (I_A \otimes I_B) \Pi_A M^x_a \Pi_A \otimes \Pi_B  N^y_b \Pi_B \ket{\psi} \\
            & = (\widehat{M}^x_a \otimes \widehat{N}^y_b \ket{\psi}) \otimes \ket{w_A} \otimes \ket{w_B}
    \end{align*}
    for all $x \in X$, $a \in A$, $y \in Y$, $b \in B$, so $S \succeq S'$.
\end{proof}

We will now provide a useful characterization of centrally supported quantum models. To do this, we first need to establish the following lemmas.

\begin{lemma}\label{lemma:onlyif}
Let $\ket{\psi}\in H_A\otimes H_B$ be a bipartite vector state, and let $\Pi_A$
and $\Pi_B$ be the support projections of $\ket{\psi}$. For any self-adjoint
operator $E\in\msB(H_A)$, if there exists an operator $\widehat{E}\in\msB(H_B)$
such that $E\otimes \Id\ket{\psi}=\Id\otimes \widehat{E}\ket{\psi}$, then the
support projection $\Pi_A$ commutes with $E$.
\end{lemma}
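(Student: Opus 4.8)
The plan is to use a Schmidt decomposition of $\ket{\psi}$ to reduce the operator identity to a statement about where $E$ sends the Schmidt vectors on the $H_A$ side. Write $\ket{\psi}=\sum_{i\in\mcI}\lambda_i\ket{\alpha_i}\otimes\ket{\beta_i}$ with all $\lambda_i>0$ and $\{\ket{\alpha_i}\}_{i\in\mcI}$, $\{\ket{\beta_i}\}_{i\in\mcI}$ orthonormal. Then $\rho_A=\sum_{i\in\mcI}\lambda_i^2\ket{\alpha_i}\bra{\alpha_i}$, so by definition $\Pi_A$ is the orthogonal projection onto $\vspan\{\ket{\alpha_i}:i\in\mcI\}=\Pi_AH_A$ (the projection $\Pi_B$ plays no role in the argument).

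Next I would expand both sides of the hypothesis in the Schmidt basis:
\[
\sum_{i\in\mcI}\lambda_i\,(E\ket{\alpha_i})\otimes\ket{\beta_i}
= E\otimes\Id\ket{\psi}
= \Id\otimes\widehat{E}\ket{\psi}
= \sum_{i\in\mcI}\lambda_i\,\ket{\alpha_i}\otimes(\widehat{E}\ket{\beta_i}).
\]
Since $\{\ket{\beta_i}\}_{i\in\mcI}$ is orthonormal, applying $\Id\otimes\bra{\beta_j}$ (i.e. taking the partial inner product against $\ket{\beta_j}$ on the second tensor factor) collapses the left-hand side to $\lambda_j E\ket{\alpha_j}$ and the right-hand side to $\sum_{i\in\mcI}\lambda_i\braket{\beta_j|\widehat{E}|\beta_i}\ket{\alpha_i}$. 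As $\lambda_j>0$, this gives
\[
E\ket{\alpha_j}=\sum_{i\in\mcI}\frac{\lambda_i}{\lambda_j}\braket{\beta_j|\widehat{E}|\beta_i}\ket{\alpha_i}\ \in\ \vspan\{\ket{\alpha_i}:i\in\mcI\}=\Pi_AH_A
\]
for every $j\in\mcI$. Hence $E$ maps $\Pi_AH_A$ into itself, i.e. $(\Id-\Pi_A)E\Pi_A=0$.

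To finish, I would invoke self-adjointness of $E$: taking the adjoint of $(\Id-\Pi_A)E\Pi_A=0$ (and using that $\Pi_A$ and $\Id-\Pi_A$ are self-adjoint) yields $\Pi_AE(\Id-\Pi_A)=0$, and adding the two relations gives $\Pi_AE=\Pi_AE\Pi_A=E\Pi_A$, i.e. $[\Pi_A,E]=0$. I do not expect a serious obstacle; the only points needing care are that one must apply $\bra{\beta_j}$ to the second tensor factor rather than trying to work with a full orthonormal basis of $H_B$, and that self-adjointness of $E$ is genuinely what promotes ``$E$ preserves $\Pi_AH_A$'' to ``$E$ commutes with $\Pi_A$''. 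An equivalent formulation, if one prefers to avoid Schmidt decompositions, is to identify $\ket{\psi}$ with an operator $\Psi$ so that the hypothesis reads $E\Psi=\Psi\widehat{E}^{\,T}$ and $\Pi_A$ is the projection onto the range of $\Psi$; then $\operatorname{range}(E\Psi)=\operatorname{range}(\Psi\widehat{E}^{\,T})\subseteq\operatorname{range}(\Psi)$ gives the same conclusion.
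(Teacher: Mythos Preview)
Your proof is correct and follows essentially the same route as the paper: both arguments use the Schmidt decomposition to show that $E$ maps each $\ket{\alpha_j}$ into $\vspan\{\ket{\alpha_i}\}$, giving $(\Id-\Pi_A)E\Pi_A=0$, and then invoke self-adjointness of $E$ to upgrade this to $[\Pi_A,E]=0$. The only cosmetic difference is that the paper first multiplies the hypothesis by $\Pi_A\otimes\Id$ to obtain $(\Id-\Pi_A)E\otimes\Id\ket{\psi}=0$ before reading off coefficients, whereas you take the partial inner product with $\ket{\beta_j}$ directly.
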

\begin{proof}
    Let $\ket{\psi}=\sum_{i=1}^k\lambda_{i}\ket{\alpha_i}\otimes\ket{\beta_i}$ be a
    Schmidt decomposition of $\ket{\psi}$, so the support projection of $\ket{\psi}$ on $H_A$ is $\Pi_A =
    \sum_{i=1}^k\ket{\alpha_i}\bra{\alpha_i}$. Suppose there exists an
operator $\widehat{E}\in\msB(H_B)$ such that $E\otimes \Id\ket{\psi}=\Id\otimes
    \widehat{E}\ket{\psi}$. Then 

    \begin{align*}
        \Pi_AE\otimes\Id\ket{\psi}=\Pi_A\otimes\widehat{E}\ket{\psi}=\Id\otimes \widehat{E}\ket{\psi}=E\otimes \Id\ket{\psi},
    \end{align*}
    or in other words $(\Id-\Pi_A)E\otimes\Id\ket{\psi}=0$. This implies
     \begin{align*}
        \sum_{i=1}^k\lambda_i\big((\Id-\Pi_A)E\ket{\alpha_i}\big)\otimes\ket{\beta_i}=0, 
     \end{align*}
      and since $\lambda_i>0$ for all $1\leq i\leq k$ and
        $\{\ket{\beta_i}:1\leq i\leq k\}$ is an orthonormal subset, it follows 
    that $(\Id-\Pi_A)E\ket{\alpha_i}=0$ for all $1\leq i\leq k$. Hence,
      \begin{equation*}
          (\Id-\Pi_A)E\Pi_A=\sum_{i=1}^k(\Id-\Pi_A)E\ket{\alpha_i}\bra{\alpha_i}=0,
      \end{equation*} 
    and $E\Pi_A=\Pi_AE\Pi_A$. Since $E$ is self-adjoint, $\Pi_A E = (E \Pi_A)^* = \Pi_A E \Pi_A = E \Pi_A$. 
\end{proof}

\begin{lemma}\label{lemma:if}
If $\ket{\psi}\in H_A\otimes H_B$ is a full-rank vector state, then for any
$E\in\msB(H_A)$  (resp. $F\in\msB(H_B)$) there is an operator
$\widehat{E} \in \msB(H_B)$ (resp. $\widehat{F} \in \msB(H_A)$) such that
$E\otimes\Id\ket{\psi}=\Id\otimes\widehat{E}\ket{\psi}$ (resp. $\Id\otimes
F\ket{\psi}=\widehat{F}\otimes \Id\ket{\psi}$).
\end{lemma}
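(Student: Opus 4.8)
The plan is to write $\widehat E$ down explicitly from a Schmidt decomposition of $\ket\psi$, and the only role of the full-rank hypothesis will be to supply the invertibility needed to divide by the Schmidt coefficients. First I would fix a Schmidt decomposition $\ket\psi=\sum_{i=1}^k\lambda_i\ket{\alpha_i}\otimes\ket{\beta_i}$. Since $\ket\psi$ is full-rank we have $k=\dim H_A=\dim H_B$, so $\{\ket{\alpha_i}\}_{i=1}^k$ and $\{\ket{\beta_i}\}_{i=1}^k$ are orthonormal \emph{bases} of $H_A$ and $H_B$, and moreover every Schmidt coefficient satisfies $\lambda_i>0$; both of these facts are used below.

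Next I would expand both sides in the product basis $\{\ket{\alpha_j}\otimes\ket{\beta_i}\}$. On the one hand $(E\otimes\Id)\ket\psi=\sum_{i,j}\lambda_i\braket{\alpha_j|E|\alpha_i}\,\ket{\alpha_j}\otimes\ket{\beta_i}$, while for an arbitrary candidate $\widehat E\in\msB(H_B)$ one has $(\Id\otimes\widehat E)\ket\psi=\sum_{i,j}\lambda_j\braket{\beta_i|\widehat E|\beta_j}\,\ket{\alpha_j}\otimes\ket{\beta_i}$. Comparing coefficients, the two vectors agree precisely when $\braket{\beta_i|\widehat E|\beta_j}=\tfrac{\lambda_i}{\lambda_j}\braket{\alpha_j|E|\alpha_i}$ for all $i,j$; as every $\lambda_j$ is nonzero, this prescription defines a genuine bounded operator, so I would simply set
\[
\widehat E:=\sum_{i,j}\frac{\lambda_i}{\lambda_j}\,\braket{\alpha_j|E|\alpha_i}\,\ket{\beta_i}\bra{\beta_j}\in\msB(H_B),
\]
and the computation just performed verifies $E\otimes\Id\ket\psi=\Id\otimes\widehat E\ket\psi$. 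The statement for $F\in\msB(H_B)$ is obtained by exchanging the roles of $H_A$ and $H_B$ (equivalently, by applying the first part to the flipped state $\mathrm{SWAP}\ket\psi$), giving $\widehat F:=\sum_{i,j}\frac{\lambda_i}{\lambda_j}\braket{\beta_j|F|\beta_i}\,\ket{\alpha_i}\bra{\alpha_j}$.

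There is no substantive obstacle here; the argument is pure bookkeeping with the Schmidt basis. The one point worth flagging is that full-rankness is genuinely necessary, not just convenient: if $\ket\psi$ is not full-rank one can pick $E$ carrying the support of $\rho_A$ outside itself, so that $(E\otimes\Id)\ket\psi$ has a nonzero component in $(\Pi_A H_A)^\perp\otimes H_B$, whereas $(\Id\otimes\widehat E)\ket\psi$ lies in $\Pi_A H_A\otimes H_B$ for every $\widehat E$; hence no such $\widehat E$ can exist. This is the same asymmetry that is exploited in the proof of \Cref{lemma:onlyif}.
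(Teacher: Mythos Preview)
Your proof is correct and is essentially the same as the paper's: both construct $\widehat{E}$ from the Schmidt data via $\braket{\beta_i|\widehat{E}|\beta_j}=\tfrac{\lambda_i}{\lambda_j}\braket{\alpha_j|E|\alpha_i}$, which the paper packages as $\widehat{E}=\lambda E^{T}\lambda^{-1}$ using the ricochet identity $M\otimes\Id\ket{\tau}=\Id\otimes M^{T}\ket{\tau}$ on the unnormalized maximally entangled vector $\ket{\tau}$. Your coordinate expansion and the paper's transpose trick are two presentations of the same computation.
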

\begin{proof}
Since the vector state $\ket{\psi}\in H_A\otimes H_B$ is full-rank, we can assume without loss of generality
that $H_A=H_B= \C^k$ and $\ket{\psi}$ has Schmidt decomposition
\begin{align*}
    \ket{\psi}=\sum_{i=1}^k\lambda_i\ket{i}\otimes\ket{i}.
\end{align*}
Let $\lambda:=\sum_{i=1}^k\lambda_i\ket{i}\bra{i}$ and
$\ket{\tau}:=\sum_{i=1}^k\ket{i}\otimes\ket{i}$. Then $\lambda$ is invertible
and $\ket{\psi}=\Id\otimes\lambda\ket{\tau}=\lambda\otimes\Id\ket{\tau}$. 
Given $E\in\msB(H_A)$, let $\widehat{E}:=\lambda E^{T}\lambda^{-1}$, where $E^T$ is the
transpose of $E$ in the standard basis. Since
$\Id \otimes M^T \ket{\tau} = M \otimes \Id \ket{\tau}$ for any 
operator $M$, 
\begin{align*}
    \Id\otimes\widehat{E}\ket{\psi}&=(\Id\otimes\lambda E^{T}\lambda^{-1})(\Id\otimes\lambda)\ket{\tau}
    =\Id\otimes\lambda E^{T}\ket{\tau}
    =E\otimes\lambda\ket{\tau}
    =E\otimes\Id\ket{\psi}.
\end{align*}
The existence of $\widehat{F}$ for $F$ follows similarly.
\end{proof}

\begin{proposition}\label{prop:centrallysupported}
    A quantum model $$\model$$ is centrally supported if and only if for every $a\in A$, $x\in X$ (resp. $b\in B$, $y\in Y$) there exists an operator $\widehat{M}_a^x\in \msB(H_B)$ (resp. $\widehat{N}_b^y\in \msB(H_A)$) such that $M_a^x\otimes \Id|\psi\rangle=\Id\otimes \widehat{M}_a^x|\psi\rangle$ (resp. $\Id \otimes N_b^y\ket{\psi}=\widehat{N}_b^y\otimes \Id\ket{\psi}$) .
\end{proposition}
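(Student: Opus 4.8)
The plan is to obtain each direction of the equivalence from one of the two preceding lemmas, after passing to the support model of $S$.

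For the ``if'' direction, suppose that for every $a\in A$, $x\in X$ there is an operator $\widehat{M}^x_a\in\msB(H_B)$ with $M^x_a\otimes\Id\ket{\psi}=\Id\otimes\widehat{M}^x_a\ket{\psi}$. Since $S$ is a POVM quantum model, each $M^x_a$ is positive, hence self-adjoint, so \Cref{lemma:onlyif} applies directly and gives $[\Pi_A,M^x_a]=0$. Running the symmetric version of \Cref{lemma:onlyif} (with the roles of $H_A$ and $H_B$ interchanged) on the hypothesis $\Id\otimes N^y_b\ket{\psi}=\widehat{N}^y_b\otimes\Id\ket{\psi}$, and using that each $N^y_b$ is self-adjoint, gives $[\Pi_B,N^y_b]=0$. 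Thus $S$ is centrally supported.

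For the ``only if'' direction, assume $S$ is centrally supported, with support model $S'=(H_A',H_B',\{\Pi_AM^x_a\Pi_A\},\{\Pi_BN^y_b\Pi_B\},\ket{\psi})$, where $H_A'=\Pi_AH_A$ and $H_B'=\Pi_BH_B$. Fix $a,x$. Because $[\Pi_A,M^x_a]=0$, the operator $M^x_a$ leaves $H_A'$ invariant and $M^x_a\Pi_A=\Pi_AM^x_a\Pi_A$, so using $\ket{\psi}=(\Pi_A\otimes\Pi_B)\ket{\psi}$ we get $M^x_a\otimes\Id\ket{\psi}=(\Pi_AM^x_a\Pi_A)\otimes\Id\ket{\psi}$, an identity that takes place inside $H_A'\otimes H_B'$. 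Viewed in $H_A'\otimes H_B'$, the state $\ket{\psi}$ is full-rank, so \Cref{lemma:if}, applied to the restriction of $\Pi_AM^x_a\Pi_A$ to $H_A'$, produces an operator $\widehat{M}^x_a\in\msB(H_B')$ with $(\Pi_AM^x_a\Pi_A)\otimes\Id\ket{\psi}=\Id\otimes\widehat{M}^x_a\ket{\psi}$. Extending $\widehat{M}^x_a$ to an operator on $H_B$ that vanishes on $(H_B')^{\perp}$, and invoking $\ket{\psi}=(\Pi_A\otimes\Pi_B)\ket{\psi}$ once more, we conclude $M^x_a\otimes\Id\ket{\psi}=\Id\otimes\widehat{M}^x_a\ket{\psi}$ in $H_A\otimes H_B$. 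The operators $\widehat{N}^y_b$ are constructed in the same way with $A$ and $B$ exchanged.

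Neither implication is difficult; the one place to be careful is the bookkeeping in the ``only if'' direction, namely tracking which tensor factor each operator acts on as one moves between $S$ and $S'$, and verifying that the intertwining identity established in $H_A'\otimes H_B'$ genuinely lifts back to $H_A\otimes H_B$. This lifting works precisely because central support guarantees that $M^x_a$ preserves the support of $\ket{\psi}$, so that passing to $S'$ loses no information about $M^x_a\otimes\Id\ket{\psi}$.
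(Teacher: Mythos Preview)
Your proof is correct and follows essentially the same approach as the paper: the ``if'' direction is a direct application of \Cref{lemma:onlyif}, and the ``only if'' direction restricts to the support of $\ket{\psi}$ where the state is full-rank, invokes \Cref{lemma:if} there, and then extends the resulting operator back to $H_B$ by zero on the orthogonal complement. The paper's write-up is slightly more compressed but the logic is identical.
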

\begin{proof}
The ``if" part follows directly from \cref{lemma:onlyif}. For the ``only if"
part, let $\Pi_A$ and $\Pi_B$ be the support projections of $\ket{\psi}$, and
suppose $[\Pi_A,M^x_a]=[\Pi_B,N^y_b]=0$ for all $(a,b,x,y)\in A\times B\times
X\times Y$. Since $\Pi_A\otimes\Pi_B\ket{\psi}$ has full-rank in
$(\Pi_AH_A)\otimes(\Pi_BH_B)$, by \cref{lemma:if} there is an operator
$\widehat{M}^x_a$ in $\msB(\Pi_BH_B)$ (which is also an operator in $\msB(H_B)$
acting trivially on $(\Id-\Pi_B)H_B$) such that
$\Pi_AM^x_a\Pi_A\otimes\Pi_B\ket{\psi}=\Pi_A\otimes
\widehat{M}^x_a\Pi_B\ket{\psi}$ for any $a\in A, x\in X$. Therefore
\begin{align*}
    M^x_a\otimes\Id\ket{\psi}&=M^x_a\Pi_A\otimes\Pi_B\ket{\psi}=\Pi_AM^x_a\Pi_A\otimes\Pi_B\ket{\psi}
    =\Pi_A\otimes \widehat{M}^x_a\Pi_B\ket{\psi}=\Id\otimes\widehat{M}^x_a\ket{\psi}.
\end{align*}
Similarly, for any $b\in B,y\in Y$ there is $\widehat{N}_b^y\in \mathcal{B}(H_A)$ such that $\Id \otimes N_b^y\ket{\psi}=\widehat{N}_b^y\otimes \Id\ket{\psi}$.
\end{proof}

This characterization allows us to show that centrally supported models can only locally dilate to centrally supported models:

\begin{proposition}\label{prop:centrallysupportediff}
    Let $S$ and $\wtd{S}$ be two quantum models. If $S$ is centrally supported and $S\succeq\wtd{S}$, then $\wtd{S}$ is
    centrally supported.
\end{proposition}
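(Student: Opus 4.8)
The plan is to use the characterization from \cref{prop:centrallysupported}: it suffices to produce, for each generator $\wtd{M}^x_a$ of the dilated model, an operator on $\wtd{H}_B$ implementing it on $\ket{\wtd{\psi}}$ (and symmetrically on the Bob side). Since $S$ is centrally supported, \cref{prop:centrallysupported} gives operators $\widehat{M}^x_a \in \msB(H_B)$ with $M^x_a \otimes \Id \ket{\psi} = \Id \otimes \widehat{M}^x_a \ket{\psi}$, and similarly $\widehat{N}^y_b \in \msB(H_A)$. I would then push these through the local dilation isometries $I_A : H_A \to \wtd{H}_A \otimes H_A^{aux}$ and $I_B : H_B \to \wtd{H}_B \otimes H_B^{aux}$ coming from $S \succeq \wtd{S}$, together with the auxiliary state $\ket{aux} \in H_A^{aux} \otimes H_B^{aux}$.

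The key computation is the following. Fix $x, a$. Apply $I_A \otimes I_B$ to the identity $M^x_a \otimes \Id \ket{\psi} = \Id \otimes \widehat{M}^x_a \ket{\psi}$. The left side is $(I_A M^x_a \otimes I_B)\ket{\psi}$; by a standard fact about local dilations (the defining equation of \cref{def:localdilation} with $y, b$ replaced by a "trivial" measurement, i.e. summing $N^y_b$ over $b$ to get $\Id$, or more directly the observation in the paragraph after \cref{def:localdilation} that the dilation relation for the full POVM family forces $(I_A \otimes I_B)(M^x_a \otimes \Id)\ket{\psi} = (\wtd{M}^x_a \otimes \Id \ket{\wtd{\psi}}) \otimes \ket{aux}$), the left side equals $(\wtd{M}^x_a \ket{\wtd{\psi}}) \otimes \ket{aux}$ sitting inside $\wtd{H}_A \otimes H_A^{aux} \otimes \wtd{H}_B \otimes H_B^{aux}$, suitably reassociated. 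The right side is $(I_A \otimes I_B)(\Id \otimes \widehat{M}^x_a)\ket{\psi}$; since $I_A$ does not see $\widehat{M}^x_a$ (which acts on $H_B$), this equals $(\Id_{\wtd{H}_A} \otimes \Id_{H_A^{aux}} \otimes (I_B \widehat{M}^x_a I_B^*))$ applied to $(I_A \otimes I_B)\ket{\psi}$, and $(I_A \otimes I_B)\ket{\psi} = \ket{\wtd{\psi}} \otimes \ket{aux}$ (the $x$-independent case of the dilation relation). Comparing, I get $(\wtd{M}^x_a \otimes \Id_{\wtd{H}_B})(\ket{\wtd{\psi}} \otimes \cdot) = (\Id_{\wtd{H}_A} \otimes E)(\ket{\wtd{\psi}} \otimes \cdot)$ where $E := I_B \widehat{M}^x_a I_B^*$ acts on $\wtd{H}_B \otimes H_B^{aux}$; then I trace out / compress against $\ket{aux}$ appropriately to extract an operator on $\wtd{H}_B$ alone implementing $\wtd{M}^x_a$ on $\ket{\wtd{\psi}}$.

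The main obstacle will be bookkeeping the tensor-factor reassociations cleanly — in particular, getting from an operator $E$ on $\wtd{H}_B \otimes H_B^{aux}$ implementing $\wtd{M}^x_a \otimes \Id$ on $\ket{\wtd{\psi}} \otimes \ket{aux}$ down to an operator on $\wtd{H}_B$ alone implementing $\wtd{M}^x_a$ on $\ket{\wtd{\psi}}$. This should follow from a partial-trace or slicing argument: write $\ket{aux} = \sum_j \mu_j \ket{u_j}\otimes\ket{v_j}$ in Schmidt form, sandwich $E$ between $\langle v_j|$ and $|v_{j'}\rangle$ to get operators $E_{jj'}$ on $\wtd{H}_B$, and check that $\widehat{\wtd{M}}^x_a := \mu_{j_0}^{-1}\sum_{j'} \mu_{j'}\langle v_{j_0}|E|v_{j'}\rangle$-style combination (for a fixed $j_0$) does the job on $\ket{\wtd{\psi}}$ using linear independence of the $\ket{v_{j'}}$; alternatively, and more slickly, invoke \cref{lemma:onlyif} directly: having shown that $\wtd{M}^x_a \otimes \Id_{\wtd{H}_B \otimes H_B^{aux}}$ acting on $\ket{\wtd{\psi}} \otimes \ket{aux}$ agrees with $\Id_{\wtd{H}_A} \otimes E$, and noting that $\ket{\wtd{\psi}} \otimes \ket{aux}$ has the same $\wtd{H}_A$-support as $\ket{\wtd{\psi}}$ tensored with the $H_A^{aux}$-support of $\ket{aux}$, I conclude that the support projection $\wtd{\Pi}_A$ of $\ket{\wtd{\psi}}$ commutes with $\wtd{M}^x_a$. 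Running the symmetric argument on the Bob side gives $[\wtd{\Pi}_B, \wtd{N}^y_b] = 0$, so $\wtd{S}$ is centrally supported, completing the proof.
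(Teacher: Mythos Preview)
Your proposal is correct and follows essentially the same route as the paper: push the operator $\widehat{M}^x_a$ (obtained from \cref{prop:centrallysupported} for $S$) through the isometries to get $E = I_B \widehat{M}^x_a I_B^*$ satisfying $(\wtd{M}^x_a \otimes \Id)\ket{\wtd{\psi},aux} = (\Id \otimes E)\ket{\wtd{\psi},aux}$, then apply \cref{lemma:onlyif} to the enlarged state $\ket{\wtd{\psi}}\otimes\ket{aux}$ (whose $A$-side support projection is $\wtd{\Pi}_A \otimes \Pi_A^{aux}$) to deduce $[\wtd{\Pi}_A,\wtd{M}^x_a]=0$. Your ``option 2'' is exactly the paper's argument; the partial-trace alternative you sketch is unnecessary and would be messier to carry out.
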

\begin{proof}
Let 
\begin{align*}
    &\model \text{ and }\\
    &\wtdmodel
\end{align*}
be two quantum models for a quantum correlation $p$ such that $S$ is centrally supported and $S\succeq
\wtd{S}$ with local isometries $I_A,I_B$ and vector state $\ket{aux}\in
H_A^{aux}\otimes H_B^{aux}$. Let $\Pi_A,\Pi_B$ and $\wtd{\Pi}_A,\wtd{\Pi}_B$ be the support projections of $S$ and $\wtd{S}$ respectively. Then
\begin{align*}
    S':=\big(\wtd{H}_A\otimes H_A^{aux},\wtd{H}_B\otimes H_B^{aux},\{\wtd{M}^x_a\otimes\Id_{H_A^{aux}}\},\{\wtd{N}^y_b\otimes\Id_{H_B^{aux}}\},\ket{\wtd{\psi},aux} \big)
\end{align*}
is a quantum model for $p$ with support projections $\wtd{\Pi}_A\otimes \Pi_A^{aux}$ and $\wtd{\Pi}_B\otimes \Pi_B^{aux}$ where $\Pi_A^{aux}\in\msB(H_A^{aux})$ and $\Pi_B^{aux}\in\msB(H_B^{aux})$ are the support projections of $\ket{aux}$. Since $S$ is centrally supported, for any $x\in X,a\in A$, by \Cref{prop:centrallysupported}, there is an operator $\widehat{M}^x_a\in\msB(H_B)$ such that $M^x_a\otimes \Id\ket{\psi}=\Id\otimes \widehat{M}^x_a\ket{\psi}$. Since $I_AI_A^*\otimes I_BI_B^*\ket{\wtd{\psi},aux}=\ket{\wtd{\psi},aux}$ and $I_AI_A^*\otimes\Id\geq I_AI_A^*\otimes I_BI_B^*$, we see that $I_AI_A^*\otimes\Id\ket{\wtd{\psi},aux}=\ket{\wtd{\psi},aux}$. Then $I_B\widehat{M}^x_aI_B^*$ is an operator in $\msB(\wtd{H}_B\otimes H_B^{aux})$ such that
\begin{align*}
    \Id\otimes I_B\widehat{M}^x_aI_B^*\ket{\wtd{\psi},aux}&=I_AI_A^*\otimes I_B\widehat{M}^x_aI_B^*\ket{\wtd{\psi},aux}\\
    &= (I_A\otimes I_B)(\Id\otimes \widehat{M}^x_a\ket{\psi})\\
    &= (I_A\otimes I_B)(M^x_a\otimes \Id\ket{\psi})\\
    &=(\wtd{M}^x_a\otimes\Id_{H_A^{aux}})\otimes \Id\ket{\wtd{\psi},aux}.
\end{align*}
By \Cref{lemma:onlyif} we see that $\wtd{\Pi}_A\otimes\Pi_A^{aux}$ commutes with $\wtd{M}^x_a\otimes\Id_{H_A^{aux}}$, and hence $[\wtd{\Pi}_A,\wtd{M}^x_a]=0$ for all $x\in X,a\in A$. Similarly, $[\wtd{\Pi}_B,\wtd{N}^y_b]=0$ for all $y\in Y,b\in B$. We conclude that $\wtd{S}$ is centrally supported.
\end{proof}

\begin{corollary}\label{cor:supportmodel}
    If $p$ is a self-test for a class of models $\mcC$, and $p$ has a centrally supported model in $\mcC$, then every ideal model of $p$ is centrally supported. If in addition, $\mcC$ is closed, then $p$ has a full-rank ideal model in $\mcC$.
\end{corollary}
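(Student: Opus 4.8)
The plan is to deduce everything from \Cref{prop:centrallysupportediff}, \Cref{lem:supportmodel}, and the closedness hypothesis, with essentially no new computation. First I would handle the first claim: let $\wtd{S}$ be any ideal model of $p$ in $\mcC$, and let $S_0 \in \mcC$ be a centrally supported model for $p$ (which exists by hypothesis). By the definition of a self-test, $S_0 \succeq \wtd{S}$, so \Cref{prop:centrallysupportediff} applies verbatim and shows that $\wtd{S}$ is centrally supported. Since $\wtd{S}$ was an arbitrary ideal model, this proves the first statement.

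For the second claim, assume in addition that $\mcC$ is closed, and fix an ideal model $\wtd{S}$, which by the first part is centrally supported. I would then pass to its support model $\wtd{S}'$: as noted in \Cref{SS:selftestabs}, $\wtd{S}'$ is a full-rank submodel of $\wtd{S}$, and since $\mcC$ is closed and $\wtd{S}'$ is (trivially, via identity unitaries) locally equivalent to a submodel of $\wtd{S} \in \mcC$, we get $\wtd{S}' \in \mcC$. To see that $\wtd{S}'$ is itself an ideal model, combine $\wtd{S} \succeq \wtd{S}'$ (from \Cref{lem:supportmodel}) with $S \succeq \wtd{S}$ for every model $S$ of $p$ in $\mcC$ (because $\wtd{S}$ is ideal); transitivity of $\succeq$ then gives $S \succeq \wtd{S}'$. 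Hence $\wtd{S}'$ is a full-rank ideal model of $p$ in $\mcC$.

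I do not expect a genuine obstacle here: the proof is a short assembly of already-established facts. The one place to be slightly careful is checking that the support model of the ideal model again lies in $\mcC$, which is precisely what the ``closed'' hypothesis is designed to deliver; everything else is transitivity of the preorder $\succeq$ together with the definitions of centrally supported and support model.
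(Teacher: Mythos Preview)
Your proposal is correct and follows essentially the same approach as the paper's proof: both apply \Cref{prop:centrallysupportediff} to the dilation from a given centrally supported model to conclude the ideal model is centrally supported, then pass to the support model and use closedness of $\mcC$ together with \Cref{lem:supportmodel} and transitivity of $\succeq$ to obtain a full-rank ideal model in $\mcC$. Your write-up is simply a slightly more explicit unpacking of the same chain of implications.
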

\begin{proof}
    Suppose $p$ has ideal model $\wtd{S}$. By \Cref{prop:centrallysupportediff}, $\wtd{S}$ must be centrally supported. If $\mcC$ is closed, then by \Cref{lemma:full-rank,lem:supportmodel}, 
    the support model $S'$ of $\wtd{S}$ is in $\mcC$ and is a full-rank ideal model for $p$.
\end{proof}
We prove stronger versions of \Cref{cor:supportmodel} for the class of all
quantum models in Propositions \ref{prop:alltoprojective} and \ref{prop:consequences}.

\begin{proposition}\label{prop:monomial}
Let 
\begin{align*}
   &\model \text{ and }\\
   &\wtdmodel
\end{align*}
be two quantum models for a correlation $p\in C_q(X,Y,A,B)$, with associated
representations $\pi_A\otimes\pi_B$ and $\wtd{\pi}_A\otimes\wtd{\pi}_B$
respectively. Suppose $S\succeq\widetilde{S}$ via local isometries $I_A$ and
$I_B$ and vector state $\ket{aux}\in H_A^{aux}\otimes H_B^{aux}$.  If 
$\wtd{S}$ is centrally supported, then
\begin{equation}\label{eq:monomial}
    (\wtd{\pi}_A(\alpha)\otimes\Id_{\wtd{H}_B}\ket{\wtd{\psi}})\otimes\ket{aux}=I_A\pi(\alpha)I_A^*\otimes\Id_{\wtd{H}_B\otimes H_B^{aux}}\ket{\wtd{\psi},aux} \text{ for all }\alpha\in\POVM^{X,A},  
\end{equation}
\begin{equation}\label{eq:monomial2}
    (\Id_{\wtd{H}_A}\otimes\wtd{\pi}_B(\beta)\ket{\wtd{\psi}})\otimes\ket{aux}=\Id_{\wtd{H}_A\otimes H_A^{aux}}\otimes I_B\pi(\beta)I_B^*\ket{\wtd{\psi},aux} \text{ for all }\beta\in\POVM^{Y,B}, 
\end{equation}
and $f_S=f_{\wtd{S}}$.
\end{proposition}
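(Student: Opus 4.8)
The plan is to establish \eqref{eq:monomial} first, obtain \eqref{eq:monomial2} by the symmetric argument, and then read off $f_S=f_{\wtd{S}}$ from the two identities. I will write $\ket{\wtd{\psi},aux}$ for the state obtained by adjoining $\ket{aux}$ to $\ket{\wtd{\psi}}$ and regrouping the factors as $(\wtd{H}_A\otimes H_A^{aux})\otimes(\wtd{H}_B\otimes H_B^{aux})$, and I set $\rho_A(\alpha):=I_A\pi_A(\alpha)I_A^*$ and $\hat{\pi}_A(\alpha):=\wtd{\pi}_A(\alpha)\otimes\Id_{H_A^{aux}}$. Since $I_A$ is an isometry (so $I_A^*I_A=\Id$), $\rho_A$ is a (non-unital) $*$-homomorphism $\POVM^{X,A}\to\msB(\wtd{H}_A\otimes H_A^{aux})$, and \eqref{eq:monomial} is exactly the assertion that $(\rho_A(\alpha)\otimes\Id)\ket{\wtd{\psi},aux}=(\hat{\pi}_A(\alpha)\otimes\Id)\ket{\wtd{\psi},aux}$ for all $\alpha$, where each $\Id$ acts on $\wtd{H}_B\otimes H_B^{aux}$. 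As preliminary remarks I record that summing the self-testing relation over all outcomes gives $(I_A\otimes I_B)\ket{\psi}=\ket{\wtd{\psi},aux}$, and that since $I_AI_A^*\otimes\Id$ dominates the projection $I_AI_A^*\otimes I_BI_B^*$, which fixes $\ket{\wtd{\psi},aux}$, we have $(I_AI_A^*\otimes\Id)\ket{\wtd{\psi},aux}=\ket{\wtd{\psi},aux}$ (and the $B$-side analogue).

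Let $\mcA_0$ be the set of $\alpha$ satisfying the displayed identity. It is a norm-closed linear subspace because $\rho_A$ and $\hat{\pi}_A$ are contractive, so it suffices to check that $\mcA_0$ contains the generators $m^x_a$ and is closed under products; then $\mcA_0$ is a closed subalgebra containing a generating set, hence all of $\POVM^{X,A}$. For the generators, $(I_AM^x_aI_A^*\otimes\Id)\ket{\wtd{\psi},aux}=(I_A\otimes I_B)(M^x_a\otimes\Id)\ket{\psi}$ using $I_A^*I_A=\Id$ and $(I_A\otimes I_B)\ket{\psi}=\ket{\wtd{\psi},aux}$; summing the self-testing relation over $b$ at a fixed $y$ rewrites the right-hand side as $(\wtd{M}^x_a\otimes\Id\,\ket{\wtd{\psi}})\otimes\ket{aux}=(\hat{\pi}_A(m^x_a)\otimes\Id)\ket{\wtd{\psi},aux}$, so $m^x_a\in\mcA_0$.

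The main obstacle is closure of $\mcA_0$ under products, and it is here that the hypothesis that $\wtd{S}$ is centrally supported is essential: a product $\rho_A(\alpha)\rho_A(\beta)$ does not act factorwise on $\ket{\wtd{\psi},aux}$, so I must rewrite the inner factor as an operator on the $B$-side. Because $\wtd{S}$ is centrally supported, its support projection $\wtd{\Pi}_A$ commutes with every $\wtd{M}^x_a$, hence with $\wtd{\pi}_A(\POVM^{X,A})$, so $\wtd{\pi}_A(\beta)$ restricts to an operator on $\wtd{\Pi}_A\wtd{H}_A$; since $\ket{\wtd{\psi}}$ is full-rank in $\wtd{\Pi}_A\wtd{H}_A\otimes\wtd{\Pi}_B\wtd{H}_B$, \Cref{lemma:if} furnishes $\wtd{E}_\beta\in\msB(\wtd{H}_B)$ with $\wtd{\pi}_A(\beta)\otimes\Id\,\ket{\wtd{\psi}}=\Id\otimes\wtd{E}_\beta\,\ket{\wtd{\psi}}$, equivalently $(\hat{\pi}_A(\beta)\otimes\Id)\ket{\wtd{\psi},aux}=(\Id\otimes F_\beta)\ket{\wtd{\psi},aux}$ with $F_\beta:=\wtd{E}_\beta\otimes\Id_{H_B^{aux}}$. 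Then for $\alpha,\beta\in\mcA_0$,
\begin{align*}
    (\rho_A(\alpha\beta)\otimes\Id)\ket{\wtd{\psi},aux}
      &=(\rho_A(\alpha)\otimes\Id)(\hat{\pi}_A(\beta)\otimes\Id)\ket{\wtd{\psi},aux}
      =(\rho_A(\alpha)\otimes\Id)(\Id\otimes F_\beta)\ket{\wtd{\psi},aux}\\
      &=(\Id\otimes F_\beta)(\rho_A(\alpha)\otimes\Id)\ket{\wtd{\psi},aux}
      =(\Id\otimes F_\beta)(\hat{\pi}_A(\alpha)\otimes\Id)\ket{\wtd{\psi},aux},
\end{align*}
using in order that $\rho_A$ is a homomorphism, that $\beta\in\mcA_0$, the previous display, that operators on complementary tensor legs commute, and that $\alpha\in\mcA_0$; running the identical chain with $\rho_A$ replaced by $\hat{\pi}_A$ yields the same final vector, so $\alpha\beta\in\mcA_0$. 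This proves \eqref{eq:monomial}, and \eqref{eq:monomial2} follows verbatim with the roles of $A$ and $B$ exchanged, now using $[\wtd{\Pi}_B,\wtd{N}^y_b]=0$.

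Finally, for $f_S=f_{\wtd{S}}$ it is enough to match the two functionals on simple tensors $\alpha\otimes\beta$, since these span a norm-dense subspace and $f_S,f_{\wtd{S}}$ are bounded. Combining \eqref{eq:monomial} and \eqref{eq:monomial2} (again via the fact that Alice-side and Bob-side operators commute, with $\rho_B(\beta):=I_B\pi_B(\beta)I_B^*$) I would obtain
\begin{equation*}
    (I_A\otimes I_B)(\pi_A(\alpha)\otimes\pi_B(\beta))\ket{\psi}=(\rho_A(\alpha)\otimes\rho_B(\beta))\ket{\wtd{\psi},aux}=\big((\wtd{\pi}_A(\alpha)\otimes\wtd{\pi}_B(\beta))\ket{\wtd{\psi}}\big)\otimes\ket{aux};
\end{equation*}
pairing both sides with $(I_A\otimes I_B)\ket{\psi}=\ket{\wtd{\psi}}\otimes\ket{aux}$ and using $I_A^*I_A=I_B^*I_B=\Id$ and $\braket{aux|aux}=1$ gives $f_S(\alpha\otimes\beta)=\braket{\psi|\pi_A(\alpha)\otimes\pi_B(\beta)|\psi}=\braket{\wtd{\psi}|\wtd{\pi}_A(\alpha)\otimes\wtd{\pi}_B(\beta)|\wtd{\psi}}=f_{\wtd{S}}(\alpha\otimes\beta)$, and the claim follows by density.
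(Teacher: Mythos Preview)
Your proof is correct and follows essentially the same approach as the paper. The paper organizes the argument as induction on the degree of monomials $\alpha_1\cdots\alpha_k$ in the generators, while you package the same idea as showing that the set $\mcA_0$ of elements satisfying \eqref{eq:monomial} is a closed subalgebra containing the generators; in both cases the heart of the inductive/multiplicative step is identical --- use central support of $\wtd{S}$ (via \Cref{lemma:if}, exactly as in the proof of \Cref{prop:centrallysupported}) to transfer the inner factor to the $B$-side, commute it past the $A$-side operator, apply the hypothesis, and transfer back. Your observation that $\rho_A=I_A\pi_A(\cdot)I_A^*$ is a genuine homomorphism (from $I_A^*I_A=\Id$) streamlines the bookkeeping slightly, and your derivation of $f_S=f_{\wtd{S}}$ matches the paper's verbatim.
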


\begin{proof}
\Cref{eq:monomial} holds for all monomials $\alpha=\alpha_1\cdots
\alpha_k\in \POVM^{X,A}$, 
$\alpha_i\in\{e^x_a:x\in X,a\in A \}$ by induction on the monomial degree $k \in \N$. 
Indeed, the cases $k=0,1$ follow straight from \cref{def:localdilation}.
Suppose that \Cref{eq:monomial} holds for all monomials in $\POVM^{X,A}$ of
degree $k$. For any monomial $\alpha=\alpha_1\cdots\alpha_k\alpha_{k+1}$ in
$\POVM^{X,A}$ of degree $k+1$, let $\alpha':=\alpha_1\cdots\alpha_k$. 
Since $\wtd{S}$ is centrally supported, by \cref{prop:centrallysupported} there is an
$\wtd{F}\in\msB(\wtd{H}_B)$ such that
$\wtd{\pi}_A(\alpha_{k+1}) \otimes\Id\ket{\wtd{\psi}}=\Id\otimes\wtd{F}\ket{\wtd{\psi}}$. 
Thus by the inductive hypothesis,
\begin{align*}
   \big(\wtd{\pi}_A(\alpha)\otimes\Id_{\wtd{H}_B}\ket{\wtd{\psi}}\big)\otimes\ket{aux} 
    & = \big(\wtd{\pi}_A(\alpha')\wtd{\pi}_A(\alpha_{k+1})\otimes\Id_{\wtd{H}_B}\ket{\wtd{\psi}}\big)\otimes\ket{aux} \\
    & =\big(\wtd{\pi}_A(\alpha')\otimes \wtd{F}\ket{\wtd{\psi}}\big)\otimes\ket{aux} \\
   & = I_A\pi_A(\alpha')I_A^*\otimes \wtd{F}\otimes\Id_{H_B^{aux}}\ket{\wtd{\psi},aux} \\
   & =\big(I_A\pi_A(\alpha')I_A^*(\wtd{\pi}(\alpha_{k+1})\otimes\Id_{H_A^{aux}})\big)\otimes \Id_{\wtd{H}_B\otimes H_B^{aux}}\ket{\wtd{\psi},aux} \\
   & =\big(I_A\pi_A(\alpha')I_A^*I_A\pi_A(\alpha_{k+1})I_A^*\big)\otimes \Id_{\wtd{H}_B\otimes H_B^{aux}}\ket{\wtd{\psi},aux} \\
   & =I_A\pi_A(\alpha)I_A^*\otimes\Id_{\wtd{H}_B\otimes H_B^{aux}}\ket{\wtd{\psi},aux}.
\end{align*}
We conclude that \Cref{eq:monomial} holds for all $\alpha$. The proof of \Cref{eq:monomial2} is similar.
Hence, for any $\alpha\in \POVM^{X,A}$ and $ \beta\in\POVM^{Y,B}$, 
\begin{align*}
    f_{\wtd{S}}(\alpha\otimes\beta)&=\bra{\wtd{\psi}}\wtd{\pi}_A(\alpha)\otimes\wtd{\pi}_B(\beta)\ket{\wtd{\psi}}\\
    &=\bra{\wtd{\psi},aux}\wtd{\pi}_A(\alpha)\otimes\wtd{\pi}_B(\beta)\otimes\Id_{H_A^{aux}\otimes H_B^{aux}}\ket{\wtd{\psi},aux}\\
    &=\bra{\wtd{\psi},aux} I_A\pi_A(\alpha)I_A^*\otimes I_B\pi_B(\beta)I_B^*\ket{\wtd{\psi},aux}\\
    &=\bra{\psi}\pi_A(\alpha)\otimes\pi_B(\beta)\ket{\psi}\\
    &=f_S(\alpha\otimes\beta),
\end{align*}
which implies $f_S=f_{\wtd{S}}$.
\end{proof}

\begin{remark}\label{rmk:Scentral}
Together with \Cref{prop:centrallysupportediff}, \Cref{prop:monomial} implies that if $S$ is centrally supported and $S
    \succeq \wtd{S}$, then $f_S = f_{\wtd{S}}$. Although we won't use this fact, it is also possible to prove this by showing
    that \Cref{eq:monomial} and \Cref{eq:monomial2} hold when $S$ is centrally supported
    and $S \succeq \wtd{S}$.
\end{remark}

We can now prove the ``only if" direction of \Cref{thm:mainresult1} (with the weaker hypotheses mentioned after the theorem statement). 
\begin{proposition}\label{prop:onlyifdirection}
Let $p\in C_q(X,Y,A,B)$. Let $\mcC$ be a class of quantum models that is closed and contains a full-rank model for $p$, and let $\mcS=\{f_S:S\in\mcC\}$. If $p$ is a self-test for $\mcC$ then $p$ is an abstract state self-test for $\mcS$.
\end{proposition}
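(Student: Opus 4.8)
The plan is to reduce the statement to the two structural results already in place: \Cref{cor:supportmodel}, which lets us replace an ideal model by a full-rank one, and \Cref{prop:monomial}, which says that a local dilation of a centrally supported model induces the same abstract state.

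First I would note that the hypotheses place us in the setting of \Cref{cor:supportmodel}. Since $\mcC$ contains a full-rank model for $p$ and every full-rank model is centrally supported, $\mcC$ contains a centrally supported model for $p$; combined with the assumption that $p$ is a self-test for $\mcC$ and that $\mcC$ is closed, \Cref{cor:supportmodel} produces a full-rank ideal model $\wtd{S}\in\mcC$ for $p$, which is in particular centrally supported. The induced state $f_{\wtd{S}}$ then lies in $\mcS$ and has correlation $p$ by \cref{eqn:state_to_correlation}, so a realizing state in $\mcS$ exists.

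Next I would establish uniqueness. Let $f\in\mcS$ be an arbitrary state with correlation $p$. By definition of $\mcS$ there is a model $S\in\mcC$ with $f=f_S$, and since $f_S(m^x_a\otimes n^y_b)=p(a,b|x,y)$ the model $S$ has correlation $p$. Because $\wtd{S}$ is an ideal model for $p$ in $\mcC$, the definition of self-test gives $S\succeq\wtd{S}$; and as $\wtd{S}$ is centrally supported, \Cref{prop:monomial} forces $f_S=f_{\wtd{S}}$, i.e.\ $f=f_{\wtd{S}}$. Thus $f_{\wtd{S}}$ is the unique state in $\mcS$ with correlation $p$, which is exactly the statement that $p$ is an abstract state self-test for $\mcS$.

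Once \Cref{cor:supportmodel} and \Cref{prop:monomial} are invoked this is essentially bookkeeping, so I do not expect a serious obstacle. The one point to be careful about is the passage from an arbitrary realizing state $f\in\mcS$ to a model $S\in\mcC$ actually having correlation $p$ --- needed so that the self-testing hypothesis applies and yields $S\succeq\wtd{S}$ --- together with checking that the hypotheses of \Cref{prop:monomial} are met, which reduces to knowing $\wtd{S}$ is centrally supported, something already delivered by \Cref{cor:supportmodel}. Note that extremality of $p$ plays no role in this direction, matching the weakened hypotheses recorded after \Cref{thm:mainresult1}.
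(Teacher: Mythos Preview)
Your proposal is correct and follows essentially the same route as the paper's proof: both establish that the ideal model is centrally supported and then apply \Cref{prop:monomial} to conclude uniqueness. The only cosmetic difference is that you invoke \Cref{cor:supportmodel} to obtain a full-rank (hence centrally supported) ideal model, whereas the paper works with an arbitrary ideal model $\wtd{S}$ and appeals directly to \Cref{prop:centrallysupportediff} (using the given full-rank model $S\succeq\wtd{S}$) to see that $\wtd{S}$ is centrally supported---but since \Cref{cor:supportmodel} is itself proved via \Cref{prop:centrallysupportediff}, this is the same argument packaged slightly differently.
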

\begin{proof}
    Let $S$ be a full-rank model in $\mcC$ for $p$. Suppose $p$ is a self-test for
    $\mcC$, and let $\wtd{S}\in\mcC$ be an ideal model for $p$, so  $S\succeq
    \wtd{S}$. Full-rank models are centrally supported, so
$\wtd{S}$ is centrally supported by \Cref{prop:centrallysupportediff}. Then \Cref{prop:monomial} implies  that $f_{\wtd{S}}$ is the unique state in $\mcS$ achieving $p$.
\end{proof}

\subsection{Abstract state self-tests are self-tests}\label{sec:unique_state_implies_self-test}

Before we can establish the ``if'' direction of \cref{thm:mainresult1} we need
to review some basic facts about the representation theory of $C^*$-algebras.
For more details we refer the reader to \cite{B17}.

\begin{lemma}\label{lemma:irrep}
Let $\pi_A:\mcA\arr \msB(H_A)$ and $\wtd{\pi}_A:\mcA\arr \msB(\wtd{H}_A)$ be two representations of a $C^*$-algebra $\mcA$, and let $\pi_B:\mcB\arr \msB(H_B)$ and $\wtd{\pi}_B:\mcB\arr \msB(\wtd{H}_B)$ be two representations of a $C^*$-algebra $\mcB$.
\begin{enumerate}
    \item $\pi_A\otimes\pi_B$ is irreducible if and only if $\pi_A$ and $\pi_B$ are irreducible.

    \item Suppose $\pi_A$ and $\pi_B$ are irreducible. Let $\ket{\psi} $ and
        $\ket{\psi'}$ be two vector states in $H_A\otimes H_B$. If $(\pi_A\otimes
        \pi_B,H_A\otimes H_B,\ket{\psi})$ and $(\pi_A\otimes\pi_B,H_A\otimes
        H_B,\ket{\psi'})$ are GNS representations for the same state, then
        $\ket{\psi}\bra{\psi}=\ket{\psi'}\bra{\psi'}$ (in other words, the states
        $\ket{\psi}$ and $\ket{\psi'}$ only differ by a complex phase). 

     \item Suppose $\pi_A,\wtd{\pi}_A,\pi_B$ and $\wtd{\pi}_B$ are irreducible. Then
        $\pi_A\otimes\pi_B\cong \wtd{\pi}_A\otimes\wtd{\pi}_B$ if and only if
        $\pi_A\cong\wtd{\pi}_A$ and $\pi_B\cong\wtd{\pi}_B$.

\end{enumerate}
\end{lemma}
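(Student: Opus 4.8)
The plan is to derive all three parts from two standard facts recalled in \cref{sec:prelims}: Schur's lemma, and the uniqueness (up to unitary equivalence) of the GNS representation of a state.

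For part (1), I would handle ``$\Leftarrow$'' by contraposition: if $\pi_A$ is not irreducible, choose a non-scalar $T\in\pi_A(\mcA)'$; then $T\otimes\Id_{H_B}$ is a non-scalar operator commuting with every $\pi_A(a)\otimes\pi_B(b)$, so $\pi_A\otimes\pi_B$ is reducible, and symmetrically in $\pi_B$. For ``$\Rightarrow$'', note that the $*$-algebra generated by $\{\pi_A(a)\otimes\pi_B(b)\}$ coincides with the one generated by $\pi_A(\mcA)\otimes\Id$ and $\Id\otimes\pi_B(\mcB)$ together, so its commutant is $(\pi_A(\mcA)\otimes\Id)'\cap(\Id\otimes\pi_B(\mcB))'$; using Schur's lemma together with the commutation theorem for von Neumann algebras, the first intersectand is $\Id\otimes\msB(H_B)$ and the second is $\msB(H_A)\otimes\Id$, whose intersection is $\C\Id$, so $\pi_A\otimes\pi_B$ is irreducible by Schur again. (In the finite-dimensional case needed later this is just the double commutant theorem; alternatively one can avoid the commutation theorem entirely by taking a nonzero vector in an invariant subspace and using its Schmidt decomposition together with cyclicity of $\pi_A$ and $\pi_B$ to show that the invariant subspace is everything.)

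For part (2), part (1) gives that $\pi_A\otimes\pi_B$ is irreducible, so $\ket{\psi}$ and $\ket{\psi'}$ are automatically cyclic. Because they are cyclic vectors for GNS representations of the same state, there is a unitary $U$ on $H_A\otimes H_B$ with $U(\pi_A\otimes\pi_B)(x)U^{*}=(\pi_A\otimes\pi_B)(x)$ for all $x$ and $U\ket{\psi}=\ket{\psi'}$. Then $U$ lies in the commutant of the image of $\pi_A\otimes\pi_B$, which is $\C\Id$ by Schur, so $U$ is a scalar of modulus one and $\ket{\psi'}\bra{\psi'}=U\ket{\psi}\bra{\psi}U^{*}=\ket{\psi}\bra{\psi}$.

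For part (3), ``$\Leftarrow$'' is immediate, since conjugating by $U_A\otimes U_B$ carries $\pi_A\otimes\pi_B$ to $\wtd\pi_A\otimes\wtd\pi_B$. The substance is in ``$\Rightarrow$'', and this is the step I expect to take the most care. Given a unitary $W:H_A\otimes H_B\to\wtd H_A\otimes\wtd H_B$ intertwining $\pi_A\otimes\pi_B$ and $\wtd\pi_A\otimes\wtd\pi_B$, I would specialize the intertwining relation to $b=1_{\mcB}$ to get $W(\pi_A(a)\otimes\Id_{H_B})W^{*}=\wtd\pi_A(a)\otimes\Id_{\wtd H_B}$ for all $a$; that is, $W$ is a unitary equivalence between the amplifications $\pi_A\otimes\Id_{H_B}$ and $\wtd\pi_A\otimes\Id_{\wtd H_B}$ of $\mcA$. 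For a unit vector $e\in H_B$, the subspace $H_A\otimes\C e$ is invariant under $\pi_A\otimes\Id_{H_B}$ with restriction unitarily equivalent to the irreducible $\pi_A$, so $W(H_A\otimes\C e)$ is an irreducible invariant subspace for $\wtd\pi_A\otimes\Id_{\wtd H_B}$. Writing $\wtd H_A\otimes\wtd H_B$ as a direct sum of copies of $\wtd H_A$ indexed by an orthonormal basis of $\wtd H_B$, I would compose the inclusion of such an invariant subspace with a coordinate projection and invoke Schur's lemma to conclude it is unitarily equivalent to $\wtd\pi_A$; hence $\pi_A\cong\wtd\pi_A$. Interchanging the roles of $\mcA$ and $\mcB$ (specializing instead to $a=1_{\mcA}$) yields $\pi_B\cong\wtd\pi_B$. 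The one genuinely new idea here is that collapsing a tensor factor realizes $\pi_A\otimes\pi_B$ as an amplification of $\pi_A$, so that a unitary equivalence of the tensor products descends to one of these amplifications, and therefore, by uniqueness of the irreducible summand, to one of $\pi_A$ itself.
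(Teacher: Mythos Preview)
Your proposal is correct and follows essentially the same route as the paper: part (1) via Schur's lemma and the tensor commutant identity (the paper simply cites \cite[III.4.5.8]{B17} for $(\pi_A(\mcA)\otimes\pi_B(\mcB))'=\pi_A(\mcA)'\otimes\pi_B(\mcB)'$, which is what your commutation-theorem computation unpacks), part (2) via uniqueness of GNS plus irreducibility, and part (3) by restricting the unitary equivalence to the $\mcA$-action to compare the amplifications $\pi_A\otimes\Id_{H_B}$ and $\wtd\pi_A\otimes\Id_{\wtd H_B}$. Your write-up for part (3) supplies more detail on the Schur step than the paper's one-line ``this requires $\pi_A\cong\wtd\pi_A$ by Schur's lemma,'' but the idea is the same.
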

\begin{proof}
    Part (1) is an immediate consequence of Schur's lemma, along with the fact
    that $\big(\pi_A(\mcA)\otimes\pi_B(\mcB)\big)'=\big(\pi_A(\mcA)\big)'\otimes
    \big(\pi_B(\mcB)\big)'$ \cite[III.4.5.8]{B17}.

For part (2), since GNS representations of a state are unitarily equivalent, there exists a unitary operator $U$ acting on $H_A\otimes H_B$ such that 
\begin{align}
    &U\ket{\psi}=\ket{\psi'},\text{ and } \label{eq:states}\\
    &U\pi_A(\alpha)\otimes\pi_B(\beta)U^*=\pi_A(\alpha)\otimes\pi_B(\beta)\label{eq:reps} \text{ for all }\alpha\in\mcA,\beta\in\mcB.
\end{align}
Since $\pi_A$ and $\pi_B$ are irreducible, part (1) and \Cref{eq:reps} implies that $U\in\big(\pi_A(\mcA)\otimes\pi_B(\mcB)\big)'=\C\Id$. Then part (2) follows from \Cref{eq:states}.

For part (3), if $\pi_A \otimes \pi_B \iso \wtd{\pi}_A \otimes \wtd{\pi}_B$ as
an $\mcA \otimes \mcB$-module, then $\pi_A \otimes \Id_{H_B}
\iso \wtd{\pi}_A \otimes \Id_{\wtd{H}_B}$ as an $\mcA$-module. But this
requires $\pi_A \iso \wtd{\pi}_A$ by Schur's lemma. 
\end{proof}

We note that \Cref{lemma:irrep} and its proof hold for representations of $C^*$-algebras on infinite-dimensional Hilbert spaces.


Now we are ready to prove the ``if'' direction of \cref{thm:mainresult1}
(with, again, the weaker hypotheses mentioned after the theorem statement).

\begin{theorem}\label{thm:uniquestate}
Let $p\in C_q(X,Y,A,B)$ be an extreme point in $C_q$, let $\mcC$ be a class of
quantum models that is closed, and let $\mcS=\{f_S:S\in\mcC\}$.
If $p$ is an abstract state self-test for $\mcS$, 
then $p$ is a self-test for $\mcC$, and furthermore $p$ has an ideal model
$\wtd{S}$ such that the associated representation
$\wtd{\pi}_A\otimes\wtd{\pi}_B$ is irreducible.
\end{theorem}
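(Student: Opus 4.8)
The plan is to prove the statement in two stages: first construct a model $\wtd{S}\in\mcC$ for $p$ whose associated representation $\wtd{\pi}_A\otimes\wtd{\pi}_B$ is irreducible, and then show that every model $S'\in\mcC$ for $p$ satisfies $S'\succeq\wtd{S}$, so that $\wtd{S}$ is an ideal model. For the first stage, the hypothesis supplies a (unique) state $f\in\mcS$ with correlation $p$, hence a model $S\in\mcC$ for $p$ with $f_S=f$. Writing its associated representation as $\pi_A\otimes\pi_B$ on $H_A\otimes H_B$, I would decompose $\pi_A=\bigoplus_i\pi_A^{(i)}$ and $\pi_B=\bigoplus_j\pi_B^{(j)}$ into irreducible subrepresentations on $H_A=\bigoplus_iH_A^{(i)}$ and $H_B=\bigoplus_jH_B^{(j)}$, and write $\ket{\psi}=\sum_{ij}\ket{\psi_{ij}}$ with $\ket{\psi_{ij}}\in H_A^{(i)}\otimes H_B^{(j)}$. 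Each nonzero $\ket{\psi_{ij}}$ normalizes to a model $S_{ij}$ with irreducible associated representation $\pi_A^{(i)}\otimes\pi_B^{(j)}$ by \Cref{lemma:irrep}(1). Since the projections onto $H_A^{(i)}$ and $H_B^{(j)}$ commute with all measurement operators, $S_{ij}$ is a submodel of $S$ and hence lies in $\mcC$ because $\mcC$ is closed; and since the blocks $H_A^{(i)}\otimes H_B^{(j)}$ are orthogonal and invariant, $p=\sum_{ij}\norm{\psi_{ij}}^2 p_{ij}$, where $p_{ij}\in C_q$ is the correlation of $S_{ij}$. As $p$ is an extreme point of $C_q$, every $p_{ij}$ with $\ket{\psi_{ij}}\neq 0$ equals $p$, and fixing one such pair yields the desired candidate $\wtd{S}:=S_{ij}$.

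For the second stage, let $S'\in\mcC$ be any model for $p$, with associated representation $\pi'_A\otimes\pi'_B$ on $H'_A\otimes H'_B$ and vector state $\ket{\psi'}$; uniqueness gives $f_{S'}=f=f_{\wtd{S}}$. Since $\wtd{\pi}_A\otimes\wtd{\pi}_B$ is irreducible, $\ket{\wtd{\psi}}$ is cyclic for it, so $(\wtd{\pi}_A\otimes\wtd{\pi}_B,\wtd{H}_A\otimes\wtd{H}_B,\ket{\wtd{\psi}})$ is a GNS representation of $f$; likewise the cyclic subspace $K\subseteq H'_A\otimes H'_B$ generated by $\ket{\psi'}$ carries a GNS representation of $f$, so $K$ is an irreducible subrepresentation of $\pi'_A\otimes\pi'_B$ isomorphic to $\wtd{\pi}_A\otimes\wtd{\pi}_B$. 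Decomposing $\pi'_A$ and $\pi'_B$ into isotypic components and using that $\sigma\otimes\tau\cong\sigma'\otimes\tau'$ forces $\sigma\cong\sigma'$ and $\tau\cong\tau'$ (\Cref{lemma:irrep}(3)), the irreducible constituents of $\pi'_A\otimes\pi'_B$ attached to distinct pairs of isotypes are pairwise inequivalent, so $K$ must lie inside a single block $H'_{A,0}\otimes H'_{B,0}$, where $H'_{A,0}$ (resp.\ $H'_{B,0}$) is the isotypic component of $\pi'_A$ (resp.\ $\pi'_B$) equivalent to $\wtd{\pi}_A$ (resp.\ $\wtd{\pi}_B$). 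Under identifications $H'_{A,0}\cong\wtd{H}_A\otimes H_A^{aux}$ and $H'_{B,0}\cong\wtd{H}_B\otimes H_B^{aux}$ the measurement operators act on this block as $\wtd{M}^x_a\otimes\Id$ and $\wtd{N}^y_b\otimes\Id$, and in particular $\ket{\psi'}\in H'_{A,0}\otimes H'_{B,0}$.

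To finish, I would produce the local isometries. Because $\wtd{\pi}_A(\POVM^{X,A})=\msB(\wtd{H}_A)$ and $\wtd{\pi}_B(\POVM^{Y,B})=\msB(\wtd{H}_B)$ (finite-dimensional irreducibility), the equality $f_{S'}=f_{\wtd{S}}$ forces the reduced density operator of $\ket{\psi'}$ on the $\wtd{H}_A\otimes\wtd{H}_B$ factor of $H'_{A,0}\otimes H'_{B,0}$ to be $\ket{\wtd{\psi}}\bra{\wtd{\psi}}$; since it is pure, $\ket{\psi'}=\ket{\wtd{\psi}}\otimes\ket{aux}$ for a unit vector $\ket{aux}\in H_A^{aux}\otimes H_B^{aux}$. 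After enlarging $H_A^{aux}$ and $H_B^{aux}$ if necessary, I would take $I_A\colon H'_A\to\wtd{H}_A\otimes H_A^{aux}$ to be the natural identification on $H'_{A,0}$ and an arbitrary isometry with orthogonal image on $(H'_{A,0})^{\perp}$, and similarly $I_B$. Since $\ket{\psi'}$ and every $({M'}^x_a\otimes{N'}^y_b)\ket{\psi'}$ lie in the invariant block $H'_{A,0}\otimes H'_{B,0}$, on which ${M'}^x_a$ and ${N'}^y_b$ act as $\wtd{M}^x_a\otimes\Id$ and $\wtd{N}^y_b\otimes\Id$, a short computation gives $(I_A\otimes I_B)({M'}^x_a\otimes{N'}^y_b)\ket{\psi'}=(\wtd{M}^x_a\otimes\wtd{N}^y_b\ket{\wtd{\psi}})\otimes\ket{aux}$, so $S'\succeq\wtd{S}$.

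The hard part is the second stage, and within it the step of upgrading the ``global'' intertwiner coming from GNS uniqueness into genuinely \emph{local} isometries. The crucial enabling facts are that the cyclic subspace generated by $\ket{\psi'}$ is irreducible and isomorphic to $\wtd{\pi}_A\otimes\wtd{\pi}_B$, that this confines $\ket{\psi'}$ to a single isotypic block of $\pi'_A\otimes\pi'_B$ (via \Cref{lemma:irrep}(3)), and that a pure reduced density operator forces a product vector; together these deliver both the auxiliary state $\ket{aux}$ and the tensor-product form of $I_A$ and $I_B$.
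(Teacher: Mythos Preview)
Your proof is correct and follows essentially the same strategy as the paper: decompose the associated representation into irreducibles, use extremality of $p$ together with closedness of $\mcC$ to see that each irreducible block is a model in $\mcC$ for $p$, invoke uniqueness of the abstract state to identify all blocks with the GNS representation of $f$, and then build the local isometries. The ingredients (double-commutant/isotypic decomposition, \Cref{lemma:irrep}, GNS uniqueness) are the same.

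Where you differ from the paper is in the endgame. The paper works with the \emph{full} irreducible decomposition of the arbitrary model $S$, shows via \Cref{lemma:irrep}(2),(3) that every irreducible piece $(\pi_i^A\otimes\pi_j^B,\ket{\psi_{ij}})$ is unitarily equivalent to $(\wtd{\pi}_A\otimes\wtd{\pi}_B,\ket{\wtd{\psi}})$ (absorbing phases into the coefficients $\lambda_{ij}$), and then assembles the local isometries block by block. You instead observe that the cyclic subspace $K$ generated by $\ket{\psi'}$ is already irreducible (being a GNS representation of $f$), so \Cref{lemma:irrep}(3) confines $\ket{\psi'}$ to a \emph{single} isotypic block $H'_{A,0}\otimes H'_{B,0}\cong(\wtd{H}_A\otimes\wtd{H}_B)\otimes(H_A^{aux}\otimes H_B^{aux})$; then the surjectivity of $\wtd{\pi}_A\otimes\wtd{\pi}_B$ onto $\msB(\wtd{H}_A\otimes\wtd{H}_B)$ (finite-dimensional irreducibility) forces the reduced density on $\wtd{H}_A\otimes\wtd{H}_B$ to equal the pure state $\ket{\wtd{\psi}}\bra{\wtd{\psi}}$, whence $\ket{\psi'}=\ket{\wtd{\psi}}\otimes\ket{aux}$. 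This sidesteps the phase bookkeeping and makes the isometry construction a one-liner. The paper's block-by-block argument, on the other hand, makes it transparent that \emph{every} irreducible constituent of $S$ that meets $\ket{\psi}$ is already equivalent to $\wtd{S}$, which is slightly more information than is needed for the conclusion.
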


\begin{proof} 
Let $f$ be the unique state in $\mcS$ achieving $p$. 
Given a quantum model $S=(H_A,H_B,\{M^x_a\},\{N^y_b\},\ket{\psi})$ in $\mcC$
for $p$ with the associated representation $\pi_A\otimes\pi_B$, we can
decompose the local Hilbert spaces and associated representations with the
double-commutant theorem to get
\begin{align*}
    &H_A=\bigoplus_i H^A_i\otimes K^A_i,\quad \pi_A(\POVM^{X,A})=\bigoplus_i\msB(H^A_i)\otimes\Id_{K^A_i}, \text{ and} \\
    &H_B=\bigoplus_j H^B_j\otimes K^B_j,\quad \pi_B(\POVM^{Y,B})=\bigoplus_j\msB(H^B_j)\otimes\Id_{K^B_j}
\end{align*}
for some Hilbert spaces $H^A_i, K^A_i, H^B_j, K^B_j$. 
 Choose orthonormal bases $\{\ket{\alpha^k_i}\}_{k}$ and $\{\ket{\beta_j^\ell}\}_{\ell}$ for $K_i^A$ and $K_j^B$ respectively. Then
 \begin{align*}
     \ket{\psi}=\bigoplus_{i,j}\left(\sum_{k,\ell}\ket{\psi_{ij}^{kl}}\otimes \lambda_{ij}^{k\ell}\ket{\alpha_i^k,\beta_j^\ell}\right)\in \bigoplus_{i,j} H_i^A\otimes H_j^B\otimes K_i^A\otimes K_j^B,
 \end{align*}
 where $\sum_{i,j}\sum_{k,\ell}\abs{\lambda_{ij}^{k\ell}}^2=1$ and every $\ket{\psi_{ij}^{k\ell}}$ is a vector state in $H_i^A\otimes H_j^B$. Moreover, $\pi_A$ and $\pi_B$ decompose as
 \begin{equation*}
     \pi_A(\alpha)=\bigoplus_{i,k}\pi_i^A(\alpha)\otimes \ket{\alpha_i^k}\bra{\alpha_i^k} \text{ and }
     \pi_B(\beta)=\bigoplus_{j,\ell}\pi_j^B(\beta)\otimes \ket{\beta_j^\ell}\bra{\beta_j^\ell},
 \end{equation*}
 where $\pi_i^A$ and $\pi_j^B$ are irreducible representations.
 
For convenience, let 
$\Lambda:=\{(i,j,k,\ell):\lambda_{ij}^{k\ell}\neq 0\}$ be the set of indices for which the coefficient
$\lambda_{ij}^{k\ell}$ is non-zero. It is clear that $\Lambda$ is
non-empty. Now, for every $(i,j,k,\ell)\in\Lambda$, 
$S_{ij}^{kl}:=(H_i^A,H_j^B,\{\pi_i^A(m^x_a)\},\{\pi_j^B(n^y_b)\},\ket{\psi_{ij}^{k\ell}})$
is a submodel 
of $S$. Since $\mcC$ is closed, 
 we see that $S_{ij}^{k\ell}\in\mcC$.
And if $p_{ij}^{k\ell} \in C_q$ is the correlation of $S_{ij}^{k\ell}$, then 
$\sum_{(i,j,k,\ell)\in\Lambda}\abs{\lambda_{ij}^{k\ell}}^2p_{ij}^{k\ell}=p$. But since $p$
is an extreme point in $C_q$ and $f$ is the unique abstract state in $\mcS$ achieving $p$, 
we must have $p_{ij}^{k\ell}=p$ and ultimately $f_{S_{ij}^{k\ell}}=f$ for all
$(i,j,k,\ell)\in\Lambda$. 

In particular, it follows from the above argument that there are models $$\wtdmodel
\in \mcC$$ for $p$ such that the associated representation $\wtd{\pi}_A \otimes
\wtd{\pi}_B$ is irreducible. Let's suppose that we've fixed such a model
$\wtd{S}$ (not necessarily arising from $S$), and continue working with the
model $S$.  Since $\pi_i^A$ and $\pi_j^B$ are irreducible, if $(i,j,k,\ell) \in
\Lambda$ then
$(\pi^A_i \otimes \pi^B_j, H^A_i \otimes H^B_j, \ket{\psi_{ij}^{k\ell}})$ is a GNS
representation of $f$. Hence by Lemma \ref{lemma:irrep}, part (2), for any
$i,j$ the vector states $\ket{\psi_{ij}^{k\ell}}, (i,j,k,\ell) \in \Lambda$ differ at
most by a complex phase. By absorbing this phase into the coefficients
$\lambda_{ij}^{k\ell}$, we can rewrite
\begin{equation*}
    \ket{\psi}=\bigoplus_{ij}\ket{\psi_{ij}}\otimes\lambda_{ij}\ket{\kappa_{ij}},
\end{equation*}
where $\ket{\psi_{ij}} \in H^A_i \otimes H^B_j$ and $\ket{\kappa_{ij}} \in K^A_i
\otimes K^B_j$ are vector states, and $\sum_{i,j} |\lambda_{ij}|^2 = 1$. 

Since $\wtd{\pi}_A \otimes \wtd{\pi}_B$ is irreducible, $(\wtd{\pi}_A \otimes \wtd{\pi}_B,
\wtd{H}_A \otimes \wtd{H}_B, \ket{\wtd{\psi}})$ is a GNS representation of $f$.
Similarly, $(\pi^A_i \otimes \pi^B_j, H^A_i \otimes H^B_j, \ket{\psi_{ij}})$ is a 
GNS representation of $f$ for $(i,j) \in \Lambda' := \{(i,j) : \lambda_{ij} \neq 0\}$. Hence
the representations $\pi^A_i \otimes \pi^B_j,
(i,j) \in \Lambda'$ are all unitarily equivalent to $\wtd{\pi}_A \otimes \wtd{\pi}_B$. By \Cref{lemma:irrep},
part (3), if $i \in \Lambda_A := \{i : \lambda_{ij} \neq 0 \text{ for some }
j\}$, then $\pi^A_i$ is unitarily equivalent to $\wtd{\pi}_A$ via some unitary
$U^A_i : H^A_i \to \wtd{H}_A$. Similarly if $j \in \Lambda_B := \{j :
\lambda_{ij} \neq 0 \text{ for some } i\}$, then $\pi^B_j$ is unitarily
equivalent to $\wtd{\pi}_B$ via some unitary $U^B_j : H^B_j \to \wtd{H}_B$.
If $(i,j) \in \Lambda'$, then $(\wtd{\pi}_A \otimes \wtd{\pi}_B, \wtd{H}_A
\otimes \wtd{H}_B, U^A_i \otimes U^B_j \ket{\psi_{ij}})$ is a GNS representation
of $f$, so $U^A_i \otimes U^B_j \ket{\psi_{ij}}$ and $\ket{\wtd{\psi}}$ differ
by a complex phase. By absorbing this phase into $\lambda_{ij}$, we can assume
that $U^A_i \otimes U^B_j \ket{\psi_{ij}} = \ket{\wtd{\psi}}$ for all $(i,j) \in \Lambda'$.

For every $i\in \Lambda_A$, we can define an isometry $I_i^A:H_i^A\otimes
K_i^A\to\wtd{H}_A\otimes K_i^A$ via $I_i^A:=U_i^A\otimes \Id_{K_i^A}$. For
every $i\notin \Lambda_A$, let $I_i^A:H_i^A\otimes
K_i^A\arr \wtd{H}_A \otimes(H_i^A\otimes K_i^A)$ be the isometry
$\ket{v}\mapsto\ket{w}\otimes\ket{v}$, where $\ket{w} \in \wtd{H}_A$ is
some arbitrarily chosen vector state.
Define isometries $I_j^B$ analogously. Let
\begin{equation*}
    H_A^{aux}:=\left(\bigoplus_{i\in\Lambda_A}K_i^A\right)\oplus\left(\bigoplus_{i\notin\Lambda_A}(H_i^A\otimes K_i^A) \right) \text{ and } 
    H_B^{aux}:=\left(\bigoplus_{j\in\Lambda_B}K_j^B\right)\oplus\left(\bigoplus_{j\notin\Lambda_B}(H_j^B\otimes K_j^B) \right),
\end{equation*}
and define $\ket{aux}:=\bigoplus_{(i,j) \in \Lambda'}\lambda_{ij}\ket{\kappa_{ij}} \in H_A^{aux} \otimes H_B^{aux}$. Finally,
letting $I_A:=\bigoplus_{i}I_i^A$ and $I_B:=\bigoplus_{j}I_j^B$, we see that
\begin{equation*}
    (I_A \otimes I_B) \pi_A(\alpha) \otimes \pi_B(\beta) \ket{\psi} = \left( \wtd{\pi}_A(\alpha) \otimes \wtd{\pi}_B(\beta) \ket{\wtd{\psi}} \right)
        \otimes \ket{aux}
\end{equation*}
for all $\alpha \in \POVM^{X,A}$, $\beta \in \POVM^{Y,B}$. We conclude that $S \succeq \wtd{S}$, so $p$ is a self-test for $\mcC$
with ideal model $\wtd{S}$. 
\end{proof}
The point in \Cref{thm:uniquestate} of showing that there is an ideal strategy
\begin{equation*}
    \wtdmodel
\end{equation*}
for which the associated representation $\wtd{\pi}_A \otimes \wtd{\pi}_B$ is
irreducible is that then $\ket{\wtd{\psi}}$ is cyclic for $\wtd{\pi}_A \otimes
\wtd{\pi}_B$, so $\left(\wtd{\pi}_A \otimes \wtd{\pi}_B, \wtd{H}_A \otimes
\wtd{H}_B, \ket{\wtd{\psi}}\right)$ is the GNS representation of the unique
abstract state $f = f_{\wtd{S}}$ for $p$. In particular, the GNS representation of
$f$ will be irreducible, which means that $f$ is an extreme
point of the set of states on $\POVM^{X,A} \otimes_{min} \POVM^{Y,B}$. 

As mentioned in \cref{sec:correlations}, the assumption that the correlation
$p$ is an extreme point in $C_q$ is necessary for \Cref{thm:uniquestate}.
In the following example we describe a non-extreme quantum correlation which
is an abstract state self-test but not a self-test.

\begin{example}\label{ex:extremal_cor_not_self-test}
Let $X=Y=\{0\}$ and $A=B=\{0,1\}$, and consider the correlations $p$, $p_1$, and $p_2$ defined by 
\begin{align*}
& p(0,0|0,0)=p(1,1|0,0)=\frac{1}{2},\  p(0,1|0,0)=p(1,0|0,0)=0,\\
& p_1(0,0|0,0)=1,\  p_1(0,1|0,0)=p_1(1,0|0,0)=p_1(1,1|0,0)=0, \text{ and }\\
& p_2(0,0|0,0)=p_2(0,1|0,0)= p_2(1,0|0,0)=0,\ p_2(1,1|0,0)=1. 
\end{align*}
With question sets of size $1$, all correlations are classical, so in particular
$p$, $p_1$, and $p_2$ belong to $C_q(X,Y,A,B)$. Since
$p=\frac{1}{2}p_1+\frac{1}{2}p_2$, $p$ is not an extreme point in $C_q$. 

Suppose $f$ is an abstract state on
$\POVM^{X,A}\otimes_{min}\POVM^{Y,B}$ achieving $p$. 
The correlation $p$ is synchronous, so as we'll show in
\cref{prop:synchronousprojectivestate}, $f$ is projective. The algebra
$\PVM^{X,A}\otimes_{min}\PVM^{Y,B}$ is spanned
by $1\otimes 1$, $m^0_0 \otimes 1$, $1 \otimes n^0_0$, and $m^0_0 \otimes n^0_0$, so $f$ is determined by
its values on these monomials. From the correlation $p$, we see that 
\begin{align*}
    & f(1\otimes 1) = 1, \quad f(m^0_0 \otimes 1) = p(0,0|0,0) + p(0,1|0,0) = \frac{1}{2},   \\
    & f(1 \otimes n^0_0) = p(0,0|0,0) + p(1,0|0,0) = \frac{1}{2}, \quad f(m^0_0 \otimes n^0_0) = p(0,0|0,0) = \frac{1}{2}.
\end{align*}
So $p$ is an abstract state self-test for the set of all states on 
$\POVM^{X,A}\otimes_{min}\POVM^{Y,B}$.

We will now show that $p$ is not a self-test for the class of quantum models. Consider the following quantum models for $p$:
\begin{align*}
&\widehat{S}=\left(\C^2,\C^2, \{\widehat{M}_0=\widehat{N}_0=\ket{0}\bra{0},\widehat{M}_1=\widehat{N}_1=\ket{1}\bra{1}\},\ket{\widehat{\psi}}=\tfrac{\ket{00}+\ket{11}}{\sqrt{2}}\right),
\text{ and}\\
&S=\Big(\C^3,\C^3, \{M_0=N_0=\ket{0}\bra{0},M_1=N_1=\ket{1}\bra{1}+\ket{2}\bra{2}\}, \ket{\psi}=\tfrac{\ket{00}}{\sqrt{2}}+\tfrac{\ket{11}+\ket{22}}{2}\Big).
\end{align*}
Now, assume $p$ is a self-test for quantum models with ideal model 
\begin{equation*}
    \wtdmodel.
\end{equation*}
Then $S\succeq \wtd{S}$ via some isometries $I_A,I_B$ and vector state
$\ket{aux}$, and $\widehat{S}\succeq\wtd{S}$ via some isometries
$\widehat{I}_A,\widehat{I}_B$ and vector state $\ket{\widehat{aux}}$, so
in particular
\begin{align*}
    I_A\otimes I_B \ket{\psi}=\ket{\wtd{\psi}}\otimes \ket{aux} \text{ and }
    \widehat{I}_A\otimes \widehat{I}_B\ket{\widehat{\psi}}=\ket{\wtd{\psi}}\otimes \ket{\widehat{aux}}.
\end{align*}
The states $\ket{\psi}$ and $\ket{\widehat{\psi}}$ have Schmidt rank $2$ and
$3$ respectively. Since isometries preserve Schmidt rank, the Schmidt rank of $\ket{\wtd{\psi}}$ must divide both $3$ and $2$. Therefore the Schmidt rank of
$\ket{\wtd{\psi}}$ is $1$. It's well-known that $p$ cannot be attained with
a separable pure state, so $p$ is not a self-test for quantum models. Indeed,
suppose that
\begin{equation*}
    \model
\end{equation*}
is a model for $p$ with $\ket{\psi} = \ket{\psi_1} \ket{\psi_2}$. Then $p(a,b|0,0) = p_a q_b$,
where $p_a = \braket{\psi_1|M^0_a|\psi_1}$ and $q_b = \braket{\psi_2|N^0_a|\psi_2}$. But there
are no probability distributions $\{p_0,p_1\}$ and $\{q_0,q_1\}$ with $p_0 q_0 = p_1 q_1 = 1/2$ and
$p_0 q_1 = p_1 q_0 = 0$.
\end{example}

\subsection{The classes of all quantum models and projective quantum models}\label{section:mainresult2}

In this section we look closer at the class $\mcC_{POVM}$ of all quantum models
and the class $\mcC_{PVM}$ of projective quantum models. We start by proving
\Cref{cor:mainresult2}. 

\begin{proof}[Proof of \Cref{cor:mainresult2}]
    $\mcC_{POVM}$ and $\mcC_{PVM}$ are both closed. 
    Recall from \Cref{sec:correlations} that if $f$ is a state
    on $\POVM^{X,A} \otimes_{min}\POVM^{Y,B}$, then $f=f_S$ for $S \in \mcC_{POVM}$
    (resp. $\mcC_{PVM}$) if and only if $f$ is finite-dimensional (resp. projective
    and finite-dimensional).
    If $S$ is a model for $p$, then the support model of $S$ is a full-rank
    model for $p$.  Hence every $p \in C_q$ has a full-rank model in
    $\mcC_{POVM}$, so part (a) of \Cref{cor:mainresult2} follows immediately
    from \Cref{thm:mainresult1} applied to $\mcC_{POVM}$. Part (b) is 
    \Cref{thm:mainresult1} applied to $\mcC_{PVM}$. 
\end{proof}
As mentioned previously, every correlation $p \in C_q$ has a projective quantum
model, and hence there is a projective finite-dimensional state on $\POVM^{X,A}
\otimes_{min}\POVM^{Y,B}$ with correlation $p$. If $p$ is an abstract state self-test
for finite-dimensional states, then the unique finite-dimensional state achieving $p$
must be projective. It follows immediately that if $p$ is
an abstract state self-test for all finite-dimensional states, then $p$ is an abstract
state self-test for projective finite-dimensional states. This allows us to prove:
\begin{proposition}\label{prop:alltoprojective}
    If $p \in C_q$ is a self-test for all quantum models, then $p$ is a
    self-test for projective quantum models. Furthermore, $p$ has an ideal 
    model which is full-rank and projective. 
\end{proposition}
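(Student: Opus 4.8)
The plan is to extract uniqueness of the realizing state from the ``only if'' direction already in hand, observe that this unique state must be projective, and then exploit the fact that a \emph{full-rank} model realizing a projective abstract state is automatically projective.

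First I would apply \Cref{prop:onlyifdirection} to $\mcC = \mcC_{POVM}$: this class is closed and every $p \in C_q$ has a full-rank model (the support model of any model of $p$), which is automatically centrally supported, so the hypothesis that $p$ is a self-test for all quantum models gives that $p$ is an abstract state self-test for the set of all finite-dimensional states. Let $f$ be the unique finite-dimensional state on $\POVM^{X,A}\otimes_{min}\POVM^{Y,B}$ with $f(m^x_a\otimes n^y_b)=p(a,b|x,y)$ for all $a,b,x,y$. By Naimark's dilation theorem $p$ has a projective quantum model $S$, and $f_S$ is a projective finite-dimensional state realizing $p$; by uniqueness $f=f_S$, so $f$ is projective. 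In particular $f((m^x_a)^2\otimes 1)=f(m^x_a\otimes 1)$ and $f(1\otimes(n^y_b)^2)=f(1\otimes n^y_b)$ for all $a,b,x,y$, since $m^x_a$ and $n^y_b$ become idempotent in the quotient $\PVM^{X,A}\otimes_{min}\PVM^{Y,B}$ through which $f$ factors.

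Next I would apply \Cref{cor:supportmodel}, again to $\mcC_{POVM}$: since $p$ has a centrally supported model in this closed class, $p$ has a full-rank ideal model $\wtd{S}=(\wtd{H}_A,\wtd{H}_B,\{\wtd{M}^x_a\},\{\wtd{N}^y_b\},\ket{\wtd{\psi}})$ in $\mcC_{POVM}$. Since $\wtd{S}$ realizes $p$, its abstract state satisfies $f_{\wtd{S}}=f$, which is projective. Because $\wtd{S}$ is full-rank, the reduced density operator $\wtd{\rho}_A=\tr_{\wtd{H}_B}(\ket{\wtd{\psi}}\bra{\wtd{\psi}})$ is invertible, and the projectivity relation above gives $\braket{\wtd{\psi}|(\wtd{M}^x_a-(\wtd{M}^x_a)^2)\otimes\Id|\wtd{\psi}}=f((m^x_a)^2\otimes 1)-f(m^x_a\otimes 1)=0$, i.e.\ $\tr(\wtd{\rho}_A D)=0$ where $D:=\wtd{M}^x_a(\Id-\wtd{M}^x_a)$. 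Since $0\leq\wtd{M}^x_a\leq\Id$, the operator $D$ is positive, and as $\wtd{\rho}_A$ is strictly positive this forces $D=0$. Thus $\wtd{M}^x_a$ is a self-adjoint idempotent, hence a projection, and the same argument applies to every $\wtd{N}^y_b$. So $\wtd{S}$ is a full-rank projective quantum model.

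Finally I would note that $\wtd{S}$ is an ideal model for the class $\mcC_{PVM}$ of projective models: it lies in $\mcC_{PVM}$ by the previous step, and any projective model $S'$ for $p$ is in particular a model in $\mcC_{POVM}$, so $S'\succeq\wtd{S}$ because $\wtd{S}$ is ideal for $\mcC_{POVM}$. Hence $p$ is a self-test for projective quantum models, with $\wtd{S}$ a full-rank projective ideal model. The step that really carries the argument is the passage from a full-rank model realizing the projective state $f$ to a genuinely projective model; this can fail for non-full-rank models realizing $f$ --- precisely the phenomenon behind the example of \cite{lifiting23} --- so the fact (from \Cref{cor:supportmodel}) that a self-test for \emph{all} quantum models has a full-rank ideal model is exactly what makes everything go through.
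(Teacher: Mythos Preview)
Your proof is correct and follows essentially the same approach as the paper's: both use \Cref{prop:onlyifdirection} and \Cref{cor:supportmodel} to produce a full-rank ideal model realizing the unique (necessarily projective) abstract state, and then argue this model must itself be projective. Your inline argument for this last step is exactly the content of the paper's \Cref{lem:nonprojstate} (note a harmless sign slip: $\braket{\wtd{\psi}|(\wtd{M}^x_a-(\wtd{M}^x_a)^2)\otimes\Id|\wtd{\psi}}$ equals $f(m^x_a\otimes 1)-f((m^x_a)^2\otimes 1)$, not the reverse, though of course both differences vanish).
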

The proof uses the following lemma:
\begin{lemma}\label{lem:nonprojstate}
    If $S$ is a full-rank model and $f_S$ is projective, then $S$ is
    projective.  
\end{lemma}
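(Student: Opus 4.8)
I want to show that if $S = (H_A, H_B, \{M^x_a\}, \{N^y_b\}, \ket{\psi})$ is a full-rank model and $f_S$ is projective, then in fact each $M^x_a$ and each $N^y_b$ is already a projection. The idea is to exploit the full-rank hypothesis to transfer the projectivity of the abstract state $f_S$ to the operators themselves. Since $f_S$ is projective, there is a projective model $T$ with $f_T = f_S$; equivalently, the GNS representation of $f_S$ factors through the quotient $\PVM^{X,A} \otimes_{min} \PVM^{Y,B}$, so in the GNS representation the images of the generators $m^x_a \otimes 1$ and $1 \otimes n^y_b$ are honest projections. The model $S$ determines an associated representation $\pi_A \otimes \pi_B$ of $\POVM^{X,A} \otimes_{min} \POVM^{Y,B}$ with cyclic-up-to-nothing vector $\ket{\psi}$; the point is to compare $\pi_A \otimes \pi_B$ with the GNS representation of $f_S$.

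\textbf{Key steps.} First I would observe that, because $\ket{\psi}$ is full-rank, the vector $\ket{\psi}$ is separating for $\pi_A(\POVM^{X,A}) \otimes \Id$ in the following sense: if $E \in \msB(H_A)$ satisfies $E \otimes \Id \ket{\psi} = 0$, then $E = 0$ (this follows from the Schmidt decomposition of $\ket{\psi}$ having all coefficients positive and the $\ket{\beta_i}$ spanning $H_B$, cf. the computations in \Cref{lemma:onlyif} and \Cref{lemma:if}). Next, fix $x \in X$ and $a \in A$, and set $E := M^x_a - (M^x_a)^2$; this is a self-adjoint operator on $H_A$, and since $M^x_a$ is a positive contraction, $E \geq 0$. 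I want to show $E = 0$. By the separating property it suffices to show $E \otimes \Id \ket{\psi} = 0$. Now I compute $\braket{\psi | E \otimes \Id | \psi} = f_S(m^x_a \otimes 1) - f_S((m^x_a)^2 \otimes 1)$. Because $f_S$ is projective, it factors through the quotient by the relations $(e^x_a)^2 = e^x_a$, so $f_S((m^x_a)^2 \otimes 1) = f_S(m^x_a \otimes 1)$, giving $\braket{\psi | E \otimes \Id | \psi} = 0$. Since $E \geq 0$, write $E = D^* D$; then $\braket{\psi | D^*D \otimes \Id | \psi} = \| (D \otimes \Id)\ket{\psi}\|^2 = 0$, so $(D \otimes \Id)\ket{\psi} = 0$, hence $(E \otimes \Id)\ket{\psi} = (D^* \otimes \Id)(D \otimes \Id)\ket{\psi} = 0$. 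By the separating property $E = 0$, i.e. $(M^x_a)^2 = M^x_a$, and a positive idempotent is a self-adjoint projection. The argument for the $N^y_b$ is identical with the roles of $H_A$ and $H_B$ swapped. Therefore $S$ is projective.

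\textbf{Main obstacle.} The only slightly delicate point is justifying the separating property of $\ket{\psi}$ cleanly — that full-rank implies $E \otimes \Id \ket{\psi} = 0 \Rightarrow E = 0$. This is genuinely elementary: with Schmidt decomposition $\ket{\psi} = \sum_{i=1}^k \lambda_i \ket{i}\otimes\ket{i}$ (using full-rank to take $H_A = H_B = \C^k$ and common Schmidt bases), one has $(E\otimes\Id)\ket{\psi} = \sum_i \lambda_i (E\ket{i})\otimes\ket{i}$, and since the $\ket{i}$ are orthonormal and $\lambda_i > 0$, this vanishes iff $E\ket{i} = 0$ for all $i$, iff $E = 0$. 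Everything else is a routine application of the positivity of $f_S$ and the fact that $f_S$ killing $(e^x_a)^2 - e^x_a$ is exactly what "projective state" means. So there is no real difficulty; the proof is short, and the role of the full-rank hypothesis (which is exactly what fails in the Baptista–Chen–Kaniewski–Lolck–Man\v{c}inska–Gabelgaard Nielsen–Schmidt example) is precisely to supply the separating vector.
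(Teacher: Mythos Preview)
Your proof is correct and follows essentially the same approach as the paper: compute $\braket{\psi|(M^x_a - (M^x_a)^2)\otimes\Id|\psi} = f_S(m^x_a - (m^x_a)^2) = 0$ using projectivity of $f_S$, then use positivity of $M^x_a - (M^x_a)^2$ together with the full-rank hypothesis to conclude $M^x_a - (M^x_a)^2 = 0$. The paper's version is more terse (it collapses your ``$E\geq 0$, $\braket{\psi|E\otimes\Id|\psi}=0$, $\ket{\psi}$ full-rank $\Rightarrow E=0$'' into a single sentence without the intermediate $D^*D$ factorisation or explicit separating-vector argument), but the substance is identical.
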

\begin{proof}
    Let $\model$. Since $f_S$ is projective,
    \begin{align*}\label{eq:proj}
        \bra{\psi}\big(M^x_a-(M^x_a)^2\big)\otimes \Id\ket{\psi}=f_S\big(m^x_a-(m^x_a)^2 \big)=0
    \end{align*}
    for all $x\in X,a\in A$. Since $M^x_a-(M^x_a)^2$ is a positive operator and
    $\ket{\psi}$ is full-rank, we conclude that $M^x_a - (M^x_a)^2 = 0$ for all
    $a \in A$, $x \in X$. The same argument shows that $N^y_b$ is a projection for all
    $b \in B$, $y \in Y$. 
\end{proof}
\begin{proof}[Proof of \Cref{prop:alltoprojective}]
    By \Cref{prop:onlyifdirection}, $p$ is an abstract state self-test for
    finite-dimensional states, and hence an abstract state self-test for
    finite-dimensional projective states. In particular, the unique abstract
    state $f$ for $p$ is projective. By \Cref{cor:supportmodel}, $p$ has a
    full-rank ideal model $\wtd{S}$. Since $f_{\wtd{S}} = f$,
    \Cref{lem:nonprojstate} implies that $\wtd{S}$ is projective. 
\end{proof}
One way to show that every correlation in $C_q$ has a projective model
is to start with a POVM model, and construct a projective model using Naimark
dilation. 
This suggests a naive strategy for proving \Cref{prop:alltoprojective}:
show that for every POVM model $S$, there is a projective model $\widehat{S}$
such that $S \succeq \widehat{S}$. Unfortunately, this latter
statement is not true. Indeed, 
let $S$ be a full-rank quantum model which is not projective, and suppose
$S \succeq \widehat{S}$ where $\widehat{S}$ is projective. By \Cref{rmk:Scentral}, 
$f_S = f_{\widehat{S}}$ is projective, and 
\Cref{lem:nonprojstate} implies that $S$ is projective, a contradiction. So
$S$ does not locally dilate to any projective model, and the naive strategy outlined
above cannot be used to prove \Cref{prop:alltoprojective}.

As mentioned in \Cref{sec:correlations}, Baptista, Chen, Kaniewski, Lolck, Man{\v{c}}inska, Gabelgaard Nielsen, and Schmidt
\cite{lifiting23} have constructed a correlation which is a self-test for full-rank
(POVM) quantum models, and a self-test for PVM quantum models, but does not have a full-rank projective quantum model.
By \Cref{prop:alltoprojective}, this correlation is not a self-test for all quantum models. We
do not know whether this correlation is an abstract state self-test for projective finite-dimensional states.  

Lemma \ref{lem:nonprojstate} can also be used to prove:
\begin{proposition}\label{prop:consequences}
    Suppose $p\in C_q(X,Y,A,B)$ is an extreme point in $C_q$ and is a self-test for
    all quantum models. Then $p$ has an ideal model which is full-rank and
    projective, and for which the associated representation is irreducible.
\end{proposition}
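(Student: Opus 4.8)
The plan is to produce a single ideal model enjoying all three properties by combining \Cref{thm:uniquestate}, which supplies an ideal model with irreducible associated representation, with the projectivity results \Cref{prop:alltoprojective} and \Cref{lem:nonprojstate}, using Schur's lemma as the bridge to full rank.

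First I would observe that, since $p$ is a self-test for the closed class $\mcC_{POVM}$ of all quantum models and every correlation in $C_q$ has a full-rank model, \Cref{prop:onlyifdirection} shows that $p$ is an abstract state self-test for the set $\mcS$ of all finite-dimensional states on $\POVM^{X,A}\otimes_{min}\POVM^{Y,B}$; let $f$ be the unique such state achieving $p$. Because $p$ is an extreme point of $C_q$, \Cref{thm:uniquestate} then yields an ideal model $\wtd{S}=(\wtd{H}_A,\wtd{H}_B,\{\wtd{M}^x_a\},\{\wtd{N}^y_b\},\ket{\wtd{\psi}})$ in $\mcC_{POVM}$ whose associated representation $\wtd{\pi}_A\otimes\wtd{\pi}_B$ is irreducible, and with $f_{\wtd{S}}=f$ by uniqueness.

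Next I would upgrade $\wtd{S}$ to a full-rank model. Since $p$ has a centrally supported (indeed full-rank) model in $\mcC_{POVM}$ and $\wtd{S}$ is an ideal model, \Cref{cor:supportmodel} forces $\wtd{S}$ to be centrally supported, so the support projections $\wtd{\Pi}_A,\wtd{\Pi}_B$ of $\ket{\wtd{\psi}}$ lie in the commutants $\wtd{\pi}_A(\POVM^{X,A})'$ and $\wtd{\pi}_B(\POVM^{Y,B})'$ respectively. By \Cref{lemma:irrep}, part (1), both $\wtd{\pi}_A$ and $\wtd{\pi}_B$ are irreducible, so Schur's lemma gives $\wtd{\pi}_A(\POVM^{X,A})'=\C\Id_{\wtd{H}_A}$ and likewise on the $B$ side; hence the nonzero projections $\wtd{\Pi}_A$ and $\wtd{\Pi}_B$ must be the identity operators, i.e.\ $\ket{\wtd{\psi}}$ is full-rank and $\wtd{S}$ is a full-rank model. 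Finally, since every correlation in $C_q$ has a projective model, there is a projective finite-dimensional state achieving $p$, so by uniqueness $f=f_{\wtd{S}}$ is projective; then \Cref{lem:nonprojstate}, applicable now that $\wtd{S}$ is known to be full-rank, shows $\wtd{S}$ is projective. Thus $\wtd{S}$ is a full-rank projective ideal model whose associated representation is irreducible, as required.

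The only step that is not an immediate appeal to an earlier result is this reconciliation: the full-rank projective ideal model produced by \Cref{prop:alltoprojective} need not have irreducible associated representation, and the irreducibly-represented ideal model produced by \Cref{thm:uniquestate} is not a priori full-rank, so the Schur's-lemma argument applied to the support projections of a centrally supported, irreducibly-represented ideal model is what ties the two threads together. This is the main (and essentially the only) point of the proof.
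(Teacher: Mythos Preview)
Your proof is correct and reaches the same conclusion as the paper, but the order of operations and the key lemma used to establish that the support projections commute with the measurement operators differ in an instructive way.

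The paper first argues directly that $\wtd{S}$ is projective: since $f$ is projective, the GNS representation $(\wtd{\pi}_A\otimes\wtd{\pi}_B,\wtd{H}_A\otimes\wtd{H}_B,\ket{\wtd{\psi}})$ factors through $\PVM^{X,A}\otimes_{min}\PVM^{Y,B}$, so the $\wtd{M}^x_a$ are already projections. It then passes to the support model $S'$, observes $f_{S'}=f$ is projective, applies \Cref{lem:nonprojstate} to conclude $S'$ is projective, and finally invokes \Cref{lem:projectionscommute} (two projections $P,Q$ commute iff $PQP$ is a projection) to deduce $[\wtd{\Pi}_A,\wtd{M}^x_a]=0$. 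Only then does Schur's lemma give $\wtd{\Pi}_A=\Id$. Your route instead appeals to \Cref{cor:supportmodel} to get $\wtd{S}$ centrally supported in one stroke, which immediately places $\wtd{\Pi}_A$ in the commutant, and then obtains projectivity afterward via \Cref{lem:nonprojstate}. Your argument is slightly more economical in that it bypasses \Cref{lem:projectionscommute} and the intermediate support-model step entirely; the paper's argument, on the other hand, does not need to invoke \Cref{cor:supportmodel} and yields the side observation (recorded in the remark after the proof) that any projective model for such a $p$ is centrally supported.
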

The proof uses the following lemma:
\begin{lemma}\label{lem:projectionscommute}
    Two projections $P,Q\in \msB(H)$ commute if and only if $PQP$ is a projection.
\end{lemma}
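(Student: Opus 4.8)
The plan is to prove the two implications separately; the forward implication is a one-line computation, so the substance lies in the converse. For the ``only if'' direction, suppose $PQ = QP$. Then $PQP = P^2Q = PQ$, and $PQ$ is a projection: it is self-adjoint because $(PQ)^* = Q^*P^* = QP = PQ$, and idempotent because $(PQ)^2 = PQPQ = P^2Q^2 = PQ$. Hence $PQP$ is a projection.

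For the ``if'' direction, I would set $R := PQP$ and first observe that $R$ is automatically self-adjoint, since $(PQP)^* = P^*Q^*P^* = PQP$; thus the hypothesis that $R$ is a projection is equivalent to $R = R^2$, i.e.\ $PQPQP = PQP$. The key step is then to deduce $(\Id - P)QP = 0$ from a norm computation. For every $\ket{v} \in H$, using that $\Id - P$ is a self-adjoint projection and that $P = P^*$, $Q = Q^* = Q^2$,
\begin{align*}
    \norm{(\Id - P)QP\ket{v}}^2
    &= \braket{v | PQ(\Id - P)QP | v} \\
    &= \braket{v | PQ^2P | v} - \braket{v | PQPQP | v} \\
    &= \braket{v | R | v} - \braket{v | R^2 | v}.
\end{align*}
Since $R = R^2$ this vanishes for all $\ket{v}$, so $(\Id - P)QP = 0$; that is, $QP = PQP$. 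Taking adjoints gives $PQ = (QP)^* = (PQP)^* = PQP$, and therefore $PQ = PQP = QP$, so $P$ and $Q$ commute.

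I do not expect a real obstacle here. The only thing requiring care is the bookkeeping in the middle display: checking that $PQ(\Id - P)QP = PQ^2P - PQPQP = PQP - (PQP)^2 = R - R^2$, and recalling the standard fact that $\norm{E\ket{w}}^2 = \braket{w | E | w}$ for a projection $E$, applied here with $E = \Id - P$ and $\ket{w} = QP\ket{v}$. Everything else is formal.
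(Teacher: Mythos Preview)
Your proof is correct and essentially the same as the paper's. The paper defines $S := PQ(\Id - P)$ and computes $SS^* = PQ(\Id-P)QP = PQP - PQPQP = 0$ directly at the operator level, whereas you compute $\norm{(\Id-P)QP\ket{v}}^2 = \braket{v|PQ(\Id-P)QP|v} = 0$ vectorially; these are the same calculation, just phrased in terms of $S^*$ rather than $S$.
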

\begin{proof}
The ``only if" direction is obvious. For the ``if" direction, suppose $PQP$ is a projection, i.e., $PQPQP=PQP$. Let $S:=PQ(\Id-P)$. Then $SS^*=PQ(\Id-P)QP=PQP-PQPQP=0$, which implies $S=0$. Hence $PQ=PQP$. Then $QP=(PQ)^*=(PQP)^*=PQP=PQ$.
\end{proof}
\begin{proof}[Proof of \Cref{prop:consequences}]
    By \Cref{cor:mainresult2}, part (a), $p$ is an abstract state self-test for
    finite-dimensional states. As mentioned above, the unique finite-dimensional
    state $f$ corresponding to $p$ must be projective. By \Cref{thm:uniquestate},
    $p$ has an ideal model 
    \begin{equation*}
        \wtdmodel
    \end{equation*}
    for which the associated representation $\wtd{\pi}_A \otimes \wtd{\pi}_B$
    is irreducible, and in particular  $$\left(\wtd{\pi}_A \otimes \wtd{\pi}_B,
    \wtd{H}_A \otimes \wtd{H}_B, \ket{\wtd{\psi}}\right)$$ is a GNS
    representation of the projective state $f_{\wtd{S}} = f$. Since $f$
    is the pullback of a state on $\PVM^{X,A} \otimes_{min} \PVM^{Y,B}$, any
    GNS representation of $f$ is a pullback of a representation of 
    $\PVM^{X,A} \otimes_{min} \PVM^{Y,B}$, and hence $\wtd{S}$ is projective. 

    Let $S'$ be the support model of $\wtd{S}$. By \Cref{lem:nonprojstate}, $S'$
    is projective, so by \Cref{lem:projectionscommute}
    the support projection $\Pi_A$ commutes with $\wtd{M}^x_a$ for all $x \in
    X$, $a \in A$. Thus $\Pi_A \in \wtd{\pi}_A(\POVM^{X,A})'$, and since
    $\wtd{\pi}_A$ is irreducible, we must have $\Pi_A = \Id_{\wtd{H}_A}$ is the identity. 
    Similarly $\Pi_B = \Id_{\wtd{H}_B}$, and we conclude that $\wtd{S} = S'$
    is full-rank.
\end{proof}
\begin{remark}
    More generally, the argument in the last paragraph of the proof shows that if
    $p$ is a self-test for all quantum models, and an extreme point of $C_q$, then
    \begin{enumerate}[(1)]
        \item any full-rank model for $p$ is projective, and
        \item any projective model for $p$ is centrally supported. 
    \end{enumerate}
\end{remark}

\section{Self-testing for synchronous and binary correlations}\label{sec:syn}

In this section we prove \cref{thm:mainresult3}.

\subsection{Synchronous correlations}

The set of synchronous quantum correlations $C_q^{syn}(X,A)$ is the
convex subset of $C_q(X,X,A,A)$ such that $p(a,b|x,x)=0$ for all $x \in X$
and $a \neq b \in A$. The following result is well-known to experts (see, e.g.
\cite[Theorem 5.5]{PSSTW16} and \cite[Lemma 3]{CM14}). We provide a proof for
the convenience of the reader. 
\begin{lemma}\label{lem:synch}
    Let $p\in C_q^{syn}(X,A)$. If $S=(H_A\otimes H_B,\{M^x_a:a\in A, x\in X\},\{N^y_b:b\in A,y\in X\}, \ket{\psi})$ is a quantum model for $p$, then $M^x_a\otimes \Id\ket{\psi}=\Id\otimes N^x_a\ket{\psi}$ for all $x\in X$, $a\in A$. If in addition the vector state $\ket{\psi}$ has full-rank, then $S$ is a projective quantum model.
\end{lemma}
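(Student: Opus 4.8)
The plan is to prove the two assertions separately, using only the synchronicity condition $p(a,b|x,x)=0$ for $a\neq b$ together with the fact that $\{M^x_a\}$ and $\{N^x_b\}$ are POVMs.

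For the first assertion, I would fix $x\in X$ and study the single quantity $\sum_{a\in A}\norm{(M^x_a\otimes\Id-\Id\otimes N^x_a)\ket{\psi}}^2$. Since $M^x_a$ and $N^x_a$ are self-adjoint and $M^x_a\otimes\Id$ commutes with $\Id\otimes N^x_a$, each summand expands as $\braket{\psi|(M^x_a)^2\otimes\Id|\psi}+\braket{\psi|\Id\otimes(N^x_a)^2|\psi}-2\braket{\psi|M^x_a\otimes N^x_a|\psi}$. I would then use three observations. First, $\sum_a\braket{\psi|M^x_a\otimes N^x_a|\psi}=\sum_a p(a,a|x,x)=\sum_{a,b}p(a,b|x,x)=1$, where the middle equality is exactly synchronicity and the last is normalization of $p$. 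Second, from $0\leq M^x_a\leq\Id$ one gets $(M^x_a)^2\leq M^x_a$, hence $\sum_a(M^x_a)^2\leq\sum_a M^x_a=\Id$, so $\sum_a\braket{\psi|(M^x_a)^2\otimes\Id|\psi}\leq 1$; third, the same bound holds for the $N$-terms. Summing, the quantity $\sum_a\norm{(M^x_a\otimes\Id-\Id\otimes N^x_a)\ket{\psi}}^2$ is at most $1+1-2=0$, and being a sum of nonnegative terms it vanishes term by term. This is precisely $M^x_a\otimes\Id\ket{\psi}=\Id\otimes N^x_a\ket{\psi}$ for all $x,a$.

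For the second assertion, assume in addition that $\ket{\psi}$ is full-rank, and write (A) for the identity $M^x_a\otimes\Id\ket{\psi}=\Id\otimes N^x_a\ket{\psi}$ just established. Applying $M^x_a\otimes\Id$ to both sides of (A) gives $(M^x_a)^2\otimes\Id\ket{\psi}=M^x_a\otimes N^x_a\ket{\psi}$, so $\braket{\psi|(M^x_a)^2\otimes\Id|\psi}=\braket{\psi|M^x_a\otimes N^x_a|\psi}=p(a,a|x,x)$. On the other hand $\braket{\psi|M^x_a\otimes\Id|\psi}=\sum_{b}\braket{\psi|M^x_a\otimes N^x_b|\psi}=\sum_b p(a,b|x,x)=p(a,a|x,x)$, again by synchronicity. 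Hence $\braket{\psi|(M^x_a-(M^x_a)^2)\otimes\Id|\psi}=0$; since $M^x_a-(M^x_a)^2\geq 0$ and $\ket{\psi}$ is full-rank, the positivity argument used in the proof of \Cref{lem:nonprojstate} forces $M^x_a=(M^x_a)^2$, so each $M^x_a$ is a projection. Applying $\Id\otimes N^x_a$ to (A) instead and arguing symmetrically shows each $N^x_a$ is a projection, so $S$ is a projective quantum model.

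I do not expect a genuine obstacle here — the argument is elementary. The one point that needs care is that $M^x_a$ and $N^x_a$ are only POVM elements, not projections, so the vanishing-of-squared-norms step cannot be read off from orthogonality of projections but must be combined with the operator inequality $(M^x_a)^2\leq M^x_a$, valid since $0\leq M^x_a\leq\Id$; it is exactly this inequality, together with synchronicity, that makes the three bounds in the first paragraph line up to give $0$.
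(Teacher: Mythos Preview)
Your proof is correct, and in fact more direct than the paper's. The paper proves the first assertion by first Naimark-dilating to a projective model $S'$, invoking \cite[Theorem~5.5]{PSSTW16} to get $P^x_a\otimes\Id\ket{\psi'}=\Id\otimes Q^x_a\ket{\psi'}$ for the dilated projections, and then compressing back with $V_A^*\otimes V_B^*$. Your argument bypasses both the dilation and the external citation: the single estimate $\sum_a\norm{(M^x_a\otimes\Id-\Id\otimes N^x_a)\ket{\psi}}^2\leq 0$ uses only synchronicity and the operator inequality $(M^x_a)^2\leq M^x_a$, and works directly for POVMs. This is cleaner and fully self-contained; the paper's route has the minor advantage of making explicit that the POVM case reduces to the already-known PVM case. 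For the second assertion (projectivity under full rank), your argument and the paper's are essentially identical.
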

\begin{proof}
    By Naimark's dilation theorem there are isometries $V_A,V_B$ and
    self-adjoint projections $P^x_a,Q^y_b$, such that
    $M^x_a=V_A^*P^x_aV_A,Q^y_b=V_B^*N^y_b V_B$. If we define
    $\ket{\psi'}:=V_A\otimes V_B\ket{\psi}$, then
    \begin{equation*}
        S':=(V_AH_A,V_BH_B,\{P^x_a:a\in A,x\in X\},\{Q^y_b:b\in A, y\in X\},\ket{\psi'})
    \end{equation*}
    is a projective quantum model for $p$. By \cite[Theorem 5.5]{PSSTW16}, it
    follows that for each $a\in A$ and $x\in X$ we have $P^x_a\otimes
    \Id\ket{\psi'}=\Id\otimes Q^x_a\ket{\psi'}$, and therefore $P^x_aV_A\otimes
    V_B\ket{\psi}=V_A\otimes Q_a^x V_B\ket{\psi}$. Applying the adjoint
    $V_A^*\otimes V_B^*$ to both sides, we obtain $M^x_a\otimes
    \Id\ket{\psi}=\Id\otimes N^x_a\ket{\psi}$ for all $a\in A$, $x\in X$. From this
    observation we can deduce that 
    \begin{align*}
        \bra{\psi} (M^x_a)^2\otimes \Id\ket{\psi}=\bra{\psi} M^x_a\otimes N^x_a\ket{\psi}=\sum_{b\in A}\bra{\psi}M^x_a\otimes N^x_b\ket{\psi}=\bra{\psi} M^x_a\otimes \Id\ket{\psi},
    \end{align*}
    and therefore $\bra{\psi} \big(M^x_a-(M^x_a)^2\big)\otimes
    \Id\ket{\psi}=0$ for all $a\in A$, $x\in X$. Since $M^x_a-(M^x_a)^2$ is a
    positive operator, if $\ket{\psi}$ is full-rank then $(M^x_a)^2=M^x_a$ for
    all $a\in A$, $x\in X$, and hence $S$ is a projective quantum model for
    $p$.
\end{proof}
When combined with \Cref{prop:centrallysupported}, \Cref{lem:synch} implies that
every quantum model for a synchronous correlation is centrally supported. Hence:
\begin{proposition}\label{prop:synchronousmain3}
    A synchronous quantum correlation $p\in C_q^{syn}$ is a self-test for
    quantum models if and only if $p$ is a self-test for projective quantum
    models.
\end{proposition}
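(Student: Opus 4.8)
The plan is to deduce this from \Cref{thm:mainresult1}, or rather from the ingredients going into it, after first observing that the hypotheses of \Cref{thm:mainresult1} are automatically satisfied in the synchronous setting except for extremality, which we circumvent. The key structural fact, already assembled in the text just above the statement, is that \emph{every} quantum model for a synchronous correlation $p$ is centrally supported: by \Cref{lem:synch} we have $M^x_a \otimes \Id \ket{\psi} = \Id \otimes N^x_a \ket{\psi}$ for all $x,a$, and then \Cref{prop:centrallysupported} (with $\widehat{M}^x_a := N^x_a$, and symmetrically $\widehat{N}^x_a := M^x_a$) says precisely that $S$ is centrally supported. So the class $\mcC_{POVM}$ and the class $\mcC_{PVM}$, restricted to models for $p$, consist entirely of centrally supported models.

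First I would handle the easy direction: a self-test for quantum models is a self-test for projective quantum models. Suppose $p$ is a self-test for $\mcC_{POVM}$ with ideal model $\wtd{S}$. By \Cref{cor:supportmodel} (its hypothesis is met since $p$ has a centrally supported model), $\wtd{S}$ is centrally supported, and since $\mcC_{POVM}$ is closed, $p$ has a \emph{full-rank} ideal model $S'$ — the support model of $\wtd{S}$, which by \Cref{lem:supportmodel} satisfies $S' \succeq \wtd S$ and $\wtd S \succeq S'$, hence is itself an ideal model. Now $S'$ is full-rank and synchronous, so by the last sentence of \Cref{lem:synch}, $S'$ is projective. Thus $S'$ is an ideal model lying in $\mcC_{PVM}$, and since every projective model for $p$ is in particular a POVM model for $p$, it is dilated by $S'$. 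Hence $p$ is a self-test for $\mcC_{PVM}$.

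For the converse — a self-test for projective quantum models is a self-test for quantum models — suppose $p$ is a self-test for $\mcC_{PVM}$ with ideal projective model $\wtd S$. I must show every POVM model $S$ for $p$ satisfies $S \succeq \wtd S$. Given such an $S$, apply Naimark dilation as in the proof of \Cref{lem:synch} to get a projective model $S'$ with isometries $V_A, V_B$ such that $S' \succeq S$ in the local-dilation sense (the identity $\widehat M^x_a \otimes \widehat N^y_b \ket{\psi'} = (V_A \otimes V_B)(M^x_a \otimes N^y_b \ket\psi)$ — wait, this needs care, since Naimark dilation gives $M^x_a = V_A^* P^x_a V_A$, which yields $S \succeq S'$-type statements only after checking the vector-state compatibility). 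Concretely: with $\ket{\psi'} = V_A \otimes V_B \ket\psi$ and $P^x_a, Q^y_b$ the dilated projections, one checks $(V_A \otimes V_B)(M^x_a \otimes N^y_b \ket\psi) = P^x_a V_A \otimes Q^y_b V_B \ket\psi$ only if $P^x_a$ commutes with $V_A V_A^*$ — which is false in general. The robust route is instead: $S'$ is a projective model for $p$, so $S' \succeq \wtd S$ since $p$ is a self-test for $\mcC_{PVM}$. It remains to see $S \succeq S'$, i.e. that the POVM model dilates its own Naimark dilation. This is standard: define $I_A : H_A \to K_A$ by $I_A \ket h = \sum_i (P^x... )$ — no, the Naimark isometry depends on $x$. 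The clean argument is that $S$ is centrally supported (shown above), so by \Cref{lem:supportmodel} we may pass to the full-rank support model, which by \Cref{lem:synch} is already projective; call it $S''$. Then $S'' \in \mcC_{PVM}$ is a model for $p$, so $S'' \succeq \wtd S$, and by \Cref{lem:supportmodel} $S \succeq S''$, hence $S \succeq \wtd S$ by transitivity of $\succeq$.

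The main obstacle — and the step I would be most careful about writing up — is precisely this reduction of an arbitrary POVM model to a projective one inside the $\succeq$ preorder without invoking extremality or \Cref{thm:uniquestate}. The resolution is that for synchronous correlations one does not need Naimark dilation at all on the ``hard'' direction: central support (from \Cref{lem:synch} + \Cref{prop:centrallysupported}) lets one replace any POVM model by its full-rank support model via \Cref{lem:supportmodel}, and \emph{then} full-rank plus synchronicity forces projectivity by the second half of \Cref{lem:synch}. So the whole proposition reduces to the two observations: (i) every model for $p$ is centrally supported, so $p$ has a full-rank model iff it has a model at all, and in $\mcC_{PVM}$ iff in $\mcC_{POVM}$ — here \Cref{lemma:full-rank} and the closedness of both classes give that the full-rank support model stays in the respective class — and (ii) a full-rank model for a synchronous correlation is automatically projective, so the full-rank sub-models of $\mcC_{POVM}$-models and $\mcC_{PVM}$-models for $p$ coincide. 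Combined with \Cref{lem:supportmodel}, an ideal model for either class, passed to its support model, is an ideal model for the other.
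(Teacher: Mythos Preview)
Your proposal is correct and, once you settle on the right argument in the second half, follows essentially the same route as the paper: for the direction ``projective self-test $\Rightarrow$ POVM self-test'' you and the paper both argue that any POVM model $S$ is centrally supported (via \Cref{lem:synch} and \Cref{prop:centrallysupported}), pass to its full-rank support model $S''$ using \Cref{lem:supportmodel}, observe $S''$ is projective by the second clause of \Cref{lem:synch}, and conclude $S \succeq S'' \succeq \wtd{S}$. The only difference is in the other direction: the paper simply invokes \Cref{prop:alltoprojective} (a general result proved via abstract states and \Cref{lem:nonprojstate}), whereas you give a direct argument specific to the synchronous case using \Cref{cor:supportmodel} and \Cref{lem:synch} to produce a full-rank, hence projective, ideal model --- this is slightly more self-contained but amounts to re-deriving the relevant special case of \Cref{prop:alltoprojective}. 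The Naimark detour in your draft should be cut; as you correctly diagnose, it does not give $S \succeq S'$ and is unnecessary here.
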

\begin{proof}
    Suppose $p \in C_q^{syn}$ is a self-test for projective quantum models with
    ideal model $\wtd{S}$. If
    $S$ is a quantum model for $p$, then $S$ is centrally supported
    by \Cref{lem:synch} and \Cref{prop:centrallysupported}. Hence if $S'$ is
    the support model for $S$, then $S \succeq S'$ by \Cref{lem:supportmodel}.
    Since $S'$ is full-rank, $S'$ is projective by \Cref{lem:synch}, and hence
    $S' \succeq \wtd{S}$. So $p$ is a self-test for all quantum models. 

    The other direction follows from \Cref{prop:alltoprojective}.
\end{proof}

The same result holds for abstract state self-tests:
\begin{proposition}\label{prop:synchronousprojectivestate}
    If $p \in C_q^{syn}$ and $S$ is a quantum model for $p$, then $f_S$ is projective.
    Hence $p$ is an abstract state self-test for finite-dimensional states if and only if
    $p$ is an abstract state self-test for projective finite-dimensional states. 
\end{proposition}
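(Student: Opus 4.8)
The plan is to prove the statement in two steps: first I would show that $f_S$ is projective for \emph{every} quantum model $S$ of $p$, and then deduce the stated equivalence by comparing the set of finite-dimensional states realizing $p$ with the set of projective ones.

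For the first step, the key input is the observation recorded just after \Cref{lem:synch}: since $p\in C_q^{syn}$, \Cref{lem:synch} together with \Cref{prop:centrallysupported} shows that every quantum model $S=(H_A,H_B,\{M^x_a\},\{N^y_b\},\ket{\psi})$ for $p$ is centrally supported. Let $S'$ be the support model of $S$. Since $\Pi_A\otimes\Pi_B\ket{\psi}=\ket{\psi}$, the model $S'$ realizes the same correlation $p$; and by \Cref{lem:supportmodel} we have $S\succeq S'$, so \Cref{rmk:Scentral} (applicable because $S$ is centrally supported) gives $f_S=f_{S'}$. Finally, $S'$ is a full-rank quantum model for the synchronous correlation $p$, so \Cref{lem:synch} shows that $S'$ is a projective quantum model, and therefore $f_S=f_{S'}$ is a projective state.

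For the second step, I would note that every finite-dimensional state $f$ on $\POVM^{X,A}\otimes_{min}\POVM^{Y,B}$ with correlation $p$ is of the form $f_S$ for some quantum model $S$ (apply the double-commutant theorem to a GNS representation of $f$, as reviewed in \Cref{sec:prelims}); by the first step such an $f$ is projective. Conversely, a projective finite-dimensional state realizing $p$ is in particular a finite-dimensional state realizing $p$. Hence the set of finite-dimensional states with correlation $p$ coincides with the set of projective finite-dimensional states with correlation $p$, so one of these sets is a singleton exactly when the other is --- which is precisely the claimed equivalence between the two notions of abstract state self-testing. I do not expect a genuine obstacle: the argument merely assembles results already established, and the only point needing a moment's care is checking that the support model $S'$ still realizes $p$ and thus falls under the hypotheses of \Cref{lem:synch}, which is immediate from $\Pi_A\otimes\Pi_B\ket{\psi}=\ket{\psi}$.
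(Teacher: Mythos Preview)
Your proposal is correct and follows essentially the same approach as the paper: both argue that any quantum model $S$ for a synchronous $p$ is centrally supported (via \Cref{lem:synch} and \Cref{prop:centrallysupported}), pass to the full-rank support model $S'$, invoke \Cref{lem:synch} again to see that $S'$ is projective, and conclude $f_S=f_{S'}$ is projective. Your write-up is a bit more explicit than the paper's in justifying $f_S=f_{S'}$ (via \Cref{lem:supportmodel} and \Cref{rmk:Scentral}) and in spelling out the second step, but the underlying argument is the same.
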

\begin{proof}
    By \Cref{lem:synch} and \Cref{prop:centrallysupported}, every model $S$ for $p$ is
    centrally supported. If $S'$ is the support model for $S$, then $S'$ is projective, 
    and $f_S = f_{S'}$ is projective. 
\end{proof}

\subsection{Binary correlations}

A correlation is said to be a binary correlation if the measurement
scenario has only two outputs, usually taken to be $A = B = \{0,1\}$.
We adapt a proof technique from \cite[Proposition 2]{CHTW10}:

\begin{lemma}\label{lem:binary}
Suppose $p\in C_q(X,Y,\{0,1\},\{0,1\})$ is an extreme point in $C_q$. Let \begin{equation*}
    S=(H_A,H_B,\{M_0^x,M_1^x:x\in X\},\{N_0^y,N_1^y:y\in Y\},\ket{\psi})
\end{equation*}
be a quantum model of $p$. For each $x\in X$, let
$\sum_{i=1}^{d}\lambda_i^x\ket{\phi_i^x}\bra{\phi_i^x}$ be the spectral
decomposition of $M^x_0$, where $d=dim(H_A)$. Then one of the following conditions must hold
for each $1\leq i \leq d$ and $x\in X$: 
\begin{enumerate}[(a)]
    \item $\lambda_i^x=0$,
    \item $\lambda_i^x=1$, or
    \item $\ket{\phi_i^x}\bra{\phi_i^x}\otimes\Id\ket{\psi}=0$. 
\end{enumerate}
\end{lemma}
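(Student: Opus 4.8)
The idea is to use extremality of $p$ to force each spectral projection of $M^x_0$ to ``split off'' a sub-correlation equal to $p$, and then analyze what this forces about the eigenvalue. Fix $x \in X$ and write $M^x_0 = \sum_{i=1}^d \lambda_i^x \ket{\phi_i^x}\bra{\phi_i^x}$ with the $\ket{\phi_i^x}$ orthonormal. Since $M^x_1 = \Id - M^x_0$ is determined by $M^x_0$, and since $|A| = 2$, the only freedom in Alice's measurement at input $x$ lies in the eigenvalues $\lambda_i^x$. The plan is to perturb these eigenvalues: for small $\eps$, replace $M^x_0$ by $M^x_0 \pm \eps \ket{\phi_i^x}\bra{\phi_i^x}$ (keeping all other measurement operators and $\ket{\psi}$ the same), which stays a valid POVM provided $0 < \lambda_i^x < 1$. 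This gives two models whose correlations $p_+, p_-$ average to $p$. If $p$ is extreme, then $p_+ = p_- = p$, which forces $\braket{\psi | \ket{\phi_i^x}\bra{\phi_i^x} \otimes N^y_b | \psi} = 0$ for all $y, b$; summing over $b$ gives $\braket{\psi | \ket{\phi_i^x}\bra{\phi_i^x} \otimes \Id | \psi} = 0$, i.e. $\ket{\phi_i^x}\bra{\phi_i^x} \otimes \Id \ket{\psi} = 0$ (since $\ket{\phi_i^x}\bra{\phi_i^x} \otimes \Id$ is a projection), which is case (c).

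\textbf{Carrying this out.} First I would check that the perturbed operators are genuine POVMs: $M^x_0 \pm \eps \ket{\phi_i^x}\bra{\phi_i^x}$ has spectrum obtained from that of $M^x_0$ by shifting the single eigenvalue $\lambda_i^x$ to $\lambda_i^x \pm \eps$, so for $\eps < \min(\lambda_i^x, 1 - \lambda_i^x)$ both remain in $[0,1]$, and $\{M^x_0 \pm \eps\ket{\phi_i^x}\bra{\phi_i^x}, \Id - (M^x_0 \pm \eps\ket{\phi_i^x}\bra{\phi_i^x})\}$ is a POVM. Only the input $x$ is touched; all other inputs $x' \neq x$ keep their original measurements, and Bob's measurements and $\ket{\psi}$ are unchanged. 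The resulting correlations $p_\pm$ satisfy $p_\pm(a,b|x',y) = p(a,b|x',y)$ for $x' \neq x$, and differ from $p$ at input $x$ by exactly $\pm \eps \braket{\psi | \ket{\phi_i^x}\bra{\phi_i^x} \otimes N^y_b|\psi}$ in the $a=0$ row (and the opposite in the $a=1$ row). Since $p = \tfrac12 p_+ + \tfrac12 p_-$ and $p$ is extreme in $C_q$, we get $p_+ = p$, hence $\braket{\psi | \ket{\phi_i^x}\bra{\phi_i^x} \otimes N^y_b|\psi} = 0$ for every $y \in Y$, $b \in B$. Summing over $b \in B$ and using $\sum_b N^y_b = \Id$ yields $\braket{\psi|\ket{\phi_i^x}\bra{\phi_i^x}\otimes \Id|\psi} = 0$; as $\ket{\phi_i^x}\bra{\phi_i^x}\otimes\Id$ is a positive operator (indeed a projection), this gives $\bigl(\ket{\phi_i^x}\bra{\phi_i^x}\otimes\Id\bigr)\ket{\psi} = 0$, which is (c). The contrapositive is what we want: if neither (a) nor (b) holds, i.e. $0 < \lambda_i^x < 1$, then the perturbation is available, and we conclude (c).

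\textbf{Main obstacle.} The only delicate point is making sure the perturbation genuinely stays inside $C_q$ with the \emph{same} question/answer sets and that nothing else in the model changes --- this is what lets us invoke extremality of $p$ in $C_q$ directly. This is straightforward here because $|A| = 2$ forces $M^x_1 = \Id - M^x_0$, so adjusting the eigenvalues of $M^x_0$ automatically produces a consistent POVM pair without introducing any new operators or dimensions; the model $\model$ (with the single operator $M^x_0$ replaced) is literally a quantum model for $p_\pm$ on the same Hilbert spaces. I would also note the edge case where $\lambda_i^x$ has multiplicity greater than one: one must pick the orthonormal eigenbasis $\{\ket{\phi_i^x}\}$ first and then perturb along a single basis vector, which is fine since the statement is phrased for a fixed spectral decomposition. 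This binary structure is exactly why the argument, adapted from \cite[Proposition 2]{CHTW10}, works cleanly here and would not immediately generalize to larger outcome sets.
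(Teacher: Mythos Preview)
Your proof is correct and follows essentially the same approach as the paper: both exploit extremality by writing $p$ as a convex combination of two correlations obtained by moving the single eigenvalue $\lambda_i^x$ along the rank-one direction $\ket{\phi_i^x}\bra{\phi_i^x}$, keeping all other data fixed. The only cosmetic difference is that the paper pushes the eigenvalue all the way to $0$ and $1$ and uses the weights $(\lambda_i^x,1-\lambda_i^x)$, whereas you use a symmetric $\pm\eps$ perturbation with weights $(\tfrac12,\tfrac12)$; both lead to the same vanishing of $\braket{\psi|\ket{\phi_i^x}\bra{\phi_i^x}\otimes\Id|\psi}$ and hence to~(c).
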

\begin{proof}
    Suppose $x_0\in X$ and $i_0 \in \{1,\ldots,d\}$. Define new binary outcome
    measurements $\{E^{x_0}_0,E^{x_0}_1\}$ and $\{F^{x_0}_0, F^{x_0}_1\}$ by 
    $E^{x_0}_0:=\sum_{j\neq i_0}\lambda_j^{x_0}\ket{\phi_j^{x_0}}\bra{\phi_j^{x_0}}$,
    $E^{x_0}_1 := \Id - E^{x_0}_0$,
    $F^{x_0}_0:=E^{x_0}_0+\ket{\phi_{i_0}^{x_0}}\bra{\phi_{i_0}^{x_0}}$, and $F^{x_0}_1:=\Id-F^{x_0}_0$. 
    For the remaining $x\neq x_0\in X$, set 
    $E^{x}_a=F^{x}_a :=M^{x}_a$, so that
    \begin{equation}\label{eq:comb}
       M^x_a=\lambda_{i_0}^{x_0} F^{x}_a+(1-\lambda_{i_0}^{x_0})E^x_a
    \end{equation} 
    for all $x\in X$, $a\in\{0,1\}$. Let $p_1$ and $p_2$ be the correlations with quantum models
    $S_1:=(H_A,H_B,\{F^x_a:a\in \{0,1\}, x\in X\},\{N^y_b:b\in \{0,1\},y\in
    Y\},\ket{\psi})$ and $S_2:=(H_A,H_B,\{E^x_a:a\in \{0,1\}, x\in X\},\{N^y_b:b\in
    \{0,1\},y\in Y\},\ket{\psi})$ respectively.  By
    \Cref{eq:comb} we see that $p=\lambda_{i_0}^{x_0}p_1+(1-\lambda_{i_0}^{x_0})p_2$. Since $p$
    is an extreme point in $C_q$, either $\lambda_{i_0}^{x_0}=0$, $\lambda_{i_0}^{x_0}=1$, or
    $p_1 = p_2 = p$. In the last case, we have that
    $0=\bra{\psi}(F^{x_0}_0-E^{x_0}_0)\otimes\Id\ket{\psi}=\bra{\psi}(\ket{\phi_{i_0}^{x_0}}\bra{\phi_{i_0}^{x_0}})\otimes
    \Id\ket{\psi}$, and since $\ket{\phi_{i_0}^{x_0}}\bra{\phi_{i_0}^{x_0}}\otimes \Id$ is a positive
    operator, $(\ket{\phi_{i_0}^{x_0}}\bra{\phi_{i_0}^{x_0}}\otimes\Id)\ket{\psi}=0$. 
\end{proof}

\begin{proposition}\label{prop:binary}
    Suppose $p\in C_q(X,Y,\{0,1\},\{0,1\})$ is an extreme point in $C_q$. If
    $S$ is a quantum model for $p$, then there is a projective quantum model $S'$
    such that $S \succeq S'$.
\end{proposition}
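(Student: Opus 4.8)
The plan is to use \Cref{lem:binary} to build, for each input $x$, a projective measurement from $M^x_0$ by ``rounding'' its spectral decomposition. Concretely, for $x \in X$, write $M^x_0 = \sum_{i=1}^d \lambda_i^x \ket{\phi_i^x}\bra{\phi_i^x}$ as in the lemma, and define $P^x_0$ to be the sum of those spectral projections $\ket{\phi_i^x}\bra{\phi_i^x}$ for which $\lambda_i^x = 1$, together with \emph{any} choice (say, include it) of the projections with $\lambda_i^x = 0$ --- actually, more carefully, we set $P^x_0 := \sum_{i : \lambda_i^x = 1} \ket{\phi_i^x}\bra{\phi_i^x}$, and $P^x_1 := \Id - P^x_0$, so $\{P^x_0, P^x_1\}$ is a PVM on $H_A$. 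We do the same on Bob's side: for each $y$, diagonalize $N^y_0$ and let $Q^y_0$ be the spectral projection onto the eigenvalue-$1$ eigenspace, $Q^y_1 := \Id - Q^y_0$. This gives a projective model $S' := (H_A, H_B, \{P^x_a\}, \{Q^y_b\}, \ket{\psi})$ on the same Hilbert spaces and same state.

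The key point is that $S'$ is still a model for $p$, and in fact $S \succeq S'$. Here \Cref{lem:binary} does the work: it tells us that every spectral projection $\ket{\phi_i^x}\bra{\phi_i^x}$ with $\lambda_i^x \notin \{0,1\}$ satisfies $\ket{\phi_i^x}\bra{\phi_i^x} \otimes \Id \ket{\psi} = 0$. Therefore
\begin{equation*}
    M^x_0 \otimes \Id \ket{\psi} = \sum_{i=1}^d \lambda_i^x \ket{\phi_i^x}\bra{\phi_i^x} \otimes \Id \ket{\psi} = \sum_{i : \lambda_i^x = 1} \ket{\phi_i^x}\bra{\phi_i^x} \otimes \Id \ket{\psi} = P^x_0 \otimes \Id \ket{\psi},
\end{equation*}
and since $M^x_1 = \Id - M^x_0$ and $P^x_1 = \Id - P^x_0$ also $M^x_1 \otimes \Id\ket{\psi} = P^x_1 \otimes \Id\ket{\psi}$. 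Symmetrically, applying \Cref{lem:binary} with the roles of Alice and Bob swapped (using that $\ket{\psi}$ can be viewed inside $H_B \otimes H_A$), $\Id \otimes N^y_b \ket{\psi} = \Id \otimes Q^y_b\ket{\psi}$ for all $y, b$. Combining these,
\begin{equation*}
    P^x_a \otimes Q^y_b \ket{\psi} = P^x_a \otimes \Id (\Id \otimes N^y_b \ket{\psi}) = P^x_a \otimes N^y_b\ket{\psi} = \Id \otimes N^y_b(M^x_a \otimes \Id \ket{\psi}) = M^x_a \otimes N^y_b \ket{\psi}
\end{equation*}
for all $a,b,x,y$, where we use that operators on $H_A$ commute with operators on $H_B$. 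This shows at once that $S'$ realizes $p$, and that $S' \succeq S$ via the identity isometries --- and the same identity-isometry computation (reading the displayed equation in the other direction) gives $S \succeq S'$, so in particular $S \succeq S'$ as claimed.

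The main thing to be careful about is the invocation of \Cref{lem:binary} on Bob's side: the lemma as stated diagonalizes $M^x_0$ and concludes something about $\ket{\phi_i^x}\bra{\phi_i^x}\otimes\Id\ket{\psi}$, so to get the statement for $N^y_0$ one should either observe that the lemma's proof is symmetric in the two parties (the extremality argument and the positivity argument are identical, just replacing $E^{x_0}_0 \otimes \Id$ by $\Id \otimes E^{y_0}_0$), or note that transposing the bipartite system swaps the roles. I expect this symmetry remark to be the only genuine subtlety; everything else is the spectral-calculus identity above. One should also note explicitly that no appeal to centrally-supported-ness or to the support model is needed here --- the rounding produces a projective model directly on the original Hilbert spaces, which is what makes the binary case cleaner than the general case. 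With \Cref{prop:binary} in hand, \Cref{thm:mainresult3} for binary correlations then follows by the same bookkeeping as in the synchronous case: a self-test for projective models pulls back along $S \succeq S'$ to a self-test for all quantum models, and the abstract-state equivalences come from \Cref{cor:mainresult2}.
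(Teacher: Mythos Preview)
Your proof is correct and follows essentially the same approach as the paper: round each $M^x_0$ to its eigenvalue-$1$ spectral projection $P^x_0$, use \Cref{lem:binary} to see that the discarded spectral pieces annihilate $\ket{\psi}$, do the same on Bob's side, and conclude $S \succeq S'$ via the identity isometries. The paper's write-up is slightly terser (it does not spell out the combined identity $P^x_a \otimes Q^y_b \ket{\psi} = M^x_a \otimes N^y_b \ket{\psi}$ or the Alice/Bob symmetry remark), but the argument is the same.
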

\begin{proof}
    Suppose $\model$. Once again let
    $\sum_{i}\lambda^x_i\ket{\phi^x_i}\bra{\phi^x_i}$ be the spectral
    decomposition of $M^x_0$ for each $x \in X$. For all $x$ and $i$, let
    \begin{equation*}
        \wtd{\lambda}^x_i:=\begin{cases} 1 & \lambda^x_{i}=1,\\
            0 & \text{ otherwise,} \end{cases}
    \end{equation*}
    and define $P^x_0:=\sum_i\wtd{\lambda}^x_i\ket{\phi^x_i}\bra{\phi^x_i}$ and 
    $P^x_1=\Id-P^x_0$. By \Cref{lem:binary}, $\{P^x_0,P^x_1\}$ is a binary outcome PVM such that
    \begin{align*}
        M^x_a\otimes \Id\ket{\psi}=P^x_a\otimes \Id \ket{\psi}
    \end{align*}
    for $a \in \{0,1\}$. Similarly, there are binary outcome PVMs $\{Q^y_0,Q^y_1\}$ on $H_B$ such
    that 
    \begin{equation*}
        \Id \otimes N^y_b \ket{\psi} = \Id \otimes Q^y_b\ket{\psi}
    \end{equation*}
    for all $y \in Y$, $b \in \{0,1\}$. If  $S'=(H_A,H_B,\{P_a^x:a\in A,x\in
X\},\{Q_b^y:b\in \{0,1\},y\in Y\},\ket{\psi})$, then $S \succeq S'$ by the identity
    isometries. 
\end{proof}

\begin{corollary}\label{cor:binaryequiv}
    If $p$ is a binary correlation and an extreme point of $C_q$, then $p$ is a self-test
    for quantum models if and only if $p$ is a self-test for projective quantum models.
\end{corollary}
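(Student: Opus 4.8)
The plan is to deduce this immediately from \Cref{prop:binary} together with \Cref{prop:alltoprojective}, following the same template as the proof of \Cref{prop:synchronousmain3}. The corollary splits into two implications, and only one of them actually uses the binary hypothesis.

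For the ``only if'' direction, suppose $p$ is a self-test for all quantum models. Then \Cref{prop:alltoprojective} directly yields that $p$ is a self-test for projective quantum models (in fact with a full-rank projective ideal model); note that this implication requires neither that $p$ be binary nor, strictly, that $p$ be extremal beyond what is already assumed. For the converse, suppose $p$ is a self-test for projective quantum models, with ideal projective model $\wtd{S}$. Let $S$ be an arbitrary quantum model for $p$. Since $p$ is binary and an extreme point of $C_q$, \Cref{prop:binary} produces a projective quantum model $S'$ for $p$ with $S \succeq S'$. As $\wtd{S}$ is an ideal model for $p$ within the class of projective quantum models and $S'$ lies in that class, we get $S' \succeq \wtd{S}$; transitivity of $\succeq$ (noted after \Cref{def:localdilation}) then gives $S \succeq \wtd{S}$. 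Finally, a projective quantum model is in particular a quantum model, so $\wtd{S}$ belongs to the class of all quantum models, and hence $p$ is a self-test for quantum models with ideal model $\wtd{S}$.

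The real content of the argument is entirely contained in \Cref{prop:binary}, whose proof uses the spectral-decomposition perturbation trick of \Cref{lem:binary} and the extremality of $p$ in an essential way; the corollary itself is then just a short chase through the definitions. The only point worth a second glance is that a single fixed model $\wtd{S}$ serves as the ideal model for both classes simultaneously --- this is automatic here, since \Cref{prop:binary} routes every employed POVM model $S$ through some projective $S'$, which the projective self-test hypothesis then dilates to the \emph{same} $\wtd{S}$ --- so no obstacle arises.
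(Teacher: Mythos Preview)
Your proof is correct and follows essentially the same argument as the paper's own proof: the ``only if'' direction is \Cref{prop:alltoprojective}, and the ``if'' direction routes an arbitrary model $S$ through a projective $S'$ via \Cref{prop:binary}, then uses the projective self-test hypothesis and transitivity of $\succeq$ to reach $\wtd{S}$. The paper's version is simply terser.
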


\begin{proof}
    By \Cref{prop:alltoprojective}, if $p$ is a self-test for all quantum
    models then it is a self-test for projective quantum models. For the other
    direction, suppose $p$ is a self-test for the class of projective quantum
    models with ideal model $\wtd{S}$. By \Cref{prop:binary}, if $S$ is a
    POVM quantum model for $p$, then there is a projective quantum model
    $S'$ with $S \succeq S'$. Since $S' \succeq \wtd{S}$, $p$ is a self-test for all quantum models.
\end{proof}

\subsection{Proof of \cref{thm:mainresult3}}
If $p$ is a synchronous correlation, then the equivalence of (1) and (2) is
\Cref{prop:synchronousmain3}, while the equivalence of (3) and (4) is
\Cref{prop:synchronousprojectivestate}. If $p$ is an extreme point, then
(1) and (3) are equivalent by \Cref{cor:mainresult2}, part (a).

If $p$ is a binary correlation and an extreme point, then (1) and (2) are
equivalent by \Cref{cor:binaryequiv}, while (1) and (3) are equivalent by
\Cref{cor:mainresult2}, part (a) again. Finally (3) always implies (4), and
(4) implies (2) by \Cref{thm:uniquestate}.

\section{Self-testing for infinite-dimensional tensor product models}\label{sec:tensor_product_models}

Recall that a \textbf{tensor product (POVM) model} for a correlation $p$ is a tuple
\begin{equation*}
    \model,
\end{equation*}
where the Hilbert spaces $H_A$ and $H_B$ are not restricted to be
finite-dimensional. The set of correlations $p$ with a tensor product model is
denoted by $C_{qs}=C_{qs}(X,Y,A,B)$. Like $C_{q}$, the set $C_{qs}$ is convex
but not closed \cite{Slof19b}; however, its closure is also the set $C_{qa}$.
Like $C_q$, correlations $p\in C_{qs}$ can be expressed by abstract states
on $\POVM^{X,A} \otimes_{min} \POVM^{Y,B}$. We say a state $f$ on $\POVM^{X,A}
\otimes_{min} \POVM^{Y,B}$ is a \textbf{tensor product state} if there exists a
tensor product model $S$ such that $f=f_S$. Because of the tensor product
structure, the standard definition of self-testing using local dilations in
\Cref{def:localdilation} and \Cref{def:tensorproductselftest} extends straightforwardly
to tensor product models and correlations in $C_{qs}$. 
Coladangelo and Stark have constructed examples of correlations which are in
$C_{qs}$ but not in $C_q$ \cite{CS18}. We do not know of any examples of
self-tests in $C_{qs} \setminus C_{q}$. Since the double-commutant theorem does
not hold for infinite-dimensional Hilbert spaces, we also do not know whether
abstract state self-tests are self-tests when $p$ is an extreme point. 
However, we can prove that self-tests for correlations in $C_{qs}$ are abstract
state self-tests:
\begin{proposition}\label{thm:qs_abstract_state_self-test}
If $p\in C_{qs}$ is a self-test for the class of tensor product models then $p$ is an abstract state self-test for the subset of tensor product states.
\end{proposition}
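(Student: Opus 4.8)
The plan is to run the argument of \Cref{prop:onlyifdirection} — together with the chain of results \Cref{lemma:onlyif,prop:centrallysupported,prop:centrallysupportediff,prop:monomial} that it rests on — in the separable infinite-dimensional setting, noting that the only modification required is to allow certain intertwining operators to be unbounded. One cannot simply invoke \Cref{thm:mainresult1}, since $p$ need not lie in $C_q$ and since the double-commutant theorem, which there lets us pass from states back to models, fails in infinite dimensions.

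First I would check that every $p \in C_{qs}$ has a full-rank tensor product model. For any tensor product model $S$ for $p$, the reduced density operators $\tr_{H_B}(\ket{\psi}\bra{\psi})$ and $\tr_{H_A}(\ket{\psi}\bra{\psi})$ are trace-class, hence have separable range, so $\ket{\psi}$ admits a countable Schmidt decomposition and the support model $S'$ of $S$ (the compression of $S$ to the support projections of $\ket{\psi}$) is a full-rank tensor product model for $p$ on separable Hilbert spaces. This plays the role of the hypothesis in \Cref{prop:onlyifdirection} that the class contains a full-rank model.

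Next, let $\wtd{S}$ be an ideal model for $p$, so that $S' \succeq \wtd{S}$ for the full-rank model just constructed; I would then show $\wtd{S}$ is centrally supported, following \Cref{prop:centrallysupportediff}. The one point that does not literally make sense in infinite dimensions is the appeal to \Cref{prop:centrallysupported}, hence to \Cref{lemma:if}: for a full-rank state $\ket{\psi} = \sum_i \lambda_i \ket{i}\otimes\ket{i}$ with infinitely many Schmidt coefficients, the intertwiner $\widehat{E} = \lambda E^{T} \lambda^{-1}$ of a bounded operator $E$ is only defined on the span of the Schmidt basis and need not be bounded. Nevertheless, $\Id \otimes \widehat{E}$ is still well-defined on $\ket{\psi}$, where it returns the genuine vector $(E \otimes \Id)\ket{\psi}$, and a short computation with the Schmidt decomposition shows that in the proofs of \Cref{lemma:onlyif,prop:centrallysupported,prop:centrallysupportediff} — and later of \Cref{prop:monomial} — the operator $\widehat{E}$, or its conjugate $I_B \widehat{E} I_B^{*}$ by an isometry, is only ever applied to a single fixed bipartite vector, with every intermediate vector lying in the relevant domain. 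Since those proofs manipulate vectors and never invoke operator norms, they go through verbatim with $\widehat{E}$ unbounded; in particular \Cref{lemma:onlyif} remains valid with an unbounded intertwiner of this type, which suffices to conclude $[\wtd{\Pi}_A, \wtd{M}^x_a] = 0$ and $[\wtd{\Pi}_B, \wtd{N}^y_b] = 0$.

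With $\wtd{S}$ centrally supported, the argument of \Cref{prop:monomial} then yields $f_S = f_{\wtd{S}}$ for every tensor product model $S$ for $p$: one has $S \succeq \wtd{S}$, and \Cref{eq:monomial} is established by the same induction on monomial degree, using at step $k+1$ the (possibly unbounded) intertwiner of $\wtd{\pi}_A(\alpha_{k+1})$ applied to $\ket{\wtd{\psi}}$. Hence $f_{\wtd{S}}$ is the unique tensor product state realizing $p$, as claimed. The main obstacle is exactly this bookkeeping: one must keep track of the fact that every application of an unbounded intertwiner is made against a fixed vector and lands in the appropriate domain, so that the commutations of operators acting on different tensor legs that are used implicitly in the finite-dimensional proofs remain legitimate. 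No genuinely new idea beyond the finite-dimensional argument is needed.
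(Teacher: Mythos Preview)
Your overall strategy---mirror the chain \Cref{lemma:onlyif}, \Cref{lemma:if}, \Cref{prop:centrallysupported}, \Cref{prop:centrallysupportediff}, \Cref{prop:monomial}, \Cref{prop:onlyifdirection} in the infinite-dimensional setting---is exactly what the paper does. The difference lies in how you handle the failure of \Cref{lemma:if}: you propose to keep the exact intertwiner $\widehat{E}=\lambda E^{T}\lambda^{-1}$ and treat it as a possibly unbounded operator, whereas the paper replaces \Cref{lemma:if} by an \emph{approximate} version (for every $\epsilon>0$ there is a \emph{bounded} $\widehat{E}$ with $E\otimes\Id\ket{\psi}\approx_\epsilon\Id\otimes\widehat{E}\ket{\psi}$), obtained by truncating the Schmidt decomposition to finitely many terms. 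All of \Cref{lemma:onlyif}, \Cref{prop:centrallysupported}, \Cref{prop:centrallysupportediff}, and \Cref{prop:monomial} are then rerun with $\approx_\epsilon$ in place of equality; since every operator appearing has norm at most $1$, the errors propagate additively, and letting $\epsilon\to 0$ recovers exact equalities.

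Your route can be made to work, but the ``bookkeeping'' you flag is more substantive than you suggest. In the induction step of \Cref{prop:monomial} one must apply $\Id\otimes\wtd{F}$ not to $\ket{\wtd{\psi}}$ but to $\wtd{\pi}_A(\alpha')\otimes\Id\ket{\wtd{\psi}}$, and then commute it past $\wtd{\pi}_A(\alpha')$; with $\wtd{F}$ unbounded this is not automatic and requires checking that $\wtd{F}$ is defined on each Schmidt vector $\ket{b_j}$ (which does hold, since $\|\lambda E^{*}\ket{a_j}\|\leq\|E\|$) and that the relevant double sums converge. Similar care is needed in transporting $\widehat{M}^x_a$ through $I_B$ in the analogue of \Cref{prop:centrallysupportediff}. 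The paper's $\epsilon$-approximation sidesteps all of this by never leaving the bounded world, at the cost of an extra limiting argument; your approach is more direct once the domain checks are carried out, but those checks are where the actual work is.
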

The proof of \cref{thm:qs_abstract_state_self-test} is similar to the proof of
\Cref{prop:onlyifdirection}. First, note that the definition of support
projections, support models, and centrally supported models generalize straightforwardly
to tensor product models. A tensor product model $S$ is said to be full-rank if
the support projections of the vector state in $S$ are the identity operators,
so in particular a full-rank tensor product model is centrally supported. Although
\cref{lemma:onlyif} and \cref{lemma:if} no longer hold for tensor product
models, they can be replaced by ``approximate'' versions. To state these
lemmas, we use the following notation: if $\ket{v_1}$ and $\ket{v_2}$
are vectors in a Hilbert space $H$, we write
$\ket{v_1}\approx_\epsilon\ket{v_2}$ if $\norm{\ket{v_1}-\ket{v_2}}\leq
\epsilon$. By the triangle inequality, we see that
$\ket{v_1}\approx_{\epsilon_1}\ket{v_2}\approx_{\epsilon_2}\ket{v_3}$ implies
$\ket{v_1}\approx_{\epsilon_1+\epsilon_2}\ket{v_3}$. Also, if $\ket{v_1} \approx_{\eps} \ket{v_2}$ and $T \in \msB(H)$, then $T \ket{v_1} \approx_{\norm{T} \eps} T \ket{v_2}$. 

\begin{lemma}\label{lemma:qs-if}
Let $\ket{\psi}\in H_A\otimes H_B$ be a bipartite vector state, let $\Pi_A$ and $\Pi_B$ be the support projections of $\ket{\psi}$, and let $E$ be a self-adjoint operator in $\msB(H_A)$. If, for any $\epsilon>0$, there exists an operator $\widehat{E}\in\msB(H_B)$ such that $E\otimes \Id\ket{\psi}\approx_\epsilon\Id\otimes \widehat{E}\ket{\psi}$, then the support projection $\Pi_A$ commutes with $E$.
\end{lemma}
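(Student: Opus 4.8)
The plan is to adapt the proof of \Cref{lemma:onlyif}, exploiting the observation that the projection $(\Id-\Pi_A)\otimes\Id$ annihilates the approximating vector $\Id\otimes\widehat{E}\ket{\psi}$ exactly, not merely approximately. First I would fix a Schmidt decomposition $\ket{\psi}=\sum_{i\in\mcI}\lambda_i\ket{\alpha_i}\otimes\ket{\beta_i}$, where $\mcI$ is now possibly countably infinite (see \cite[Theorem A.5]{CLP17}), so that $\Pi_A$ is the orthogonal projection onto the closed span of $\{\ket{\alpha_i}:i\in\mcI\}$ and in particular $(\Pi_A\otimes\Id)\ket{\psi}=\ket{\psi}$.

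Given $\eps>0$, pick $\widehat{E}\in\msB(H_B)$ with $E\otimes\Id\ket{\psi}\approx_\eps\Id\otimes\widehat{E}\ket{\psi}$ and apply the contraction $(\Id-\Pi_A)\otimes\Id$ to both sides. Since $\Id-\Pi_A$ and $\widehat{E}$ act on different tensor factors they commute, so $\big((\Id-\Pi_A)\otimes\Id\big)\big(\Id\otimes\widehat{E}\big)\ket{\psi}=\big(\Id\otimes\widehat{E}\big)\big((\Id-\Pi_A)\otimes\Id\big)\ket{\psi}=0$, and hence $\norm{\big((\Id-\Pi_A)E\otimes\Id\big)\ket{\psi}}\leq\eps$. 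As $\eps>0$ is arbitrary, $\big((\Id-\Pi_A)E\otimes\Id\big)\ket{\psi}=0$. This is the step where the quantifier ``for all $\eps$'' does its work: the cross term drops out identically, so there is nothing left to control.

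From here I would run the Schmidt-basis argument of \Cref{lemma:onlyif} essentially verbatim: expanding gives $\sum_{i\in\mcI}\lambda_i\big((\Id-\Pi_A)E\ket{\alpha_i}\big)\otimes\ket{\beta_i}=0$, and since $\{\ket{\beta_i}\}$ is orthonormal the squared norm of the left side is $\sum_i\abs{\lambda_i}^2\norm{(\Id-\Pi_A)E\ket{\alpha_i}}^2$, which forces $(\Id-\Pi_A)E\ket{\alpha_i}=0$ for every $i$ because each $\lambda_i>0$. Applying continuity of the bounded operator $(\Id-\Pi_A)E$ to the expansion $\Pi_A\ket{h}=\sum_i\ang{\alpha_i|h}\ket{\alpha_i}$ yields $(\Id-\Pi_A)E\Pi_A=0$, i.e. $E\Pi_A=\Pi_A E\Pi_A$, and taking adjoints (using that $E$ and $\Pi_A$ are self-adjoint) gives $\Pi_A E=E\Pi_A$, as desired.

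I do not expect a serious obstacle. The only deviations from the finite-dimensional case are bookkeeping: the Schmidt index set may now be countably infinite, so the manipulations with the sums are justified by orthonormality of $\{\ket{\beta_i}\}$ together with continuity of bounded operators, exactly as in the infinite-dimensional Schmidt decomposition of \cite[Theorem A.5]{CLP17}; and the ``approximate'' hypothesis is handled precisely by the exact vanishing of $\big((\Id-\Pi_A)\otimes\Id\big)\big(\Id\otimes\widehat{E}\big)\ket{\psi}$.
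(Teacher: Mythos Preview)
Your proposal is correct and follows essentially the same route as the paper's proof: both reduce to showing $(\Id-\Pi_A)E\otimes\Id\ket{\psi}=0$ and then invoke the Schmidt decomposition exactly as in \Cref{lemma:onlyif}. Your observation that $(\Id-\Pi_A)\otimes\Id$ kills $\Id\otimes\widehat{E}\ket{\psi}$ exactly is a slight streamlining of the paper's version, which instead applies $\Pi_A\otimes\Id$ and uses a triangle-inequality $\epsilon/2$ split to reach the same conclusion.
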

\begin{proof}
    Let $\ket{\psi}=\sum_{i\in\Lambda}\lambda_i\ket{\alpha_i}\otimes\ket{\beta_i}$ be a
    Schmidt decomposition of $\ket{\psi}$ where $\Lambda\subset \N$, so the support
    projection of $\ket{\psi}$ on $H_A$ is $\Pi_A=\sum_{i\in
    \Lambda}\ket{\alpha_i}\bra{\alpha_i}$. Now, fix an $\epsilon>0$ and let
    $\widehat{E}$ be an operator in $\mathscr{B}(H_B)$ such that $E\otimes
    \Id\ket{\psi}\approx_{\epsilon/2}\Id\otimes \widehat{E}\ket{\psi}$. Since
    $\Pi_A\leq \Id$, we see that
    \begin{align*}
        \Pi_AE\otimes \Id\ket{\psi}\approx_{\epsilon/2}\Pi_A\otimes \widehat{E}\ket{\psi}=\Id\otimes \widehat{E}\ket{\psi}\approx_{\epsilon/2}E\otimes \Id\ket{\psi}.
    \end{align*}
    Hence $(\Id-\Pi_A)E\otimes \Id\ket{\psi}\approx_{\epsilon} 0$ for all
    $\epsilon>0$, and we must have $(\Id-\Pi_A)E\otimes \Id\ket{\psi}= 0$. This
    implies
    \begin{align*}
        \sum_{i\in\Lambda}\lambda_i\big((\Id-\Pi_A)E\ket{\alpha_i} \big)\otimes \ket{\beta_i}=0.
    \end{align*}
    Since $\lambda_i>0$ for all $i\in\Lambda$ and $\{\ket{\beta_i}:i\in\Lambda \}$
    is an orthonormal subset, it follows that $(\Id-\Pi_A)E\ket{\alpha_i}=0$ for
    all $i\in\Lambda$. Then
    \begin{equation*}
      (\Id-\Pi_A)E\Pi_A=\sum_{i\in\Lambda}(\Id-\Pi_A)E\ket{\alpha_i}\bra{\alpha_i}=0,
    \end{equation*} 
    and $E\Pi_A=\Pi_AE\Pi_A$. Since $E$ is self-adjoint, $\Pi_AE=(E\Pi_A)^*=\Pi_AE\Pi_A=E\Pi_A$. 
\end{proof}

\begin{lemma}\label{lemma:qs-onlyif}
If $\ket{\psi}\in H_A\otimes H_B$ is a full-rank vector state, then for any $\epsilon>0$ and $E\in\msB(H_A)$ (resp. $F\in\msB(H_B)$), there is an operator $\widehat{E}\in\msB(H_B)$ (resp. $\widehat{F}\in\msB(H_A)$) such that $E\otimes\Id\ket{\psi}\approx_\epsilon\Id\otimes\widehat{E}\ket{\psi}$ (resp. $\Id\otimes F\ket{\psi}\approx_\epsilon\widehat{F}\otimes \Id\ket{\psi}$).
\end{lemma}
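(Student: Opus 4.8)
The plan is to adapt the proof of \cref{lemma:if} by a truncation argument. Using the Schmidt decomposition together with the full-rank hypothesis, I would first reduce to the case $H_A = H_B = \ell^2(\mcI)$ for some (finite or countable) index set $\mcI$, with $\ket{\psi} = \sum_{i \in \mcI}\lambda_i\ket{i}\otimes\ket{i}$ where every $\lambda_i > 0$ and $\sum_{i\in\mcI}\lambda_i^2 = 1$. The only reason the finite-dimensional argument does not go through verbatim is that the diagonal operator $\lambda := \sum_i \lambda_i\ket{i}\bra{i}$ is no longer boundedly invertible, so $\widehat{E} = \lambda E^T \lambda^{-1}$ need not be bounded; the fix is to build $\widehat{E}$ from a sufficiently large finite corner of $H_A\otimes H_B$.

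Concretely, for a finite nonempty $F \subseteq \mcI$ let $P_F := \sum_{i\in F}\ket{i}\bra{i}$ and $\ket{\psi_F} := (\Id\otimes P_F)\ket{\psi} = \sum_{i\in F}\lambda_i\ket{i}\otimes\ket{i}$, so $\ket{\psi_F}/\norm{\ket{\psi_F}}$ is a full-rank vector state in the finite-dimensional space $P_F H_A\otimes P_F H_B$. Applying \cref{lemma:if} to the compression $P_F E P_F$ gives an operator $\widehat{E}_F \in \msB(P_F H_B)\subseteq\msB(H_B)$ --- explicitly, $\widehat{E}_F = \sum_{i,j\in F}\tfrac{\lambda_i}{\lambda_j}\braket{j|E|i}\,\ket{i}\bra{j}$, which is finite-rank, hence bounded --- with $(P_F E P_F\otimes\Id)\ket{\psi_F} = (\Id\otimes\widehat{E}_F)\ket{\psi_F}$. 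Since $\widehat{E}_F$ annihilates $(\Id-P_F)H_B$ while $\ket{\psi}-\ket{\psi_F}$ lies in $H_A\otimes(\Id-P_F)H_B$, it follows that $(\Id\otimes\widehat{E}_F)\ket{\psi} = (\Id\otimes\widehat{E}_F)\ket{\psi_F} = (P_F E P_F\otimes\Id)\ket{\psi_F}$. Thus it is enough to choose $F$ so that $\norm{(E\otimes\Id)\ket{\psi} - (P_F E P_F\otimes\Id)\ket{\psi_F}} \le \epsilon$.

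For the estimate I would compute
\begin{equation*}
    (E\otimes\Id)\ket{\psi} - (P_F E P_F\otimes\Id)\ket{\psi_F} = \sum_{i\notin F}\lambda_i(E\ket{i})\otimes\ket{i} + \sum_{i\in F}\lambda_i\big((\Id-P_F)E\ket{i}\big)\otimes\ket{i},
\end{equation*}
whose squared norm, by orthonormality of $\{\ket{i}\}$, equals $\sum_{i\notin F}\lambda_i^2\norm{E\ket{i}}^2 + \sum_{i\in F}\lambda_i^2\norm{(\Id-P_F)E\ket{i}}^2$. The first sum is at most $\norm{E}^2\sum_{i\notin F}\lambda_i^2$ and tends to $0$ as $F$ exhausts $\mcI$. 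The hard part --- and really the only subtlety in the whole proof --- is the second sum: since $E$ need not come close to preserving $P_F H_A$, one cannot bound $\norm{(\Id-P_F)E\ket{i}}$ uniformly in $i\in F$. The remedy is a two-stage choice of $F$: first fix a finite $F_0$ with $\norm{E}^2\sum_{i\notin F_0}\lambda_i^2 \le \epsilon^2/4$; then, using that $\{E\ket{i} : i\in F_0\}$ is a \emph{finite} set of vectors and that $P_F\to\Id$ strongly, fix a finite $F_1\supseteq F_0$ with $\norm{(\Id-P_{F_1})E\ket{i}}^2 \le \epsilon^2/4$ for every $i\in F_0$. Taking $F := F_1$ and splitting the second sum as $\sum_{i\in F_0} + \sum_{i\in F_1\setminus F_0}$ --- bounding the first piece by the uniform estimate over $F_0$ (and $\sum_{i\in F_0}\lambda_i^2\le 1$), and the second piece by $\norm{E}^2\sum_{i\notin F_0}\lambda_i^2$ --- keeps the total squared error below $\epsilon^2$. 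Then $\widehat{E} := \widehat{E}_{F_1}$ works, and the ``resp.'' statement for $F\in\msB(H_B)$ follows from the symmetric construction with the roles of $H_A$ and $H_B$ exchanged.
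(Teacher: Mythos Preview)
Your proof is correct and follows essentially the same truncation strategy as the paper: project to a finite-dimensional Schmidt corner, apply \cref{lemma:if} there, extend the resulting operator trivially, and control the approximation error. The only difference is cosmetic --- your two-stage choice of $F$ works, but the paper notes more directly that the squared error you compute is exactly the tail $\sum_{(j,i)\notin F\times F}\lvert\lambda_i\braket{j|E|i}\rvert^2$ of the square-summable double series $\norm{E\otimes\Id\ket{\psi}}^2=\sum_{i,j}\lvert\lambda_i\braket{j|E|i}\rvert^2$, so a single sufficiently large $F$ already does the job.
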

\begin{proof}
If $H_A$ or $H_B$ is finite-dimensional, then both are finite-dimensional, and the lemma follows from
\Cref{lemma:if}. Suppose that $H_A$ and $H_B$ are infinite-dimensional Hilbert spaces. 
Since the vector state
$\ket{\psi}\in H_A\otimes H_B$ is full rank, it has Schmidt decomposition
\begin{align*}
\ket{\psi}=\sum_{i=1}^{\infty}\lambda_i\ket{\alpha_i}\otimes\ket{\beta_i},
\end{align*}
where $(\lambda_i)_{i=1}^{\infty}$ is a positive sequence in $\ell^2(\N)$ such that $\sum_{i=1}^\infty\abs{\lambda_i}^2=1$, and $\{\ket{\alpha_i}:i\in\N\}$ and $\{\ket{\beta_i}:i\in\N \}$ are orthonormal bases for $H_A$ and $H_B$ respectively. In particular, $H_A$ and $H_B$ are separable. Suppose $\epsilon>0$ and $E$ is an operator in $\msB(H_A)$. Let $E_{ij}:=\bra{\alpha_i}E\ket{\alpha_j}$ for every $i,j\in\N$. Since $E\otimes\Id\ket{\psi}=\sum_{i,j=1}^{\infty}\lambda_jE_{ij}\ket{\alpha_i}\otimes\ket{\beta_j}$ and $\sum_{i,j=1}^{\infty}\abs{\lambda_jE_{ij}}^2=\norm{E\otimes\Id\ket{\psi}}^2<\infty$, there exists an $N\in \N$ such that
\begin{align*}
  \sum_{i,j=1}^{N}\abs{\lambda_jE_{ij}}^2\geq \norm{E\otimes\Id\ket{\psi}}^2-\epsilon^2.  \label{eq:truncation}
\end{align*}
Let $\wtd{H}_A:=\text{span}\{\ket{\alpha_i}:1\leq i\leq N\}$,
$\wtd{H}_B:=\text{span}\{\ket{\beta_i}:1\leq i\leq N\}$, and let
$\wtd{\Pi}_A:=\sum_{i=1}^N\ket{\alpha_i}\bra{\alpha_i}\in\msB(H_A)$ and
$\wtd{\Pi}_B:=\sum_{i=1}^N\ket{\beta_i}\bra{\beta_i}\in\msB(H_B)$ be the projections
onto $\wtd{H}_A$ and $\wtd{H}_B$ respectively. Then
$\ket{\wtd{\psi}}:=\wtd{\Pi}_A\otimes
\wtd{\Pi}_B\ket{\psi}=\sum_{i=1}^N\lambda_i\ket{\alpha_i}\otimes\ket{\beta_i}$ is a
full-rank vector in the finite-dimensional space $\wtd{H}_A\otimes \wtd{H}_B$,
and
\begin{equation*}
    E\otimes\Id_{H_B}\ket{\psi}\approx_{\epsilon}  \wtd{\Pi}_AE\wtd{\Pi}_A\otimes
    \Id_{\wtd{H}_B}\ket{\wtd{\psi}}.
\end{equation*}
By \Cref{lemma:if}, there exists an operator 
$\widehat{E}\in\msB(\wtd{H}_B)$ such that 
\begin{align*}
\wtd{\Pi}_AE\wtd{\Pi}_A\otimes \Id_{\wtd{H}_B}\ket{\wtd{\psi}}
=\Id_{\wtd{H}_A}\otimes \widehat{E}\ket{\wtd{\psi}}.
\end{align*}
Extending $\widehat{E}$ to $H_B$ by having it act trivially on 
$\{\ket{\beta_i}:i\geq N+1 \}$, we see that
$E \otimes \Id \ket{\psi} \approx_{\epsilon} \Id \otimes \widehat{E} \ket{\psi}$. 
The existence of $\widehat{F}$ for $F$ and $\epsilon$ follows similarly.
\end{proof}

\cref{lemma:qs-if} and \cref{lemma:qs-onlyif} allow us to establish the following characterization of centrally supported tensor product models. 

\begin{proposition}\label{prop:qs-centrallysupported}
    A tensor product model $$\model$$ is centrally supported if and only if for any $\epsilon>0$ and every $a\in A$, $x\in X$ (resp. $b\in B$, $y\in Y$) there exists an operator $\widehat{M}_a^x\in \msB(H_B)$ (resp. $\widehat{N}_b^y\in \msB(H_A)$) such that $M_a^x\otimes \Id|\psi\rangle\approx_\epsilon\Id\otimes \widehat{M}_a^x|\psi\rangle$ (resp. $\Id \otimes N_b^y\ket{\psi}\approx_\epsilon\widehat{N}_b^y\otimes \Id\ket{\psi}$) .
\end{proposition}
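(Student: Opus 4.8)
The plan is to establish \cref{prop:qs-centrallysupported} as the exact tensor-product analogue of \cref{prop:centrallysupported}, replacing the exact intertwining relations with approximate ones and substituting \cref{lemma:qs-if} and \cref{lemma:qs-onlyif} for \cref{lemma:onlyif} and \cref{lemma:if}. First I would prove the ``if'' direction. Suppose that for every $\epsilon > 0$ and every $a \in A$, $x \in X$ there is an operator $\widehat{M}^x_a \in \msB(H_B)$ with $M^x_a \otimes \Id \ket{\psi} \approx_\epsilon \Id \otimes \widehat{M}^x_a \ket{\psi}$. Each $M^x_a$ is self-adjoint (it is positive), so \cref{lemma:qs-if} applies directly and gives $[\Pi_A, M^x_a] = 0$ for all $a, x$. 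The symmetric hypothesis on Bob's side together with the Bob-version of \cref{lemma:qs-if} gives $[\Pi_B, N^y_b] = 0$, and hence $S$ is centrally supported.

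Next I would prove the ``only if'' direction. Assume $S$ is centrally supported, so $[\Pi_A, M^x_a] = 0$ and $[\Pi_B, N^y_b] = 0$ for all indices. As in the proof of \cref{prop:centrallysupported}, the vector $\Pi_A \otimes \Pi_B \ket{\psi}$ is a full-rank vector state in the Hilbert space $(\Pi_A H_A) \otimes (\Pi_B H_B)$ (this is immediate from the Schmidt decomposition of $\ket{\psi}$, whose summands all lie in this subspace). Fix $\epsilon > 0$. Viewing $\Pi_A M^x_a \Pi_A$ as an operator on $\Pi_A H_A$ and applying \cref{lemma:qs-onlyif} to the full-rank state $\Pi_A \otimes \Pi_B \ket{\psi}$, we obtain an operator $\widehat{M}^x_a \in \msB(\Pi_B H_B)$, which we extend to $\msB(H_B)$ by letting it act as zero on $(\Id - \Pi_B) H_B$, such that
\begin{equation*}
    \Pi_A M^x_a \Pi_A \otimes \Pi_B \ket{\psi} \approx_\epsilon \Pi_A \otimes \widehat{M}^x_a \Pi_B \ket{\psi}.
\end{equation*}
Using $[\Pi_A, M^x_a] = 0$ and $\Pi_A \otimes \Pi_B \ket{\psi} = \ket{\psi}$, the left side equals $M^x_a \otimes \Id \ket{\psi}$, and since $\widehat{M}^x_a$ already acts trivially off $\Pi_B H_B$ the right side equals $\Id \otimes \widehat{M}^x_a \ket{\psi}$. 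Hence $M^x_a \otimes \Id \ket{\psi} \approx_\epsilon \Id \otimes \widehat{M}^x_a \ket{\psi}$. The analogous construction with the roles of $A$ and $B$ exchanged, using the Bob-version of \cref{lemma:qs-onlyif}, produces the desired $\widehat{N}^y_b \in \msB(H_A)$.

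I expect no serious obstacle here: the argument is a routine transcription of the finite-dimensional proof of \cref{prop:centrallysupported}, with the only subtlety being bookkeeping around the approximate relations and making sure the extended operators $\widehat{M}^x_a$ and $\widehat{N}^y_b$ act trivially on the orthogonal complements of the support subspaces so that the identities $\Pi_A \otimes \widehat{M}^x_a \Pi_B \ket{\psi} = \Id \otimes \widehat{M}^x_a \ket{\psi}$ hold on the nose. One point worth checking carefully is that $\Pi_A \otimes \Pi_B \ket{\psi}$ really is full-rank as a vector in $(\Pi_A H_A) \otimes (\Pi_B H_B)$ even in the infinite-dimensional setting, but this follows directly from the existence of the Schmidt decomposition recalled in \cref{sec:prelims} (all Schmidt coefficients are strictly positive, and the Schmidt vectors span exactly the supports). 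With \cref{prop:qs-centrallysupported} in hand, the remaining pieces needed for \cref{thm:qs_abstract_state_self-test} — an approximate analogue of \cref{prop:monomial} and the deduction following the pattern of \cref{prop:onlyifdirection} — follow in the same spirit.
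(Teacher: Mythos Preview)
Your proposal is correct and follows essentially the same approach as the paper's proof: both directions invoke \cref{lemma:qs-if} and \cref{lemma:qs-onlyif} exactly as you describe, with the same reduction to the full-rank vector $\Pi_A \otimes \Pi_B \ket{\psi}$ in $(\Pi_A H_A) \otimes (\Pi_B H_B)$ and the same trivial extension of $\widehat{M}^x_a$ off $\Pi_B H_B$.
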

\begin{proof}
The ``if" part follows directly from \cref{lemma:qs-if}. For the ``only if" part,
let $\Pi_A$ and $\Pi_B$ be the support projections of $\ket{\psi}$, and suppose
$[\Pi_A,M^x_a]=[\Pi_B,N^y_b]=0$ for all $(a,b,x,y)\in A\times B\times X\times
Y$. Since $\Pi_A\otimes\Pi_B\ket{\psi}$ is full-rank in $\Pi_AH_A\otimes
\Pi_BH_B$, by \cref{lemma:qs-onlyif}, for any $\epsilon>0$ there is an operator
$\widehat{M}^x_a$ in $\msB(\Pi_BH_B)$ (which can be regarded as an operator in $\msB(H_B)$
acting trivially on $(\Id-\Pi_B)H_B$) such that
$\Pi_AM^x_a\Pi_A\otimes\Pi_B\ket{\psi}\approx_\epsilon\Pi_A\otimes
\widehat{M}^x_a\Pi_B\ket{\psi}$ for any $a\in A,x\in X$. Therefore
\begin{align*}
M^x_a\otimes\Id\ket{\psi}=\Pi_AM^x_a\Pi_A\otimes\Pi_B\ket{\psi}\approx_\epsilon\Pi_A\otimes \widehat{M}^x_a\Pi_B\ket{\psi}=\Id\otimes\widehat{M}^x_a\ket{\psi}.
\end{align*}
Similarly, for any $\epsilon>0$ and every $b\in B,y\in Y$ there exists $\widehat{N}_b^y\in \mathcal{B}(H_A)$ such that $\Id \otimes N_b^y\ket{\psi}\approx_\epsilon\widehat{N}_b^y\otimes \Id\ket{\psi}$.
\end{proof}

\begin{proposition}\label{qs-prop:monomial}
Let 
\begin{align*}
   &\model \text{ and }\\
   &\wtdmodel
\end{align*}
be two tensor product models for a correlation $p\in C_{qs}(X,Y,A,B)$, with associated representations $\pi_A\otimes\pi_B$ and $\wtd{\pi}_A\otimes\wtd{\pi}_B$ respectively. Suppose $S\succeq\widetilde{S}$ via local isometries $I_A,I_B$ and vector state $\ket{aux}\in H_A^{aux}\otimes H_B^{aux}$. If $\wtd{S}$ is centrally supported, then
\begin{align}
(\wtd{\pi}_A(\alpha)\otimes\Id_{\wtd{H}_B}\ket{\wtd{\psi}})\otimes\ket{aux}=I_A\pi(\alpha)I_A^*\otimes\Id_{\wtd{H}_B\otimes H_B^{aux}}\ket{\wtd{\psi},aux}\text{ for all }\alpha\in\POVM^{X,A}, \label{eq:qs-monomial}\\
(\Id_{\wtd{H}_A}\otimes\wtd{\pi}_B(\beta)\ket{\wtd{\psi}})\otimes\ket{aux}=\Id_{\wtd{H}_A\otimes H_A^{aux}}\otimes I_B\pi(\beta)I_B^*\ket{\wtd{\psi},aux} \text{ for all }\beta\in\POVM^{Y,B}, \label{eq:qs-monomial2}
\end{align}
and $f_S=f_{\wtd{S}}$.
\end{proposition}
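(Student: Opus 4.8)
The plan is to follow the proof of \Cref{prop:monomial}, first establishing \eqref{eq:qs-monomial} by induction on monomial degree and then reading off $f_S=f_{\wtd{S}}$ exactly as there. The one new subtlety is that \Cref{prop:qs-centrallysupported} supplies only an \emph{approximate} transfer operator, so at each stage I will prove an $\epsilon$-approximate version of \eqref{eq:qs-monomial} for every $\epsilon>0$ and then let $\epsilon\to 0$. First I would reduce to monomials: both sides of \eqref{eq:qs-monomial} are linear and norm-continuous in $\alpha$ (all operators in sight are contractions, and $\|\wtd{\pi}_A(\alpha)\|\le\|\alpha\|$), and the monomials in the generators $e^x_a$ span a dense subalgebra of $\POVM^{X,A}$, so it suffices to treat monomials $\alpha=\alpha_1\cdots\alpha_k$ with $\alpha_i\in\{e^x_a\}$. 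The cases $k=0,1$ are \emph{exact} and come straight from \Cref{def:localdilation} (for $k=1$, sum the dilation identity over $b$ and use $I_A^*I_A=\Id$), just as in \Cref{prop:monomial}.

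For the inductive step, assume \eqref{eq:qs-monomial} holds exactly for degree $k$, write $\alpha=\alpha'\alpha_{k+1}$ of degree $k+1$ with $\alpha'=\alpha_1\cdots\alpha_k$, and fix $\delta>0$. Since $\wtd{S}$ is centrally supported, \Cref{prop:qs-centrallysupported} gives $\wtd{F}\in\msB(\wtd{H}_B)$ with $\wtd{\pi}_A(\alpha_{k+1})\otimes\Id\ket{\wtd{\psi}}\approx_\delta\Id\otimes\wtd{F}\ket{\wtd{\psi}}$. The key point is that $\wtd{F}$ need not be bounded uniformly in $\delta$, so I must never apply it to an approximate equality; instead I feed it only exact ones. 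Applying the contraction $\wtd{\pi}_A(\alpha')\otimes\Id$ gives $\wtd{\pi}_A(\alpha)\otimes\Id\ket{\wtd{\psi}}\otimes\ket{aux}\approx_\delta\wtd{\pi}_A(\alpha')\otimes\wtd{F}\ket{\wtd{\psi}}\otimes\ket{aux}$. Applying the bounded operator $\Id_{\wtd{H}_A\otimes H_A^{aux}}\otimes\wtd{F}\otimes\Id_{H_B^{aux}}$ to the \emph{exact} inductive hypothesis $(\wtd{\pi}_A(\alpha')\otimes\Id\ket{\wtd{\psi}})\otimes\ket{aux}=I_A\pi_A(\alpha')I_A^*\otimes\Id\ket{\wtd{\psi},aux}$, and noting that $\Id\otimes\wtd{F}\otimes\Id$ commutes with $I_A\pi_A(\alpha')I_A^*\otimes\Id$ since they act on complementary tensor legs, I get the \emph{exact} identity $\wtd{\pi}_A(\alpha')\otimes\wtd{F}\ket{\wtd{\psi}}\otimes\ket{aux}=(I_A\pi_A(\alpha')I_A^*\otimes\Id)(\Id\otimes\wtd{F}\otimes\Id)\ket{\wtd{\psi},aux}$. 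Next, $(\Id\otimes\wtd{F}\otimes\Id)\ket{\wtd{\psi},aux}=(\Id_{\wtd{H}_A}\otimes\wtd{F}\ket{\wtd{\psi}})\otimes\ket{aux}\approx_\delta(\wtd{\pi}_A(\alpha_{k+1})\otimes\Id\ket{\wtd{\psi}})\otimes\ket{aux}=I_A\pi_A(\alpha_{k+1})I_A^*\otimes\Id\ket{\wtd{\psi},aux}$ by central support and the $k=1$ case, and applying the contraction $I_A\pi_A(\alpha')I_A^*\otimes\Id$ and using $I_A^*I_A=\Id$ to collapse $I_A\pi_A(\alpha')I_A^*I_A\pi_A(\alpha_{k+1})I_A^*=I_A\pi_A(\alpha)I_A^*$, the right-hand side becomes $I_A\pi_A(\alpha)I_A^*\otimes\Id\ket{\wtd{\psi},aux}$. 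Chaining the three relations gives $\wtd{\pi}_A(\alpha)\otimes\Id\ket{\wtd{\psi}}\otimes\ket{aux}\approx_{2\delta}I_A\pi_A(\alpha)I_A^*\otimes\Id\ket{\wtd{\psi},aux}$; since $\delta>0$ is arbitrary this forces the exact identity, completing the induction. The proof of \eqref{eq:qs-monomial2} is symmetric.

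With \eqref{eq:qs-monomial} and \eqref{eq:qs-monomial2} now available as exact identities, the equality $f_S=f_{\wtd{S}}$ follows verbatim from the end of the proof of \Cref{prop:monomial}: for $\alpha\in\POVM^{X,A}$ and $\beta\in\POVM^{Y,B}$, factor $\wtd{\pi}_A(\alpha)\otimes\wtd{\pi}_B(\beta)\otimes\Id_{H_A^{aux}\otimes H_B^{aux}}$ into an $A$-leg factor and a $B$-leg factor, apply the two monomial identities to $\ket{\wtd{\psi},aux}$, and use $(I_A\otimes I_B)\ket{\psi}=\ket{\wtd{\psi},aux}$ together with $I_A^*I_A=\Id$, $I_B^*I_B=\Id$ to conclude $f_{\wtd{S}}(\alpha\otimes\beta)=\bra{\psi}\pi_A(\alpha)\otimes\pi_B(\beta)\ket{\psi}=f_S(\alpha\otimes\beta)$. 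I expect the main obstacle to be exactly the non-uniform boundedness of $\wtd{F}$: a literal transcription of the finite-dimensional induction would multiply the accumulated error by $\|\wtd{F}\|$ at each step and thus say nothing, and the organization above---routing $\wtd{F}$ only through exact equalities while all approximate steps apply norm-$\le 1$ operators---is what rescues the argument.
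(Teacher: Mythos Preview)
Your proof is correct and follows essentially the same approach as the paper: induction on monomial degree, using \Cref{prop:qs-centrallysupported} to produce an approximate transfer operator $\wtd{F}$, applying $\wtd{F}$ only to the exact inductive hypothesis, and sandwiching the two $\approx$-steps between contractions so that the total error stays bounded and can be sent to zero. Your explicit discussion of why $\wtd{F}$ must be routed only through exact equalities (because $\|\wtd{F}\|$ is not controlled uniformly in $\delta$) makes explicit a point the paper leaves implicit, but the underlying chain of equalities and approximations is the same, with your $2\delta$ playing the role of the paper's $\epsilon = \epsilon/2 + \epsilon/2$.
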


\begin{proof}
As in the proof of \Cref{prop:monomial}, we can show that 
\Cref{eq:qs-monomial} holds for all monomials $\alpha=\alpha_1\cdots
\alpha_k\in \POVM^{X,A}$, 
$\alpha_i\in\{e^x_a:x\in X,a\in A \}$ using induction on the monomial degree $k \in \N$. 
Indeed, the cases $k=0,1$ follow straight from the definition of local dilation of tensor product models.
Suppose that \Cref{eq:monomial} holds for all monomials in $\POVM^{X,A}$ of
degree $k$. For any monomial $\alpha=\alpha_1\cdots\alpha_k\alpha_{k+1}$ in
$\POVM^{X,A}$ of degree $k+1$, let $\alpha':=\alpha_1\cdots\alpha_k$. 
Since $\wtd{S}$ is centrally supported, \Cref{prop:qs-centrallysupported} implies that for any $\eps > 0$,
there is an
$\wtd{F}\in\msB(\wtd{H}_B)$ such that
$\wtd{\pi}_A(\alpha_{k+1}) \otimes\Id\ket{\wtd{\psi}}\approx_{\epsilon/2}\Id\otimes\wtd{F}\ket{\wtd{\psi}}$. 
Note that $\norm{\wtd{\pi}_A(\alpha')}\leq 1$ and $\norm{I_A\pi_A(\alpha')I_A^*}\leq 1$. Thus by the inductive hypothesis,
\begin{align*}
   \big(\wtd{\pi}_A(\alpha)\otimes\Id_{\wtd{H}_B}\ket{\wtd{\psi}}\big)\otimes\ket{aux} 
    & = \big(\wtd{\pi}_A(\alpha')\wtd{\pi}_A(\alpha_{k+1})\otimes\Id_{\wtd{H}_B}\ket{\wtd{\psi}}\big)\otimes\ket{aux} \\
    & \approx_{\epsilon/2}\big(\wtd{\pi}_A(\alpha')\otimes \wtd{F}\ket{\wtd{\psi}}\big)\otimes\ket{aux} \\
   & = I_A\pi_A(\alpha')I_A^*\otimes \wtd{F}\otimes\Id_{H_B^{aux}}\ket{\wtd{\psi},aux} \\
   & \approx_{\epsilon/2}\big(I_A\pi_A(\alpha')I_A^*(\wtd{\pi}(\alpha_{k+1})\otimes\Id_{H_A^{aux}})\big)\otimes \Id_{\wtd{H}_B\otimes H_B^{aux}}\ket{\wtd{\psi},aux} \\
   & =\big(I_A\pi_A(\alpha')I_A^*I_A\pi_A(\alpha_{k+1})I_A^*\big)\otimes \Id_{\wtd{H}_B\otimes H_B^{aux}}\ket{\wtd{\psi},aux} \\
   & =I_A\pi_A(\alpha)I_A^*\otimes\Id_{\wtd{H}_B\otimes H_B^{aux}}\ket{\wtd{\psi},aux}.
\end{align*}
Therefore $\big(\wtd{\pi}_A(\alpha)\otimes\Id_{\wtd{H}_B}\ket{\wtd{\psi}}\big)\otimes\ket{aux}\approx_{\epsilon}I_A\pi_A(\alpha)I_A^*\otimes\Id_{\wtd{H}_B\otimes H_B^{aux}}\ket{\wtd{\psi},aux}$ for all $\epsilon>0$, and we conclude that \Cref{eq:qs-monomial} holds for all $\alpha\in \POVM^{X,A}$. The proof of \Cref{eq:qs-monomial2} is similar.
Hence, for any $\alpha\in \POVM^{X,A}$ and $ \beta\in\POVM^{Y,B}$, 
\begin{align*}
    f_{\wtd{S}}(\alpha\otimes\beta)&=\bra{\wtd{\psi}}\wtd{\pi}_A(\alpha)\otimes\wtd{\pi}_B(\beta)\ket{\wtd{\psi}}\\
    &=\bra{\wtd{\psi},aux}\wtd{\pi}_A(\alpha)\otimes\wtd{\pi}_B(\beta)\otimes\Id_{H_A^{aux}\otimes H_B^{aux}}\ket{\wtd{\psi},aux}\\
    &=\bra{\wtd{\psi},aux} I_A\pi_A(\alpha)I_A^*\otimes I_B\pi_B(\beta)I_B^*\ket{\wtd{\psi},aux}\\
    &=\bra{\psi}\pi_A(\alpha)\otimes\pi_B(\beta)\ket{\psi}\\
    &=f_S(\alpha\otimes\beta),
\end{align*}
which implies $f_S=f_{\wtd{S}}$.
\end{proof}

\begin{proposition}\label{prop:qs-3model}
    Let $S$ and $\wtd{S}$ be two tensor product models for $p\in C_{qs}$. If $S$ is centrally supported and $S\succeq\wtd{S}$, then $\wtd{S}$ is centrally supported.
\end{proposition}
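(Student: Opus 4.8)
The plan is to transcribe the proof of \Cref{prop:centrallysupportediff} into the infinite-dimensional setting, replacing every exact intertwining identity by its $\approx_\epsilon$ counterpart and invoking \Cref{prop:qs-centrallysupported} and \Cref{lemma:qs-if} in place of \Cref{prop:centrallysupported} and \Cref{lemma:onlyif}. So, suppose $S\succeq\wtd{S}$ is witnessed by isometries $I_A\colon H_A\to\wtd{H}_A\otimes H_A^{aux}$, $I_B\colon H_B\to\wtd{H}_B\otimes H_B^{aux}$ and a vector state $\ket{aux}\in H_A^{aux}\otimes H_B^{aux}$. Let $\Pi_A,\Pi_B$ be the support projections of $\ket{\psi}$, let $\wtd{\Pi}_A,\wtd{\Pi}_B$ be those of $\ket{\wtd{\psi}}$, and let $\Pi_A^{aux},\Pi_B^{aux}$ be those of $\ket{aux}$. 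As in the finite-dimensional case I would introduce the auxiliary-padded model
\begin{equation*}
    S'=\big(\wtd{H}_A\otimes H_A^{aux},\ \wtd{H}_B\otimes H_B^{aux},\ \{\wtd{M}^x_a\otimes\Id_{H_A^{aux}}\},\ \{\wtd{N}^y_b\otimes\Id_{H_B^{aux}}\},\ \ket{\wtd{\psi},aux}\big),
\end{equation*}
which is again a tensor product model for $p$; since the reduced density matrix of $\ket{\wtd{\psi},aux}$ on $\wtd{H}_A\otimes H_A^{aux}$ factors as the tensor product of the reduced density matrices of $\ket{\wtd{\psi}}$ and of $\ket{aux}$, the support projection of $\ket{\wtd{\psi},aux}$ on the $A$-side is $\wtd{\Pi}_A\otimes\Pi_A^{aux}$, and similarly on the $B$-side.

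Next I would fix $x\in X$, $a\in A$ and $\epsilon>0$. Since $S$ is centrally supported, \Cref{prop:qs-centrallysupported} produces $\widehat{M}^x_a\in\msB(H_B)$ with $M^x_a\otimes\Id\ket{\psi}\approx_\epsilon\Id\otimes\widehat{M}^x_a\ket{\psi}$. Applying the norm-one map $I_A\otimes I_B$ preserves this bound. On the left-hand side, summing the defining relation $(I_A\otimes I_B)(M^x_a\otimes N^y_b)\ket{\psi}=(\wtd{M}^x_a\otimes\wtd{N}^y_b\ket{\wtd{\psi}})\otimes\ket{aux}$ over $b\in B$ gives $(\wtd{M}^x_a\otimes\Id_{H_A^{aux}})\otimes\Id\,\ket{\wtd{\psi},aux}$; on the right-hand side, using $I_B^*I_B=\Id$ and $(I_A\otimes I_B)\ket{\psi}=\ket{\wtd{\psi},aux}$ one rewrites $(I_A\otimes I_B)(\Id\otimes\widehat{M}^x_a)\ket{\psi}$ as $\Id\otimes(I_B\widehat{M}^x_a I_B^*)\,\ket{\wtd{\psi},aux}$. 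Hence, regarding $\wtd{M}^x_a\otimes\Id_{H_A^{aux}}$ as a self-adjoint operator on the $A$-side $\wtd{H}_A\otimes H_A^{aux}$ of $S'$, we obtain an operator $\widehat{E}:=I_B\widehat{M}^x_a I_B^*$ on the $B$-side with $(\wtd{M}^x_a\otimes\Id_{H_A^{aux}})\otimes\Id\,\ket{\wtd{\psi},aux}\approx_\epsilon\Id\otimes\widehat{E}\,\ket{\wtd{\psi},aux}$. As $\epsilon>0$ was arbitrary, \Cref{lemma:qs-if} applies and shows that $\wtd{\Pi}_A\otimes\Pi_A^{aux}$ commutes with $\wtd{M}^x_a\otimes\Id_{H_A^{aux}}$. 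Therefore $[\wtd{\Pi}_A,\wtd{M}^x_a]\otimes\Pi_A^{aux}=0$, and since $\Pi_A^{aux}\neq 0$ this forces $[\wtd{\Pi}_A,\wtd{M}^x_a]=0$. The symmetric argument, using the ``$\widehat{N}^y_b$'' half of \Cref{prop:qs-centrallysupported}, gives $[\wtd{\Pi}_B,\wtd{N}^y_b]=0$ for all $y,b$, so $\wtd{S}$ is centrally supported.

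I do not expect a genuine obstacle here, since \Cref{lemma:qs-if}, \Cref{lemma:qs-onlyif}, and \Cref{prop:qs-centrallysupported} already encapsulate all of the infinite-dimensional analysis; the steps that need the most care are purely bookkeeping: confirming that $\wtd{\Pi}_A\otimes\Pi_A^{aux}$ really is the $A$-side support projection of $\ket{\wtd{\psi},aux}$, and checking that applying $I_A\otimes I_B$ to an $\approx_\epsilon$ relation does not degrade the error (it does not, the map being an isometry). One point worth flagging is that the operator $\widehat{M}^x_a$ from \Cref{prop:qs-centrallysupported}, and hence $\widehat{E}$, may have norm that grows as $\epsilon\to 0$; this is harmless because \Cref{lemma:qs-if} only requires the existence of some approximating operator for each $\epsilon$, with no uniform norm bound.
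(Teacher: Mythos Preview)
Your proposal is correct and follows essentially the same route as the paper's proof: introduce the padded model $S'$, identify its $A$-side support projection as $\wtd{\Pi}_A\otimes\Pi_A^{aux}$, use \Cref{prop:qs-centrallysupported} to produce an $\epsilon$-approximate intertwiner $\widehat{M}^x_a$, push everything through $I_A\otimes I_B$, and apply \Cref{lemma:qs-if}. The only cosmetic difference is that the paper inserts $I_AI_A^*$ on the $A$-side (after first checking $I_AI_A^*\otimes\Id\ket{\wtd{\psi},aux}=\ket{\wtd{\psi},aux}$) to rewrite $\Id\otimes I_B\widehat{M}^x_aI_B^*\ket{\wtd{\psi},aux}$ as $(I_A\otimes I_B)(\Id\otimes\widehat{M}^x_a)\ket{\psi}$, whereas you go in the opposite direction using $I_B^*I_B=\Id$; both computations are equivalent and yield the same $\approx_\epsilon$ relation.
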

\begin{proof}

Let 
\begin{align*}
    &\model \text{ and }\\
    &\wtdmodel
\end{align*}
be two tensor product models with support projections $\Pi_A,\Pi_B$ and $\wtd{\Pi}_A,\wtd{\Pi}_B$ respectively. Suppose $S$ is centrally supported and $S\succeq
\wtd{S}$ with local isometries $I_A,I_B$ and vector state $\ket{aux}\in
H_A^{aux}\otimes H_B^{aux}$. Then
\begin{align*}
    S':=\big(\wtd{H}_A\otimes H_A^{aux},\wtd{H}_B\otimes H_B^{aux},\{\wtd{M}^x_a\otimes\Id_{H_A^{aux}}\},\{\wtd{N}^y_b\otimes\Id_{H_B^{aux}}\},\ket{\wtd{\psi},aux} \big)
\end{align*}
is a tensor product model for $p$ with support projections $\wtd{\Pi}_A\otimes \Pi_A^{aux}$ and $\wtd{\Pi}_B\otimes \Pi_B^{aux}$, where $\Pi_A^{aux}\in\msB(H_A^{aux})$ and $\Pi_B^{aux}\in\msB(H_B^{aux})$ are the support projections of $\ket{aux}$. Since $S$ is centrally supported, by \Cref{prop:qs-centrallysupported}, for every $x\in X,a\in A$, and $\epsilon>0$, there is an operator $\widehat{M}^x_a\in\msB(H_B)$ such that $M^x_a\otimes \Id\ket{\psi}\approx_\epsilon\Id\otimes \widehat{M}^x_a\ket{\psi}$. Since 
    $I_A I_A^* \otimes I_B I_B^* \ket{\wtd{\psi}, aux} = \ket{\wtd{\psi}, aux}$
and $\Id\otimes \Id \geq I_A I_A^* \otimes \Id \geq I_A I_A^* \otimes I_B I_B^*$, we see that $I_A I_A^* \otimes \Id
\ket{\wtd{\psi}, aux} = \ket{\wtd{\psi}, aux}$ and $\norm{I_A\otimes I_B}\leq 1$. Then, $I_B\widehat{M}^x_aI_B^*$ is an operator in $\msB(\wtd{H}_B\otimes H_B^{aux})$ such that
\begin{align*}
    \Id\otimes I_B\widehat{M}^x_aI_B^*\ket{\wtd{\psi},aux}&=I_AI_A^*\otimes I_B\widehat{M}^x_aI_B^*\ket{\wtd{\psi},aux}\\
    &=(I_A\otimes I_B)(\Id\otimes \widehat{M}^x_a\ket{\psi})\\
    &\approx_\epsilon  (I_A\otimes I_B)(M^x_a\otimes \Id\ket{\psi})\\
    &=(\wtd{M}^x_a\otimes\Id_{H_A^{aux}})\otimes \Id\ket{\wtd{\psi},aux}.
\end{align*}
By \Cref{lemma:qs-if} we see that $\wtd{\Pi}_A\otimes\Pi_A^{aux}$ commutes with $\wtd{M}^x_a\otimes\Id_{H_A^{aux}}$, and hence $[\wtd{\Pi}_A,\wtd{M}^x_a]=0$ for all $x\in X,a\in A$. Similarly, $[\wtd{\Pi}_B,\wtd{N}^y_b]=0$ for all $y\in Y,b\in B$. We conclude that $\wtd{S}$ is centrally supported.
\end{proof}

\begin{proof}[Proof of \cref{thm:qs_abstract_state_self-test}]
Suppose $p\in C_{qs}(X,Y,A,B)$ is a self-test for the class of tensor product models. Let $S$ be a full-rank tensor product model for $p$ with ideal model $\wtd{S}$, so that $S\succeq \wtd{S}$. Full-rank models are centrally supported, so \cref{prop:qs-3model} implies that $\wtd{S}$ is centrally supported. It follows from \cref{qs-prop:monomial} that $f_{\wtd{S}}$ is the unique abstract state on $\POVM^{X,A}\otimes_{min}\POVM^{Y,B}$ achieving $p$.
\end{proof}

\Cref{thm:qs_abstract_state_self-test} applies to the set of states which have
a tensor product model.  There are also states on $\POVM^{X,A} \otimes_{min}
\POVM^{Y,B}$ which do not arise from a tensor product model. Suppose $p$ is a 
correlation in $C_{qs}$, and consider the set of states $f$ on $\POVM^{X,A}
\otimes_{min} \POVM^{Y,B}$ such that $f(m^x_a \otimes n^y_b) = p(a,b|x,y)$ for
all $(a,b,x,y) \in A \times B \times X \times Y$. We do not know whether every
such state has a tensor product model, or whether being a self-test for tensor-product
models implies that there is a unique such state on $\POVM^{X,A} \otimes_{min}
\POVM^{Y,B}$.

\section{Self-testing for commuting operator models}\label{sec:commuting_operator_models}

Recall that a \textbf{commuting operator POVM model} for a correlation $p$ is a tuple
\begin{equation*}
    (H, \{M^x_a : a \in A, x \in X\}, \{N^y_b:b \in B, y \in Y\}, \ket{\psi}),
\end{equation*}
where
\begin{enumerate}[(i)]
    \item $H$ is a Hilbert space,
    \item $\{M^x_a : a \in A\}$, $x \in X$ and $\{N^y_b : b \in B \}$, $y \in Y$ are POVMs on $H$ such that
    \begin{equation*}
        M^x_a N^y_b = N^y_b M^x_a
    \end{equation*}
     for all $(a,b,x,y) \in A \times B \times X \times Y$, and
    \item $\ket{\psi} \in H$ is a vector state 
\end{enumerate}
such that
\begin{equation*}
    p(a,b|x,y) = \braket{\psi| M^x_a \cdot N^y_b|\psi}
\end{equation*}
for all $(a,b,x,y) \in A \times B \times X \times Y$. As with tensor product
models, a commuting operator model is \textbf{projective} if the measurements
$\{M^x_a\}$ and $\{N^y_b\}$ are projective, and \textbf{finite-dimensional} if
$H$ is finite-dimensional. We let $C_{qc} = C_{qc}(X,Y,A,B)$ be the set of
correlations with a commuting operator model. It is well-known that $C_{qc}$ is
closed, convex, and contains $C_{qa}$, that every correlation in $C_{qc}$ has a
projective commuting operator model, and that the set of correlations with
finite-dimensional commuting operator models is equal to $C_q$.

Analogously to the finite-dimensional case, if $\qcmodel$ is a commuting
operator model for a correlation $p \in C_{qc}$, then there is an
\textbf{associated representation} $\phi$ of  $\POVM^{X,A}
\otimes_{max} \POVM^{Y,B}$ with $\phi(m^x_a \otimes n^y_b) =
M^x_a \cdot N^y_b$, and the associated abstract state $f_S$ on $\POVM^{X,A}
\otimes_{max} \POVM^{Y,B}$ defined by $f_S(x) := \braket{\psi|\phi(x)|\psi}$ satisfies $p(a,b|x,y) = f_S(m^x_a \otimes n^y_b)$. The
tuple $(\phi, H, \ket{\psi})$ is a GNS representation
of $f_S$ if and only if $\ket{\psi}$ is a cyclic vector for $\phi$, in which case we say that $S$ is a \textbf{cyclic model}.
 Conversely, taking the GNS representation of a state $f$ on $\POVM^{X,A}
\otimes_{max} \POVM^{Y,B}$ gives a commuting operator model $S$ with $f = f_S$.
Hence a correlation $p \in \R_{\geq 0}^{A \times B \times X \times Y}$ belongs
to $C_{qc}(X,Y,A,B)$ if and only if there is a state $f$ on $\POVM^{X,A}
\otimes_{max} \POVM^{Y,B}$ with $p(a,b|x,y) = f(m^x_a \otimes
n^y_b)$ for all $(a,b,x,y) \in A \times B \times X \times Y$ \cite{Fri12,JNPPSW11}.

The definition of an abstract state self-test on $\POVM^{X,A} \otimes_{min}
\POVM^{Y,B}$ suggests a notion of self-testing for commuting operator models:
\begin{definition}\label{def:co_abstract_self-test}
    Let $\mcS$ be a subset of states on $\POVM^{X,A} \otimes_{max}
    \POVM^{Y,B}$. A correlation $p$ is an \textbf{abstract state self-test for
    $\mcS$} if there exists a unique abstract state $f \in \mcS$ with correlation
    $p$.  We say that $p$ is a \textbf{commuting operator self-test} if it is an
    abstract state self-test for all states on $\POVM^{X,A} \otimes_{max}
    \POVM^{Y,B}$.
\end{definition}
 There is a surjective homomorphism $\POVM^{X,A}\otimes_{max} \POVM^{Y,B} \to
\POVM^{X,A}\otimes_{min} \POVM^{Y,B}$, and this means that states on
$\POVM^{X,A}\otimes_{min} \POVM^{Y,B}$ can be thought of as a subset of states
on $\POVM^{X,A}\otimes_{max} \POVM^{Y,B}$. Hence
\cref{def:co_abstract_self-test} subsumes \cref{def:abstract_self-test}.  While
\cref{def:abstract_self-test} is for finite-dimensional states on
$\POVM^{X,A}\otimes_{min} \POVM^{Y,B}$, \cref{def:co_abstract_self-test} also
gives us a definition of self-testing for non-finite-dimensional states on
$\POVM^{X,A}\otimes_{min} \POVM^{Y,B}$ (or in other words, for correlations $p
\in C_{qa}$ which do not have a finite-dimensional model). This is potentially
useful in understanding how self-testing interacts with limiting behaviour of
finite-dimensional models. For instance, Man\v{c}inska and Schmidt \cite{MS22} give
an example of a nonlocal game which is a non-robust self-test by combining a
finite-dimensional self-test with a game with a perfect $C_{qa}$ strategy, but
no perfect $C_q$ strategy. In our language, this nonlocal game is a
an abstract state self-test for finite-dimensional states on $\POVM^{X,A}\otimes_{min} \POVM^{Y,B}$,
but not a self-test for all states on $\POVM^{X,A}\otimes_{min} \POVM^{Y,B}$. 
Another example is the open question at the end of \Cref{sec:tensor_product_models},
which in this language asks whether every self-test for tensor-product models is
an abstract state self-test for states on $\POVM^{X,A}\otimes_{min} \POVM^{Y,B}$.

An immediate question about \cref{def:co_abstract_self-test} is whether there
is an equivalent description of this type of self-test in terms of models. For
this purpose, we make the following definitions.
\begin{definition}\label{def:submodel2}
Let 
\begin{align*}
    &\qcmodel \text{ and }\\
    &\wtdqcmodel
\end{align*}
be two commuting operator models.
\begin{enumerate}
    \item We say $S$ and $\wtd{S}$ are \textbf{equivalent}, and write $S\cong \wtd{S}$, if there exists a unitary $U:H\arr\wtd{H}$ such that
        \begin{enumerate}[(i)]
            \item $U\ket{\psi}=\ket{\wtd{\psi}}$, and 
            \item $UM^x_aU^*=\wtd{M}^x_a$ and $UN^y_b U^*=\wtd{N}^y_b$ for all $(a,b,x,y)\in A\times B\times X\times Y$.
        \end{enumerate}
    \item $\wtd{S}$ is a \textbf{submodel} of $S$ if there is a self-adjoint projection $\Pi
        \in \msB(H)$ commuting with $M^x_a$ and $N^y_b$ for all $(a,b,x,y)\in
        A\times B\times X\times Y$, such that $\wtd{H} = \Pi H$, $\ket{\wtd{\psi}} = \ket{\psi}$, and $\wtd{M}^x_a=\Pi M^x_a\Pi$ and
        $\wtd{N}^y_b=\Pi N^y_b\Pi$ for all $(a,b,x,y) \in A \times B \times X \times Y$. 

    \item $S$ is said to be \textbf{degenerate} if there exists a non-trivial
        self-adjoint projection $\Pi\in \msB(H)$ such that $\Pi\ket{\psi}=\ket{\psi}$ and
        $[\Pi,M_a^x]=[\Pi,N_b^y]=0$ for all $(a,b,x,y)\in A\times B\times X\times Y$.
        A commuting operator model is said to be \textbf{nondegenerate} if it
        is not degenerate. 
\end{enumerate}
A class of commuting operator models $\mcC$ is \textbf{closed under submodels} if
\begin{enumerate}[(i)]
    \item for any $S\in\mcC$, if $\wtd{S}$ is a commuting operator model such that $\wtd{S}\cong S$ then $\wtd{S}\in\mcC$, and
    \item for any $S\in\mcC$, if $\wtd{S}$ is a submodel of $S$ then $\wtd{S}\in\mcC$.
\end{enumerate}
\end{definition}
Note that in the definition of submodel in \Cref{def:submodel2}, the condition $\ket{\wtd{\psi}} = 
\ket{\psi}$ is equivalent to the condition that $\Pi \ket{\psi} = \ket{\wtd{\psi}}$, since both
imply that $\ket{\psi}$ is in the support of $\Pi$. This makes this notion of submodel a bit different
from the definition of submodel in \Cref{def:submodel1}, which only requires $\Pi_A \otimes \Pi_B \ket{\psi} = \lambda \ket{\wtd{\psi}}$ for some non-zero complex number $\lambda$.
As a result:
\begin{remark}\label{rmk:samestate}
    Suppose $S$ and $\wtd{S}$ are commuting operator models. If $\wtd{S}$ is equivalent to or a submodel of $S$, then $f_{\wtd{S}} = f_S$. 
\end{remark}

The following proposition shows that any commuting operator model $S$ has a nondegenerate submodel $\wtd{S}$ which is the GNS representation for the abstract state $f_S$. 

\begin{proposition}\label{prop:nd-cyclic} 
A commuting operator model is nondegenerate if and only if it is a cyclic model.
Furthermore, every commuting operator model has a nondegenerate submodel.
\end{proposition}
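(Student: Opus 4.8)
The plan is to prove the two claims in sequence. First I would establish that nondegenerate $\iff$ cyclic. For the easy direction, suppose $S = (H, \{M^x_a\}, \{N^y_b\}, \ket{\psi})$ is \emph{not} cyclic, so the closure $K := (\phi(\POVM^{X,A}\otimes_{max}\POVM^{Y,B})\ket{\psi})^-$ is a proper closed subspace of $H$. Then the orthogonal projection $\Pi$ onto $K$ is a nontrivial self-adjoint projection; since $K$ is invariant under the associated representation $\phi$, $\Pi$ commutes with every $\phi(m^x_a\otimes 1) = M^x_a$ and every $\phi(1\otimes n^y_b) = N^y_b$, and $\Pi\ket{\psi} = \ket{\psi}$ because $\ket{\psi}\in K$. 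Hence $S$ is degenerate. Conversely, if $S$ is degenerate via a nontrivial projection $\Pi$, then $(\mathbbm{1}-\Pi)H$ is a nonzero closed subspace invariant under $\phi$ (as $\Pi$ commutes with all $M^x_a$, $N^y_b$, and these generate the image of $\phi$) and orthogonal to $\ket{\psi}$; so $\ket{\psi}$ cannot be cyclic. This proves the equivalence.

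Next I would construct the nondegenerate submodel. Given an arbitrary commuting operator model $S$, take $\Pi$ to be the projection onto $K := (\phi(\POVM^{X,A}\otimes_{max}\POVM^{Y,B})\ket{\psi})^-$, the cyclic subspace generated by $\ket{\psi}$. As above, $\Pi$ commutes with all $M^x_a$ and $N^y_b$, and $\Pi\ket{\psi} = \ket{\psi}$. Set $\wtd{H} := \Pi H$, $\wtd{M}^x_a := \Pi M^x_a \Pi$, $\wtd{N}^y_b := \Pi N^y_b \Pi$, and $\ket{\wtd{\psi}} := \ket{\psi}$. Because $\Pi$ commutes with the measurement operators, $\wtd{M}^x_a = \Pi M^x_a\Pi = M^x_a\Pi = \Pi M^x_a$ restricted to $\wtd{H}$ is again a POVM on $\wtd{H}$ (the defining sum relation $\sum_a M^x_a = \mathbbm{1}$ compresses to $\sum_a \wtd{M}^x_a = \Pi = \Id_{\wtd{H}}$), and similarly for the $\wtd{N}^y_b$; moreover $\wtd{M}^x_a$ and $\wtd{N}^y_b$ still commute on $\wtd{H}$ since $M^x_a$ and $N^y_b$ do on $H$ and $\Pi$ is central for both. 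Thus $\wtd{S} := (\wtd{H}, \{\wtd{M}^x_a\}, \{\wtd{N}^y_b\}, \ket{\wtd{\psi}})$ is a genuine commuting operator model, and by construction it is a submodel of $S$ in the sense of \Cref{def:submodel2}. By \Cref{rmk:samestate} it realizes the same correlation (indeed the same abstract state) as $S$, so it is a model for $p$. Finally, $\ket{\wtd{\psi}} = \ket{\psi}$ is cyclic for the associated representation of $\wtd{S}$ because $\wtd{H} = K$ is precisely the cyclic subspace it generates, so by the first part $\wtd{S}$ is nondegenerate; equivalently, $(\phi|_{\wtd H}, \wtd H, \ket{\wtd\psi})$ is the GNS representation of $f_S = f_{\wtd S}$.

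The routine points to be careful about are that the compression $\Pi(\cdot)\Pi$ of a commuting-operator model along a \emph{central} projection is again a commuting-operator model — this uses centrality of $\Pi$ in an essential way, exactly as in the tensor-product case — and that the associated representation of $\wtd S$ is the corestriction $\phi|_{\wtd H}$, so that its cyclic subspace for $\ket{\wtd\psi}$ is all of $\wtd H$. I do not anticipate a serious obstacle here: the argument is a clean analogue of the GNS construction, with the only subtlety being bookkeeping around the fact that in \Cref{def:submodel2} a submodel is required to have $\ket{\wtd\psi} = \ket\psi$ on the nose (no scalar), which is automatic since $\ket\psi \in K$ already has unit norm.
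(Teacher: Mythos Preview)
Your proposal is correct and follows essentially the same approach as the paper: both directions of the equivalence are proved via the projection onto the cyclic subspace $K = (\phi(\msA)\ket{\psi})^-$ (you do one direction by contrapositive where the paper argues directly, but the content is identical), and the nondegenerate submodel is constructed by compressing along exactly this projection. Your write-up is in fact more careful than the paper's in checking that the compressed tuple is again a commuting operator model and a submodel in the sense of \Cref{def:submodel2}.
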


\begin{proof}
 We use $\msA$ to denote $\POVM^{X,A} \otimes_{max} \POVM^{Y,B}$. Let $p\in C_{qc}(X,Y,A,B)$, and let 
\begin{align*}
    \qcmodel
\end{align*}
be a commuting operator model for $p$ with associated representation $\pi$.

For the ``if" direction, suppose $\ket{\psi}$ is cyclic for $\pi$. If $\Pi$
is a projection in the commutant $\pi(\msA)'$ such that $\Pi \ket{\psi} = \ket{\psi}$,
then $\Pi \pi(\alpha) \ket{\psi} = \pi(\alpha) \Pi \ket{\psi} = \pi(\alpha) \ket{\psi}$,
so $\pi(\alpha) \ket{\psi}$ is contained in the image of $\Pi$. We conclude that
$\Pi = \Id$, so $S$ must be non-degenerate. 

For the ``only if" direction, suppose $\ket{\psi}$ is not cyclic for $\pi$. Let
$H_0:=\big(\pi(\msA)\ket{\psi}\big)^{-}$, and let $\Pi \in \msB(H)$ be the
self-adjoint projection onto $H_0$. Then $\Pi \in \pi(\msA)'$ is a non-trivial
projection such that $\Pi \ket{\psi} = \ket{\psi}$, and hence $S$ is degenerate. 
Furthermore, $(H_0, \{ \Pi M^x_a \Pi : a \in A, x \in X\}, \{\Pi N^y_b \Pi : b
\in B, y \in Y\}, \ket{\psi})$ is a non-degenerate submodel of $S$. 
\end{proof}

We can now show that \cref{def:co_abstract_self-test} is equivalent to having
a unique nondegenerate commuting-operator model (and this holds for any class
of commuting operator models closed under submodels).
\begin{theorem}\label{thm:concretecost}
    Let $\mcC$ be a class of commuting operator models that is closed under submodels, and let $\mcS:=\{f_S: S\in\mcC\}$ be the set of states on
    $\POVM^{X,A} \otimes_{max} \POVM^{Y,B}$ induced by $\mcC$. Then $p\in C_{qc}$
    is a self-test for $\mcS$ if and only if there is a commuting operator model
    \begin{equation*}
        \wtd{S} = (\wtd{H}, \{\wtd{M}^x_a : a \in A, x \in X\}, \{\wtd{N}^y_b:b \in B, y \in Y\}, \ket{\wtd{\psi}})
    \end{equation*}
    for $p$ in $\mcC$, 
    such that for every other commuting operator model
    \begin{equation*}
        S = (H, \{M^x_a : a \in A, x \in X\}, \{N^y_b:b \in B, y \in Y\}, \ket{\psi}),
    \end{equation*}
    for $p$ in $\mcC$, there is a submodel of $S$ which is equivalent to $\wtd{S}$. 
\end{theorem}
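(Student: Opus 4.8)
The approach is to pass everything through the GNS correspondence, using the two facts already proved above: \Cref{prop:nd-cyclic}, which says a commuting operator model is nondegenerate if and only if it is cyclic --- equivalently, its associated representation together with the vector state $\ket{\psi}$ is a GNS representation of $f_S$ --- and that every commuting operator model has a nondegenerate submodel; and \Cref{rmk:samestate}, which says that passing to a submodel or to an equivalent model does not change $f_S$. The only additional ingredient is the standard uniqueness of GNS representations: two GNS representations of the same state are unitarily equivalent via a unitary carrying one cyclic vector to the other.

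For the ``if'' direction, assume an ideal model $\wtd S \in \mcC$ for $p$ with the stated property exists. Then $f_{\wtd S} \in \mcS$ realizes $p$. If $g \in \mcS$ is any state realizing $p$, write $g = f_S$ for some $S \in \mcC$; since $f_S = g$ has correlation $p$, $S$ is a commuting operator model for $p$, so by hypothesis it has a submodel $S''$ that is equivalent to $\wtd S$. Applying \Cref{rmk:samestate} to the submodel relation and to the equivalence gives $g = f_S = f_{S''} = f_{\wtd S}$. Hence $f_{\wtd S}$ is the unique state in $\mcS$ realizing $p$, i.e.\ $p$ is an abstract state self-test for $\mcS$.

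For the ``only if'' direction, let $f$ be the unique state in $\mcS$ realizing $p$. Since $f \in \mcS$, there is $S_0 \in \mcC$ with $f_{S_0} = f$; by \Cref{prop:nd-cyclic} it has a nondegenerate submodel $\wtd S$, which lies in $\mcC$ because $\mcC$ is closed under submodels, and $f_{\wtd S} = f$ by \Cref{rmk:samestate}. So $\wtd S$ is a cyclic model for $p$ whose associated representation $(\wtd\phi, \wtd H, \ket{\wtd\psi})$ is a GNS representation of $f$. Now let $S = (H, \{M^x_a\}, \{N^y_b\}, \ket{\psi})$ be any commuting operator model for $p$ in $\mcC$. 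Then $f_S \in \mcS$ realizes $p$, so $f_S = f$ by uniqueness. Let $S' = (\Pi H, \{\Pi M^x_a\Pi\}, \{\Pi N^y_b\Pi\}, \ket{\psi})$ be a nondegenerate submodel of $S$, with $\Pi$ a central projection fixing $\ket{\psi}$, and let $\phi'$ be its associated representation; then $f_{S'} = f$ by \Cref{rmk:samestate}, and $S'$ is cyclic, so $(\phi', \Pi H, \ket{\psi})$ is also a GNS representation of $f$. Choosing a unitary $U : \Pi H \to \wtd H$ implementing the equivalence of these two GNS representations and matching cyclic vectors, we get $U\ket{\psi} = \ket{\wtd\psi}$ and $U\phi'(\cdot)U^* = \wtd\phi(\cdot)$; evaluating the latter on $m^x_a \otimes 1$ and $1 \otimes n^y_b$ gives $U(\Pi M^x_a\Pi)U^* = \wtd M^x_a$ and $U(\Pi N^y_b\Pi)U^* = \wtd N^y_b$, so $S' \cong \wtd S$ in the sense of \Cref{def:submodel2}. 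Thus $S$ has a submodel equivalent to $\wtd S$, which is what we wanted.

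Most of this is bookkeeping once \Cref{prop:nd-cyclic} and \Cref{rmk:samestate} are available, so I do not expect a real obstacle. The one point that needs care is lining up definitions: the submodel notion in \Cref{def:submodel2} forces $\ket{\wtd\psi} = \ket{\psi}$ rather than merely $\Pi\ket{\psi} \propto \ket{\wtd\psi}$, and ``equivalence'' there is literal unitary conjugation of each measurement operator together with a matching of the vector states. Both of these are exactly what \Cref{prop:nd-cyclic} (whose nondegenerate submodel fixes $\ket{\psi}$, since the cyclic subspace contains it) and the GNS uniqueness statement deliver, so the matching is immediate.
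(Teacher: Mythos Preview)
Your proof is correct and follows essentially the same approach as the paper's: both directions rest on \Cref{rmk:samestate}, \Cref{prop:nd-cyclic}, and the uniqueness of GNS representations, with the ``only if'' direction constructing $\wtd S$ as a nondegenerate submodel of some model realizing the unique state $f$, and then matching any other model's nondegenerate submodel to $\wtd S$ via GNS uniqueness. You spell out a few details the paper leaves implicit (that $\wtd S$ remains in $\mcC$ by closure under submodels, and that the GNS unitary really witnesses equivalence in the sense of \Cref{def:submodel2}), but the structure is identical.
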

\begin{proof}
    The ``if'' direction follows directly from \Cref{rmk:samestate}. For the
    ``only if'' direction, suppose $p$ is an abstract state self-test for
    $\mcS$, with unique state $f \in \mcS$. By definition $f = f_{S'}$ for some
    $S' \in \mcC$. By \Cref{prop:nd-cyclic}, $S'$ has a non-degenerate submodel
    $\wtd{S}$. Suppose $S$ is any other model in $\mcC$ for $p$, and let $S''$
    be a non-degenerate submodel of $S$. Then $S''$ and $\wtd{S}$ are both
    cyclic models with $f_{S''} = f_S = f_{S'} = f_{\wtd{S}}$. Since GNS representations
    are unique up to unitary equivalence, $S''$ is equivalent to $\wtd{S}$.
\end{proof}
Note that the above theorem does not require $p$ to be an extreme point in
$C_{qc}$. 

The class of finite-dimensional commuting operator models is closed under submodels, and thus \Cref{thm:concretecost} applies to this class. 
Every finite-dimensional state (resp. projective finite-dimensional state) on
$\POVM^{X,A}\otimes_{max}\POVM^{Y,B}$ comes from a finite-dimensional state
(resp. projective finite-dimensional state) on
$\POVM^{X,A}\otimes_{min}\POVM^{Y,B}$, and thus $p$ is an abstract state
self-test for finite-dimensional states (resp. projective finite-dimensional
states) on $\POVM^{X,A}\otimes_{max}\POVM^{Y,B}$ if and only if $p$ is an
abstract state self-test for finite-dimensional states (resp. projective
finite-dimensional states) on $\POVM^{X,A}\otimes_{min}\POVM^{Y,B}$.
As a result, if $p \in C_q$ is an extreme point, then $p$ is a self-test for
(POVM) quantum models if and only if $p$ has a unique nondegenerate commuting
operator model (up to unitary equivalence). If, in addition, there exists a
projective full-rank quantum model for $p$, then $p$ is a self-test for
projective quantum models if and only if $p$ has a unique nondegenerate
commuting operator model. This gives a new criterion for $p\in C_q$ to be a
self-test for (finite-dimensional) quantum models.

It follows that if $p \in C_q$ is a commuting operator
self-test, then $p$ is a self-test for (finite-dimensional) quantum models.
Tsirelson showed that a wide family of correlations in $C_q$ are in fact
commuting operator self-tests \cite{Tsi87}. To state this result, let
$Cor(X,Y)$ be the set of matrices $c\in \R^{X\times Y}$ for which there is a
Euclidean space $V$ and vectors $\{\ket{u_x}\}_{x\in X}$, $\{\ket{v_y}\}_{y\in
Y}$ in $V$ of norm at most $1$, such that $c_{x,y}=\braket{u_x|v_y}$ for all
$x,y\in X\times Y$. If $p\in C_{qc}(X,Y,\Z_2,\Z_2)$ then the matrix $c$ defined
by \begin{equation*}
c_{x,y}=\sum_{a,b\in \Z_2}(-1)^{a+b}p(a,b|x,y)\end{equation*}
is in $Cor(X,Y)$, since if $\big(H,\{M^x_a:a\in \Z_2,x\in X \},\{N^y_b:b\in \Z_2,y\in Y\},\ket{\psi} \big)$ is a commuting operator model for $p$, then $c_{x,y}=\braket{\psi|(M_0^x-M_1^x)(N_0^y-N_1^y)|\psi}$, where $\|M_0^x-M_1^x\|\leq 1$ and $\|N_0^y-N_1^y\|\leq 1$. We refer to $c$ as the \textbf{XOR correlation} associated with $p$. Tsirelson shows that the linear map
\begin{equation*}
C_q(X,Y,\Z_2,\Z_2)\to Cor(X,Y)\ :\ p\mapsto c
\end{equation*} restricts to an isomorphism $C_q^{unbiased}(X,Y,\Z_2,\Z_2)\to Cor(X,Y)$, where
\begin{align*}C_q^{unbiased}(X,Y,\Z_2,\Z_2)=\Big\{p\in C_q(X,Y,\Z_2,\Z_2): \sum_{b\in \Z_2}p(0,b|x,y)=\sum_{b\in \Z_2}p(1,b|x,y)\text{ and }&\\ \sum_{a\in \Z_2}p(a,0|x,y)=\sum_{a\in \Z_2}p(a,1|x,y),\ x\in X,\ y\in Y&\Big\}.\end{align*}

\begin{theorem}[\cite{Tsi87}]
	If $p$ is an extreme point of $C_q^{unbiased}$ and the associated XOR correlation $c$ has even rank, then $p$ is a commuting operator self-test.
\end{theorem}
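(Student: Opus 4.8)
The approach is to reduce the statement, via the machinery of \Cref{sec:commuting_operator_models}, to Tsirelson's classical rigidity theorem for correlation matrices. By \Cref{prop:nd-cyclic} and \Cref{thm:concretecost} applied to the class of all commuting operator models, $p$ is a commuting operator self-test if and only if any two nondegenerate commuting operator models for $p$ are unitarily equivalent. So fix a nondegenerate --- equivalently, cyclic --- commuting operator model $\qcmodel$ for $p$, and set $A_x := M^x_0 - M^x_1$ and $B_y := N^y_0 - N^y_1$; these are commuting self-adjoint contractions on $H$ with $c_{x,y} = \braket{\psi | A_x B_y | \psi}$, and, because $p$ is unbiased, $\braket{\psi | A_x | \psi} = \braket{\psi | B_y | \psi} = 0$. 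Since $p \mapsto c$ is an affine isomorphism $C_q^{unbiased}(X,Y,\Z_2,\Z_2) \to Cor(X,Y)$, the hypothesis says that $c$ is an extreme point of $Cor(X,Y)$ of even matrix rank $r$. As $\ket{\psi}$ is cyclic, recovering $f_S$ amounts to recovering all the moments $\braket{\psi | A_{x_1} \cdots A_{x_k} B_{y_1} \cdots B_{y_\ell} | \psi}$, so it is enough to exhibit one model $\wtd{S}$ and show that every nondegenerate model for $p$ is unitarily equivalent to it.

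Next I would recall Tsirelson's construction and state the rigidity input. Write $c_{x,y} = \braket{u_x | v_y}$ with $\{u_x\}_{x \in X}$ and $\{v_y\}_{y \in Y}$ unit vectors spanning $\R^r$; extremality of $c$ guarantees a realization of this form, unique up to a common orthogonal transformation of $\R^r$. Let $\Gamma_1, \dots, \Gamma_r$ be anticommuting self-adjoint unitaries generating a copy of the complex Clifford algebra on $r$ generators, which for $r$ even is isomorphic to $M_{2^{r/2}}(\C)$, and let $\wtd{S}$ be the commuting operator model on $\wtd{H} = \C^{2^{r/2}} \otimes \C^{2^{r/2}}$ with $\wtd{M}^x_a = \tfrac12(\Id + (-1)^a \sum_i (u_x)_i \Gamma_i) \otimes \Id$, $\wtd{N}^y_b = \Id \otimes \tfrac12(\Id + (-1)^b \sum_j (v_y)_j \overline{\Gamma_j})$, and $\ket{\wtd{\psi}}$ the maximally entangled vector. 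A trace computation shows $\wtd{S}$ realizes $p$ (using that the $\Gamma_i$ are traceless, $\mathrm{Tr}(\Gamma_i \Gamma_j) = n\delta_{ij}$, and that an unbiased $p$ is fully determined by $c$), and $\wtd{S}$ is nondegenerate with irreducible associated representation. Tsirelson's rigidity theorem --- whose even-rank hypothesis is exactly what makes the Clifford algebra simple, with a unique irreducible representation --- then asserts: in any nondegenerate model for $p$, the observables satisfy the Clifford relations $A_x A_{x'} + A_{x'} A_x = 2\braket{u_x | u_{x'}} \Id$ (so in particular $A_x^2 = \Id$), and likewise the $B_y$, so that $\{A_x\}''$ and $\{B_y\}''$ are commuting matrix algebras isomorphic to $M_{2^{r/2}}(\C)$; combining this with $[A_x, B_y] = 0$, cyclicity of $\ket{\psi}$, the vanishing first moments, and the value of $c$ forces $H \cong \C^{2^{r/2}} \otimes \C^{2^{r/2}}$ with $\ket{\psi}$ maximally entangled and the measurements conjugate to those of $\wtd{S}$. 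Hence $S \cong \wtd{S}$ and $f_S = f_{\wtd{S}}$, so $f_{\wtd{S}}$ is the unique abstract state on $\POVM^{X,A} \otimes_{max} \POVM^{Y,B}$ with correlation $p$.

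The main obstacle is, unsurprisingly, the rigidity step: upgrading the correlation-matrix data to operator-level Clifford relations inside an arbitrary model, and showing that the state must be maximally entangled. This is the substance of Tsirelson's theorem and is where extremality is used in full: any defect in the Clifford relations on the cyclic subspace, or any deviation of the state from the uniform one, can be converted into a genuine convex decomposition $c = \tfrac12(c^+ + c^-)$ inside $Cor(X,Y)$ with $c^\pm \neq c$, contradicting extremality --- the delicate part being to keep $c^\pm$ within $Cor(X,Y)$, which is cleanest to arrange after diagonalizing Tsirelson's Gram operator; the even-rank hypothesis is indispensable, since for odd $r$ the Clifford algebra has two inequivalent irreducible representations and the realization is genuinely not unique. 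The remaining, genuinely new point is bookkeeping: one must check that nothing is lost in passing from Tsirelson's finite-dimensional tensor-product setting to commuting operator models and abstract states, which is handled by \Cref{prop:nd-cyclic} (a nondegenerate model is cyclic) together with the observation that once $\{A_x\}''$ and $\{B_y\}''$ are identified as commuting full matrix algebras with a cyclic vector, $H$ is automatically finite-dimensional and tensor-factored, so Tsirelson's argument applies verbatim and the unique nondegenerate model is precisely the finite-dimensional Clifford model $\wtd{S}$.
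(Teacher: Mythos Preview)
Your proposal is correct and follows essentially the same route as the paper: transfer extremality from $C_q^{unbiased}$ to $Cor(X,Y)$ via the affine isomorphism, then invoke Tsirelson's rigidity theorem \cite[Theorem~3.2]{Tsi87} to conclude that all nondegenerate commuting operator models for $p$ are unitarily equivalent, which by \Cref{thm:concretecost} is exactly the self-test statement. The only remark is that the ``bookkeeping'' you flag as genuinely new --- passing from a tensor-product to a commuting-operator setting --- is not actually needed: Tsirelson's 1987 paper already works in the commuting operator framework (he introduced it there), so his Theorem~3.2 applies directly to nondegenerate commuting operator models without any additional translation.
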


\begin{proof}
Since the isomorphism $C_q^{unbiased} \to Cor(X,Y)$ is linear, it preserves
extreme points.  If $p$ is an extreme point of $C_q^{unbiased}$, then the
associated XOR correlation $c$ is an extreme point in $Cor(X,Y)$. By
\cite[Theorem 3.2]{Tsi87}, if $c$ has even rank then all nondegenerate
commuting operator models for $p$ are unitarily equivalent. Hence, there is a
unique state on $\POVM^{X,\Z_2} \otimes_{max} \POVM^{Y,\Z_2}$ achieving $p$.
\end{proof}

\begin{example}
It is well known that the unique optimal correlation for the CHSH game is an extreme point of $C_q^{unbiased}$ with associated XOR correlation matrix
\begin{equation*}\begin{pmatrix} \frac{1}{\sqrt{2}} & \frac{1}{\sqrt{2}} \\ \frac{1}{\sqrt{2}} & \frac{-1}{\sqrt{2}}\end{pmatrix}.\end{equation*}
Since the associated XOR correlation has rank $2$, the optimal CHSH correlation is a commuting operator self-test.
\end{example}

A modern proof that the CHSH correlation is a commuting operator self-test can be found in \cite{Frei22}, where it is also shown that the Mermin-Peres magic square and magic pentagram games are commuting operator self-tests.

For more examples of commuting operator self-tests, let $\text{CHSH}_\alpha$, $\alpha\in[0,2)$ be the family of \emph{tilted} CHSH games introduced in \cite{AMP12}. It is well-known that there is a unique correlation $p_\alpha\in C_{q}(\Z_2,\Z_2,\Z_2,\Z_2)$ achieving the optimal winning probability of $\text{CHSH}_\alpha$, and that $p_\alpha$ is a self-test for projective quantum models \cite{BP15,CGS17}. This self-testing result is instrumental in the separations of the correlation sets $C_q\subsetneq C_{qs}\subsetneq C_{qa}$ in \cite{CS18,Col20,Bei21}. \cref{cor:binaryequiv} shows that $p_\alpha$ is a self-test for POVM quantum models as well.


\begin{proposition}
	If $p_\alpha\in C_{q}(\Z_2,\Z_2,\Z_2,\Z_2)$ is the unique optimal correlation for $\text{CHSH}_\alpha$, $\alpha\in [0,2)$, then $p_\alpha$ is a commuting operator self-test.
\end{proposition}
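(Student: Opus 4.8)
The plan is to deduce the statement from Tsirelson's theorem above. That theorem does not apply to $p_\alpha$ directly --- for $\alpha\neq 0$ the optimal tilted‑CHSH correlation is biased, so it is not an element of $C_q^{unbiased}$ --- so first I would pass to an associated unbiased correlation by a ``doubling'' construction. Recall that $p_\alpha$ is attained by the well‑known optimal two‑qubit strategy, in which Alice's and Bob's observables $A_0,A_1,B_0,B_1$ are $\pm 1$‑valued and the shared state $\ket{\psi_\theta}$ is a partially entangled two‑qubit state. In particular this model is full‑rank and projective, so $p_\alpha$ is an extreme point of $C_q(\Z_2,\Z_2,\Z_2,\Z_2)$ with a full‑rank projective model; combined with the fact (already noted above, via \cref{cor:binaryequiv}) that $p_\alpha$ is a self-test for POVM quantum models, and with \cref{cor:mainresult2}, this already shows $p_\alpha$ is an abstract state self-test for finite-dimensional states, and it remains only to rule out ``exotic'' infinite-dimensional commuting operator models --- equivalently, to show that $p_\alpha$ admits a unique abstract state on $\POVM^{\Z_2,\Z_2}\otimes_{max}\POVM^{\Z_2,\Z_2}$.

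To this end I would enlarge Bob's input set to $Y'=\{0,1,\star\}$ and let $\hat c\in Cor(\{0,1\},Y')$ be the XOR correlation matrix of $p_\alpha$ together with the extra column $\hat c_{x,\star}:=\langle A_x\rangle$ recording Alice's marginals (computed from $p_\alpha$ as $\sum_a(-1)^a\sum_b p_\alpha(a,b|x,0)$). Evaluating on the two‑qubit strategy, with $v_\star:=\ket{\psi_\theta}$ representing the trivial Bob observable $\Id$, realizes $\hat c$ by unit vectors, and a direct computation shows $\hat c$ has rank $2$. The step I expect to be the main obstacle is verifying that $\hat c$ is an \emph{extreme} point of $Cor(\{0,1\},Y')$ --- equivalently, that the unique unbiased correlation $p'$ with XOR matrix $\hat c$ is extreme in $C_q^{unbiased}(\{0,1\},Y',\Z_2,\Z_2)$. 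This is a rigidity statement about the unit‑vector realization above (all the relevant vectors have norm one), and should follow from Tsirelson's characterization of the extreme points of $Cor$. Granting it, Tsirelson's theorem applies to $p'$ (extreme, rank $2$), so $p'$ is a commuting operator self-test: there is a unique abstract state $g$ on $\POVM^{\{0,1\},\Z_2}\otimes_{max}\POVM^{Y',\Z_2}$ with correlation $p'$.

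Finally I would transfer this uniqueness back to $p_\alpha$. Given an arbitrary commuting operator model $S=(H,\{M^x_a\},\{N^y_b\},\ket{\psi})$ for $p_\alpha$, set $A_x:=M^x_0-M^x_1$, $B_y:=N^y_0-N^y_1$ and form the commuting operator model $\hat S$ on $H\otimes\C^2\otimes\C^2$ with state $\ket{\psi}\otimes\ket{\Phi^+}$ ($\ket{\Phi^+}$ the maximally entangled state on the two ancilla qubits) and observables $\hat A_x:=A_x\otimes\sigma_z\otimes\Id$, $\hat B_y:=B_y\otimes\Id\otimes\sigma_z$ for $x,y\in\{0,1\}$, and $\hat B_\star:=\Id_H\otimes\Id\otimes\sigma_z$. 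Using $\braket{\Phi^+|\sigma_z\otimes\Id|\Phi^+}=0$ and $\braket{\Phi^+|\sigma_z\otimes\sigma_z|\Phi^+}=1$ one checks that $\hat S$ is a valid commuting operator model whose correlation is $p'$; hence $f_{\hat S}=g$ for \emph{every} such $S$, since every model for $p'$ induces the abstract state of its nondegenerate submodel, which by \cref{prop:nd-cyclic} and \cref{rmk:samestate} equals $g$. But $f_{\hat S}$ determines $f_S$: for words $P=A_{x_1}\cdots A_{x_k}$ and $Q=B_{y_1}\cdots B_{y_l}$ with $x_i,y_j\in\{0,1\}$, the corresponding products in $\hat S$ are $P\otimes\sigma_z^k\otimes\Id$ and $Q\otimes\Id\otimes\sigma_z^l$, and appending at most one copy of $\hat B_\star$ to the second word makes the two $\sigma_z$‑exponents congruent mod $2$ without altering the $H$‑parts, so that $\braket{\psi|PQ|\psi}=f_S(P\otimes Q)$ is recovered as a value of $f_{\hat S}$ (using $\braket{\Phi^+|\sigma_z^k\otimes\sigma_z^l|\Phi^+}=1$ when $k\equiv l\pmod 2$). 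Since the monomial values $f_S(P\otimes Q)$ determine $f_S$ on $\POVM^{\Z_2,\Z_2}\otimes_{max}\POVM^{\Z_2,\Z_2}$ and $f_{\hat S}=g$ does not depend on $S$, the state $f_S$ does not depend on $S$; therefore $p_\alpha$ has a unique abstract state on $\POVM^{\Z_2,\Z_2}\otimes_{max}\POVM^{\Z_2,\Z_2}$, i.e.\ $p_\alpha$ is a commuting operator self-test.
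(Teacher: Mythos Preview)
Your reduction to Tsirelson's theorem is a nice idea, and the ancilla construction and the transfer step at the end are both correct. The fatal problem is exactly the step you flag as ``the main obstacle'': the extended XOR matrix $\hat c$ is \emph{not} an extreme point of $Cor(\{0,1\},Y')$, so Tsirelson's theorem cannot be invoked for $p'$.

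To see this concretely, realize $\hat c$ with $u_x=A_x\ket{\psi_\theta}$ and $v_y=B_y\ket{\psi_\theta}$, $v_\star=\ket{\psi_\theta}$, and project onto the two-dimensional span of $u_0,u_1$. Since $A_0=\sigma_z$, $A_1=\sigma_x$ and $\ket{\psi_\theta}$ is real, $u_0,u_1$ are orthonormal, and the projection of $v_\star$ has coordinates $(\langle A_0\rangle,\langle A_1\rangle)=(\cos 2\theta,0)$, hence norm $|\cos 2\theta|<1$ for every $\alpha\in[0,2)$. Now keep $u_0,u_1,v_0,v_1$ fixed and replace $v_\star$ by $v_\star^{\pm}:=(\cos 2\theta\pm\epsilon,\,0)$ for small $\epsilon>0$; all vectors still have norm at most $1$, so the resulting matrices $c_\pm$ lie in $Cor(\{0,1\},Y')$, they differ from $\hat c$ only in the $\star$-column, and $\hat c=\tfrac12(c_++c_-)$ with $c_+\neq c_-$. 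Thus $\hat c$ is not extreme, $p'$ is not an extreme point of $C_q^{unbiased}(\{0,1\},Y',\Z_2,\Z_2)$, and your appeal to Tsirelson's theorem fails. (For $\alpha=0$ the column is zero and the same perturbation works, so the construction does not even recover the untilted case.) The underlying reason is structural: in the Tsirelson vector picture the ``trivial'' Bob observable $\Id$ corresponds to $\ket{\psi_\theta}$ itself, which never lies in $\vspan\{A_0\ket{\psi_\theta},A_1\ket{\psi_\theta}\}$ with full norm, so adding it always introduces slack.

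By contrast, the paper proceeds directly via a sum-of-squares certificate: it extends the Bamps--Pironio decomposition $2\lambda(\lambda-\eta)=r_1^2+r_2^2=r_3^2+r_4^2$ from $\PVM\otimes_{max}\PVM$ to $\POVM\otimes_{max}\POVM$ by adding explicit positive correction terms, deduces that in the GNS representation of any optimal state the generators satisfy $A_i^2=B_j^2=\Id$ and $A_0A_1+A_1A_0=0$, shows the state is tracial on the Clifford algebra $C^*\langle A_0,A_1\rangle$, and uses uniqueness of that trace to pin down all monomial values. This argument does not pass through $Cor$ or any extremality hypothesis on an auxiliary correlation.
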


\begin{proof}
Fix $\alpha\in [0,2)$. Let $a_0:=m^0_0-m^0_1,a_1:=m^1_0-m^1_1,b_0:=n^0_0-n^0_1$ , and $b_1:=n^1_0-n^1_1$ in $\POVM^{\Z_2,\Z_2}\otimes_{max} \POVM^{\Z_2,\Z_2}$, and let $ \eta:=\alpha a_0+a_0b_0+a_0b_1+a_1b_0-a_1b_1$. A state $f$ on $\POVM^{\Z_2,\Z_2}\otimes_{max} \POVM^{\Z_2,\Z_2}$ achieves $p_\alpha$ if and only if $f(\eta)=\sqrt{8+2\alpha^2}=:\lambda$. Bamps and Pironio \cite{BP15} show that
\begin{equation*}
2\lambda(\lambda-\eta)=r_1^2+r_2^2=r_3^2+r_4^2
\end{equation*}
in $\PVM^{\Z_2,\Z_2}\otimes_{max} \PVM^{\Z_2,\Z_2}$, where
\begin{align*}
r_1&:=\lambda-\eta,\\
r_2&:=\alpha a_1-a_0b_0+a_0b_1-a_1b_0-a_1b_1,\\
r_3&:=\big(2a_0-\frac{\lambda}{2}(b_0+b_1)+\frac{\alpha}{2}(a_0b_0+a_0b_1-a_1b_0+a_1b_1)\big),\text{ and } \\
r_4&:=\big(2a_1-\frac{\lambda}{2}(b_0-b_1)+\frac{\alpha}{2}(a_0b_0-a_0b_1-a_1b_0-a_1b_1)\big).
\end{align*}
Moving to $\POVM^{\Z_2,\Z_2}\otimes_{max} \POVM^{\Z_2,\Z_2}$, we see that
\begin{align*}
    2\lambda(\lambda-\eta)
    =&\ r_1^2+r_2^2+\frac{1}{2}(s_1+s_2+s_3+s_4)+2(s_5+s_6)\\
    =&\ r_3^2+r_4^2+\frac{1}{2}(s_1+s_2+s_3+s_4)+2(2-a_0^2-a_1^2)(2-b_0^2-b_1^2)\\
    &+\frac{1}{2}\big(\lambda-\alpha(a_0-a_1) \big)^2(1-b_0^2)+\frac{1}{2}\big(\lambda-\alpha(a_0+a_1) \big)^2(1-b_1^2),
\end{align*}
where 
\begin{align*}
    s_1&:=(\alpha+2b_0)^2(1-a_0^2),\ s_2:=(\alpha+2b_1)^2(1-a_0^2),\\
    s_3&:=(\alpha-2b_0)^2(1-a_1^2),\ s_4:=(\alpha-2b_1)^2(1-a_1^2),\\
    s_5&:=(2+a_0a_1+a_1a_0)(1-b_0^2),\text{ and } s_6:=(2-a_0a_1-a_1a_0)(1-b_1^2).   
\end{align*}
Suppose $f$ is a state on $\POVM^{\Z_2,\Z_2}\otimes_{max} \POVM^{\Z_2,\Z_2}$ that achieves the optimal correlation $p_\alpha$. Then $f\left(2\lambda(\lambda-\eta)\right)=0$. Since $s_1,\ldots,s_6,\lambda-\alpha(a_0-a_1),\lambda-\alpha(a_0+a_1),1-b_0^2,1-b_1^2$, and $(2-a_0^2-a_1^2)(2-b_0^2-b_1^2)$ are positive in $\POVM^{\Z_2,\Z_2}\otimes_{max} \POVM^{\Z_2,\Z_2}$, and $\lambda-\alpha(a_0-a_1),\lambda-\alpha(a_0+a_1)$ commute with $1-b_0^2,1-b_1^2$, we obtain that $f(r_i^2)=f(s_j)=0$ for $1\leq i \leq 4$ and $1\leq j \leq 8$ where 
\begin{align*}
    s_7:=\big(\lambda-\alpha(a_0-a_1) \big)(1-b_0^2), \text{ and } s_8:=\big(\lambda-\alpha(a_0+a_1) \big)(1-b_1^2).
\end{align*}
Let $(\pi,H,\ket{\psi})$ be the GNS representation of $f$, and let $\mcL$ be the left ideal generated by $r_1,\ldots,r_4$ and $s_1,\ldots,s_8$. It follows that $\pi(l)\ket{\psi}=0$ for all $l\in\mcL$. Observe that
\begin{align*}
    &(4-\alpha^2)(a_0a_1+a_1a_0)
    =\big(b_0(r_3+r_4)-s_7+2\lambda\big)\big(b_0(r_3+r_4)-s_7\big)+s_1+s_3+4s_5,
\end{align*}
so $a_0a_1+a_1a_0\in\mcL$. This also implies
\begin{align*}
   1-b_0^2=\frac{1}{2}\big(s_5-(1-b_0^2)(a_0a_1+a_1a_0)\big)\in\mcL. 
\end{align*}
Likewise, one can verify that $1-a_0^2,1-a_1^2,$ and $1-b_1^2$ are in $\mcL$. Furthermore,
\begin{align*}
    \lambda a_0-2(b_0+b_1)=a_0r_1+a_1r_2-(b_0+b_1)\big((1-a_0^2)+(1-a_1^2)\big)+(b_0-b_1)(a_0a_1+a_1a_0)
\end{align*}
is in $\mcL$. Similarly, $\delta a_1-2(b_0-b_1)\in\mcL$ where $\delta:=\sqrt{8-2\alpha^2}$. Let $A_0:=\pi(a_0),A_1:=\pi(a_1),B_0:=\pi(b_0)$, and $B_1:=\pi(b_1)$. Then  \begin{align*}
(A_0A_1+A_1A_0)\ket{\psi}=&\ (\Id-A_0^2)\ket{\psi}=(\Id-A_1^2)\ket{\psi}=(\Id-B_0^2)\ket{\psi}=(\Id-B_1^2)\ket{\psi}\\=&\ \big(\lambda A_0-2(B_0+B_1)\big)\ket{\psi}= \big(\delta A_1-2(B_0-B_1)\big)\ket{\psi}=0.
\end{align*} 
For any monomial $a_{i_1}\cdots a_{i_k}b_{j_1}\cdots b_{j_l}$ in $\POVM^{\Z_2,\Z_2}\otimes_{max} \POVM^{\Z_2,\Z_2}$ we have
\begin{align}
    A_{i_1}\cdots A_{i_k}B_{j_1}\cdots B_{j_l}\ket{\psi} &=B_{j_1}\cdots B_{j_l}\frac{2(B_0+(-1)^{i_k}B_1)}{\sqrt{8+2(-1)^{i_k}\alpha^2}}\cdots\frac{2(B_0+(-1)^{i_1}B_1)}{\sqrt{8+2(-1)^{i_1}\alpha^2}}\ket{\psi}\label{eq:B}\\
    &=A_{i_1}\cdots A_{i_k}\frac{\lambda A_0+(-1)^{j_l}\delta A_1}{4}\cdots \frac{\lambda A_0+(-1)^{j_1}\delta A_1}{4}\ket{\psi}.\label{eq:A}
\end{align}
Hence by \cref{eq:B}, $(\Id-A_0^2)A_{i_1}\cdots A_{i_k}B_{j_1}\cdots B_{j_l}\ket{\psi}=0$, and
since $\ket{\psi}$ is cyclic for $\pi$, we conclude that $A_0^2=\Id$. Similarly, we have $A_1^2=B_0^2=B_1^2=\Id$ and $A_0A_1+A_1A_0=0$. Thus, the $C^*$-algebra $C^*\ang{A_0,A_1}$ generated by $A_0$ and $A_1$ is the Clifford algebra of rank $2$ which has a unique tracial state. Given $w_1:=A_{i_1}\cdots A_{i_k}$ and $w_{2}:=A_{j_1}\cdots A_{j_l}$,
\begin{align*}
    \bra{\psi}w_1w_2\ket{\psi}&=\bra{\psi}A_{i_1}\cdots A_{i_k}\frac{2(B_0+(-1)^{j_l}B_1)}{\sqrt{8+2(-1)^{j_l}\alpha^2}}\cdots\frac{2(B_0+(-1)^{j_1}B_1)}{\sqrt{8+2(-1)^{j_1}\alpha^2}} \ket{\psi}\\
    &= \bra{\psi}\frac{2(B_0+(-1)^{j_l}B_1)}{\sqrt{8+2(-1)^{j_l}\alpha^2}}\cdots\frac{2(B_0+(-1)^{j_1}B_1)}{\sqrt{8+2(-1)^{j_1}\alpha^2}}A_{i_1}\cdots A_{i_k}\ket{\psi}\\
    &= \bra{\psi}A_{j_1}\cdots A_{j_l}A_{i_1}\cdots A_{i_k}\ket{\psi}=\bra{\psi}w_2w_1\ket{\psi},
\end{align*}
so $w\mapsto\bra{\psi}w\ket{\psi}$ must be the unique tracial state on $C^*\ang{A_0,A_1}$.
By \cref{eq:A}, there is a unique state on $\POVM^{\Z_2,\Z_2}\otimes_{max} \POVM^{\Z_2,\Z_2} $ achieving $p_\alpha$, so $p_\alpha$ is a commuting operator self-test.
\end{proof}

 We leave it as an open problem to show that there are commuting operator self-tests which do not have finite-dimensional models.

\medskip

\newcommand{\etalchar}[1]{$^{#1}$}

\end{document}